\newtheorem{theorem}{Theorem}[section]
\newtheorem{proposition}[theorem]{Proposition}
\newtheorem{definition}[theorem]{Definition}
\newtheorem{claim}[theorem]{Claim}
\newtheorem{lemma}[theorem]{Lemma}
\newtheorem{corollary}[theorem]{Corollary}
\newtheorem{remark}[theorem]{Remark}
\newcommand{\qedsymb}{\hfill{\rule{2mm}{2mm}}}
\renewenvironment{proof}[1][]{\begin{trivlist}
\item[\hspace{\labelsep}{\bf\noindent Proof#1:\/}] }{\qedsymb\end{trivlist}}
\def\calG{{\cal G}}
\def\calW{{\cal W}}
\def\R{\mathbb{R}}
\def\N{\mathbb{N}}
\newcommand{\rectbinom}[2]{\genfrac{[}{]}{0pt}{}{#1}{#2}}
\newcommand{\PSPACE}{\mathsf{PSPACE}}
\newcommand{\NP}{\mathsf{NP}}
\renewcommand{\P}{\mathsf{P}}
\newcommand{\coNPpoly}{\mathsf{coNP/poly}}
\newcommand{\YES}{\textsc{YES}}
\newcommand{\NO}{\textsc{NO}}
\newcommand{\od}{\overline{\xi}}
\newcommand{\KG}{\mathsf{K}}
\newcommand{\eps}{\epsilon}
\renewcommand{\epsilon}{\varepsilon}
\newcommand{\linspan}{\mathop{\mathrm{span}}}
\newcommand{\Fset}{\mathbb{F}}         
\newcommand{\SAT}{\textsc{SAT}}
\newcommand{\Col}{\textsc{Coloring}}
\newcommand{\qCol}{\textsc{-Coloring}}
\newcommand{\qListCol}{\textsc{-List Coloring}}
\newcommand{\SubChoose}{\textsc{Subspace Choosability}}
\newcommand{\dSubChoose}{\textsc{-Subspace Choosability}}
\newcommand{\ODP}{\textsc{Ortho-Dim}}
\newcommand{\dODP}{\textsc{-Ortho-Dim}}
\newcommand{\Empty}{\textsc{Empty}}
\newcommand{\Path}{\textsc{Path}}
\newcommand{\Split}{\textsc{Split}}
\newcommand{\USplit}{\cup \textsc{Split}}
\newcommand{\Cochordal}{\textsc{Cochordal}}
\newcommand{\UCochordal}{\cup \textsc{Cochordal}}
\begin{document}

\title{{\bf Kernelization for Orthogonality Dimension}}

\author{
Ishay Haviv\thanks{School of Computer Science, The Academic College of Tel Aviv-Yaffo, Tel Aviv 61083, Israel. Research supported by the Israel Science Foundation (grant No.~1218/20).}
\and
Dror Rabinovich\footnotemark[1]
}

\date{}

\maketitle

\begin{abstract}
The orthogonality dimension of a graph over $\R$ is the smallest integer $d$ for which one can assign to every vertex a nonzero vector in $\R^d$ such that every two adjacent vertices receive orthogonal vectors. For an integer $d$, the $d\dODP_\R$ problem asks to decide whether the orthogonality dimension of a given graph over $\R$ is at most $d$.
We prove that for every integer $d \geq 3$, the $d\dODP_\R$ problem parameterized by the vertex cover number $k$ admits a kernel with $O(k^{d-1})$ vertices and bit-size $O(k^{d-1} \cdot \log k)$. We complement this result by a nearly matching lower bound, showing that for any $\eps > 0$, the problem admits no kernel of bit-size $O(k^{d-1-\eps})$ unless $\NP \subseteq \coNPpoly$. We further study the kernelizability of orthogonality dimension problems in additional settings, including over general fields and under various structural parameterizations.
\end{abstract}

\section{Introduction}

For a field $\Fset$ and an integer $d$, a $d$-dimensional orthogonal representation of a graph $G=(V,E)$ over $\Fset$ is an assignment of a vector $u_v \in \mathbb{F}^d$ with $\langle u_v,u_v \rangle \neq 0$ to each vertex $v \in V$, such that for every two adjacent vertices $v$ and $v'$ in $G$, it holds that $\langle u_{v} , u_{v'} \rangle = 0$.
We consider here the standard inner product, defined for any two vectors $x,y \in \mathbb{F}^d$ by $\langle x , y \rangle = \sum_{i=1}^{d}{x_i \cdot y_i}$ with operations performed over the field $\Fset$.
The orthogonality dimension of a graph $G$ over $\Fset$, denoted by $\od_\Fset(G)$, is the smallest integer $d$ for which $G$ admits a $d$-dimensional orthogonal representation over $\Fset$ (see Definition~\ref{def:OD} and Remark~\ref{remark:OD}).

The notion of orthogonal representations over the real field $\R$ was introduced in 1979 by Lov\'asz~\cite{Lovasz79}, who used them to define the celebrated $\vartheta$-function that was motivated by questions in information theory on the Shannon capacity of graphs.
Over the years, orthogonal representations and the orthogonality dimension have found a variety of applications in several areas of research.
In graph theory, orthogonal representations over the reals were used by Lov\'asz, Saks, and Schrijver~\cite{LovaszSS89} to characterize connectivity properties of graphs (see also~\cite[Chapter~10]{LovaszBook}).
In computational complexity, the orthogonality dimension over finite fields was related to lower bounds in circuit complexity by Codenotti, Pudl{\'{a}}k, and Resta~\cite{CodenottiPR00} (see also~\cite{GolovnevRW17,GolovnevH20}). Over the complex field, it was employed by de Wolf~\cite{deWolfThesis} to determine the quantum one-round communication complexity of promise equality problems (see also~\cite{CameronMNSW07,BrietBLPS15,BrietZ17}).
Additional notable applications from the area of information theory are related to index coding~\cite{BirkKol98,BBJK06}, distributed storage~\cite{BargZ22}, and hat-guessing games~\cite{Riis07}.

The question of the complexity of determining the orthogonality dimension of a given graph over a specified field was proposed in 1989 by Lov\'asz et al.~\cite{LovaszSS89}.
For a field $\Fset$ and an integer $d$, consider the decision problem that given a graph $G$ asks to decide whether $\od_\Fset(G) \leq d$.
It is easy to see that the problem can be solved efficiently for $d \in \{1,2\}$, because a graph $G$ satisfies $\od_\Fset(G) \leq 1$ if and only if it is edgeless, and it satisfies $\od_\Fset(G) \leq 2$ if and only if it is bipartite. For every $d \geq 3$, however, it was shown by Peeters~\cite{Peeters96} in 1996 that the problem is $\NP$-hard for every field $\Fset$. More recently, it was shown in~\cite{ChawinH23} that for every sufficiently large integer $d$, it is $\NP$-hard to distinguish graphs $G$ that satisfy $\od_\Fset(G) \leq d$ from those satisfying $\od_\Fset(G) \geq 2^{(1-o(1)) \cdot d/2}$, provided that $\Fset$ is either a finite field or $\R$.

Motivated by the diverse applications of orthogonality dimension, the present paper delves into the computational complexity of this graph quantity from the perspective of parameterized complexity.
We study the decision problems associated with orthogonality dimension with respect to various structural parameterizations, with a particular attention dedicated to the vertex cover number parameterization. We exhibit fixed-parameter tractability results for such problems, along with upper and lower bounds on their kernelizability.
Our approach draws its inspiration from prior work on the parameterized complexity of coloring problems, most notably by Jansen and Kratsch~\cite{JansenK13} (see also~\cite[Chapter~7]{JansenThesis}) and by Jansen and Pieterse~\cite{JansenP19color}.
In what follows, we provide an overview on relevant research on coloring problems and then turn to a description of our contribution.
We use here standard notions from the area of parameterized complexity, whose definitions can be found in Section~\ref{sec:parameter} (see also, e.g.,~\cite{CyganFKLMPPS15,KernelBook19}).

Graph coloring is a cornerstone concept in graph theory that has been extensively studied from a computational point of view.
For an integer $q$, a $q$-coloring of a graph $G$ is an assignment of a color to each vertex of $G$ from a set of $q$ colors.
The coloring is said to be proper if it assigns distinct colors to every two adjacent vertices in the graph. A graph $G$ is called $q$-colorable if it admits a proper $q$-coloring, and the smallest integer $q$ for which $G$ is $q$-colorable is called the chromatic number of $G$ and is denoted by $\chi(G)$.
For an integer $q$, let $q\qCol$ denote the decision problem that given a graph $G$ asks to decide whether $\chi(G) \leq q$.
The $\Col$ problem is defined similarly, with the only, yet crucial, difference that the number of colors $q$ is not fixed but forms part of the input.
It is well known that the $q\qCol$ problem can be solved in polynomial time for $q \in \{1,2\}$ and is $\NP$-complete for every $q \geq 3$.
This implies that the $\Col$ problem, parameterized by the number of colors $q$, is not fixed-parameter tractable unless $\P = \NP$.

The study of the parameterized complexity of coloring problems was initiated in 2003 by Cai~\cite{Cai03}, who proposed the following terminology.
For a family of graphs $\calG$ and for an integer $k$, let $\calG+k\mathrm{v}$ denote the family of all graphs that can be obtained from a graph of $\calG$ by adding at most $k$ vertices (with arbitrary neighborhoods). Equivalently, a graph $G=(V,E)$ lies in $\calG+k\mathrm{v}$ if there exists a set $X \subseteq V$ of size $|X| \leq k$, referred to as a modulator, such that the graph $G \setminus X$ obtained from $G$ by removing the vertices of $X$ lies in $\calG$.
For example, letting $\Empty$ denote the family of all edgeless graphs, the family $\Empty+k\mathrm{v}$ consists of all the graphs that admit a vertex cover of size at most $k$.
For an integer $q$, the $q\qCol$ problem on $\calG+k\mathrm{v}$ graphs is the parameterized problem defined as follows.
\begin{enumerate}
  \setlength{\itemsep}{0em} 
  \setlength{\parsep}{0em}  
  \item[] Input: A graph $G=(V,E)$ and a set $X \subseteq V$ such that $G \setminus X \in \calG$.
  \item[] Question: Is $\chi(G) \leq q$?
  \item[] Parameter: The size $k=|X|$ of the modulator $X$.
\end{enumerate}
As before, the $\Col$ problem on $\calG+k\mathrm{v}$ graphs is defined similarly, except that the number of colors $q$ is not fixed but forms part of the input.

A common parameterization of graph problems that received a considerable amount of attention in the literature is that of the vertex cover number, corresponding to the family $\calG = \Empty$ (see, e.g.,~\cite{FominJP14}).
It is well known that the $\Col$ problem on $\Empty+k\mathrm{v}$ graphs is fixed-parameter tractable.
Nevertheless, Bodlaender, Jansen, and Kratsch~\cite{BodlaenderJK14} proved that the problem does not admit a kernel of polynomial size under the assumption $\NP \nsubseteq \coNPpoly$, whose refutation is known to imply the collapse of the polynomial-time hierarchy~\cite{Yap83}.
Yet, for any fixed integer $q \geq 3$, Jansen and Kratsch~\cite{JansenK13} showed that the $q\qCol$ problem on $\Empty+k\mathrm{v}$ graphs admits a kernel with $O(k^q)$ vertices which can be encoded in $O(k^q)$ bits (see also~\cite{FominJP14}).
This result was improved by Jansen and Pieterse~\cite{JansenP19color} as an application of an algebraic sparsification technique they introduced in~\cite{JansenP19sparse}. It was shown in~\cite{JansenP19color} that for every $q \geq 3$, the $q\qCol$ problem on $\Empty+k\mathrm{v}$ graphs admits a kernel with $O(k^{q-1})$ vertices and bit-size $O(k^{q-1} \cdot \log k)$ (see~\cite{JansenP19color} for various generalizations).
On the contrary, it was shown in~\cite{JansenK13,JansenP19color} that for every $q \geq 3$ and any $\eps > 0$, the problem does not admit a kernel that can be encoded in $O(k^{q-1-\eps})$ bits unless $\NP \subseteq \coNPpoly$, thereby settling its kernelization complexity up to a multiplicative $k^{o(1)}$ term.

The paper~\cite{JansenK13} further studied the kernelization complexity of the $q\qCol$ problem on $\calG+k\mathrm{v}$ graphs for general families $\calG$. In particular, they considered graph families $\calG$ that are hereditary (i.e., closed under removal of vertices) and that are, roughly speaking, local with respect to the $q\qListCol$ problem, in the sense that a $\NO$ instance of $q\qListCol$ that involves a graph from $\calG$ must have a $\NO$ sub-instance whose size depends solely on $q$. For such families $\calG$, it was shown in~\cite{JansenK13} that the $q\qCol$ problem on $\calG+k\mathrm{v}$ graphs admits a kernel of polynomial size, and this result was complemented with a lower bound on the kernel size relying on the assumption $\NP \nsubseteq \coNPpoly$.
These results apply, for example, for the families $\USplit$ and $\UCochordal$ of the graphs whose connected components are split graphs and cochordal graphs respectively (see Definitions~\ref{def:split} and~\ref{def:cochordal}).
On the other hand, strengthening a result of Bodlaender et al.~\cite{BodlaenderDFH09}, the authors of~\cite{JansenK13} proved that the $3\qCol$ problem on $\Path+k\mathrm{v}$ graphs does not admit a kernel of polynomial size unless $\NP \subseteq \coNPpoly$, where $\Path$ stands for the family of path graphs.

\subsection{Our Contribution}

This paper initiates a systematic study of the parameterized complexity of the orthogonality dimension of graphs.
It is noteworthy that the orthogonality dimension of graphs is related to their chromatic number.
Indeed, for every field $\Fset$ and for every graph $G$, it holds that $\od_\Fset(G) \leq \chi(G)$, because a proper $q$-coloring of $G$ may be viewed as a $q$-dimensional orthogonal representation of $G$ over $\Fset$ that uses only vectors from the standard basis of $\Fset^q$ (see Claim~\ref{claim:OD<=CHI}).
Yet, it turns out that the two graph quantities can differ substantially, as there exist graphs where the orthogonality dimension is exponentially smaller than the chromatic number (see, e.g.,~\cite[Proposition~2.2]{HavivMFCS19}).
Our investigation of the parameterized complexity of orthogonality dimension problems aligns with the approach of~\cite{JansenK13,JansenP19color} for studying coloring problems within the parameterized complexity framework. While coloring problems are primarily combinatorial in nature, the attempt to prove analogous results for orthogonality dimension raises intriguing questions reflecting the algebraic aspects of this graph quantity.

We first introduce the decision problems associated with orthogonality dimension.
\begin{definition}\label{def:OD_P}
For a field $\Fset$, the $\ODP_\Fset$ problem is defined as follows.
\begin{enumerate}
  \setlength{\itemsep}{0em} 
  \setlength{\parsep}{0em}  
  \item[] Input: A graph $G=(V,E)$ and an integer $d$.
  \item[] Question: Is $\od_\Fset(G) \leq d$?
\end{enumerate}
For a field $\Fset$ and a family of graphs $\calG$, the (parameterized) $\ODP_\Fset$ problem on $\calG+k\mathrm{v}$ graphs is defined as follows.
\begin{enumerate}
  \setlength{\itemsep}{0em} 
  \setlength{\parsep}{0em}  
  \item[] Input: A graph $G=(V,E)$, a set $X \subseteq V$ such that $G \setminus X \in \calG$, and an integer $d$.
  \item[] Question: Is $\od_\Fset(G) \leq d$?
  \item[] Parameter: The size $k=|X|$ of the modulator $X$.
\end{enumerate}
For a field $\Fset$, an integer $d$, and a family of graphs $\calG$, the (parameterized) $d\dODP_\Fset$ problem on $\calG+k\mathrm{v}$ graphs is defined as follows.
\begin{enumerate}
  \setlength{\itemsep}{0em} 
  \setlength{\parsep}{0em}  
  \item[] Input: A graph $G=(V,E)$ and a set $X \subseteq V$ such that $G \setminus X \in \calG$.
  \item[] Question: Is $\od_\Fset(G) \leq d$?
  \item[] Parameter: The size $k=|X|$ of the modulator $X$.
\end{enumerate}
\end{definition}
\noindent
Let us stress that the integer $d$ forms part of the input in the $\ODP_\Fset$ problem, whereas it is a fixed constant in the $d\dODP_\Fset$ problem.
Note that the hardness result of~\cite{Peeters96} implies that for every field $\Fset$, the $\ODP_\Fset$ problem parameterized by the solution value $d$ is not fixed-parameter tractable unless $\P=\NP$.

The main parameterization we consider is the vertex cover number of the input graph, which corresponds to the family $\calG = \Empty$.
We start with the following fixed-parameter tractability result.

\begin{theorem}\label{thm:IntroFPT}
Let $\Fset$ be either a finite field or $\R$.
The $\ODP_\Fset$ problem on $\Empty+k\mathrm{v}$ graphs is fixed-parameter tractable.
\end{theorem}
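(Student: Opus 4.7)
The plan is to compress the instance to an equivalent one whose size is bounded by a function of $k$, and then to invoke on the compressed instance an algorithm whose running time depends only on $k$.

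First, I would use the input vertex cover $X$ to bound the solution value. Since every vertex outside $X$ lies in the independent set $I := V \setminus X$, the graph $G$ admits a proper coloring with at most $k+1$ colors (assign each vertex of $X$ its own color and use one additional color for $I$), hence $\od_\Fset(G) \leq \chi(G) \leq k+1$ by Claim~\ref{claim:OD<=CHI}. If the input integer $d$ satisfies $d \geq k+1$ we output \YES\ immediately, so we may assume $d \leq k$.

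Next comes the key compression step. Call two vertices $u,v \in I$ \emph{twins} if $N(u) = N(v)$; this equivalence relation has at most $2^k$ classes. Let $G'$ be the induced subgraph of $G$ on $X$ together with a single representative from each twin class, so $|V(G')| \leq k + 2^k$. I claim $\od_\Fset(G) = \od_\Fset(G')$. One direction is immediate since $G'$ is an induced subgraph of $G$. For the converse, given any $d$-dimensional orthogonal representation of $G'$, one extends it to $G$ by assigning each vertex $v \in I$ the vector given to its twin representative; since $I$ is independent no new orthogonality constraint is created, and the condition $\langle u_v, u_v \rangle \neq 0$ is preserved.

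Finally, it remains to decide whether the compressed graph $G'$ admits a $d$-dimensional orthogonal representation over $\Fset$. For a finite field $\Fset$ this is handled by brute force: enumerate all $|\Fset|^{d \cdot |V(G')|}$ labelings of $V(G')$ by vectors of $\Fset^d$ and verify each in polynomial time. For $\Fset = \R$, I would express the existence of such a representation as a sentence in the existential theory of real closed fields, with $d \cdot |V(G')|$ variables and $O(|V(G')|^2)$ quadratic (in)equalities of the form $\langle x_u,x_v\rangle = 0$ for each edge $\{u,v\}$ of $G'$ and $\langle x_v, x_v\rangle \neq 0$ for each vertex $v$; this can be decided by Renegar's algorithm in time depending only on $k$. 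The main obstacle will be the real case, where the field is infinite and brute-forcing is unavailable; the resolution is to carry out the twin compression first, so that the resulting instance has $f(k)$ vertices, making it feasible to invoke a decision procedure for the existential theory of the reals at a cost depending only on $k$. The overall running time is $f(k) \cdot \poly(n)$, as required.
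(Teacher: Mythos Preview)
Your proposal is correct, and it takes a route that differs from the paper's in the compression step. The paper proves the finite-field case (Theorem~\ref{thm:FPT_finite}) without shrinking the graph at all: it enumerates the at most $|\Fset|^{dk}\le|\Fset|^{k^2}$ assignments of vectors to $X$, and for each one checks in polynomial time (via Lemma~\ref{lemma:poly}) whether every vertex of $I$ can be extended. For general fields (Theorem~\ref{thm:FPT_gen}) it builds the auxiliary graph $\KG(G,X,1,d)$ of Definition~\ref{def:K}, which keeps one vertex per subset $S\subseteq X$ of size at most $d$ that occurs as a sub-neighborhood of some $v\in I$, giving $O(k^d)\le O(k^k)$ vertices; decidability of $\ODP_\R$ is then obtained, as you do, via the existential theory of the reals (the paper cites Tarski and Canny rather than Renegar).

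Your twin-reduction is a cleaner, more elementary compression that handles both fields uniformly and requires no new lemma. The trade-offs: the paper's finite-field argument yields a much better parameter dependence ($|\Fset|^{k^2}$ versus your $|\Fset|^{k(k+2^k)}$), and its $\KG$ construction is reused verbatim to obtain the polynomial kernels of Section~\ref{sec:KernelVC}, whereas the twin graph, having up to $2^k$ vertices, cannot serve that purpose. For the bare FPT statement, however, your argument is entirely adequate.
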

\noindent
In fact, we prove an extension of Theorem~\ref{thm:IntroFPT}, showing that if the $\ODP_\Fset$ problem over a field $\Fset$ is decidable, then the corresponding $\ODP_\Fset$ problem on $\Empty+k\mathrm{v}$ graphs is fixed-parameter tractable (see Theorem~\ref{thm:FPT_gen}). While it is easy to see that the $\ODP_\Fset$ problem is decidable for any finite field $\Fset$, the real case relies on a result of Tarski~\cite{Tarski51} on the decidability of the existential theory of the reals (see Proposition~\ref{prop:OD_R_PSPACE}).

We next consider the kernelizability of the $d\dODP_\Fset$ problem parameterized by the vertex cover number for a fixed integer $d$.
For finite fields $\Fset$, one may deduce from a result of~\cite{JansenP19color} that the problem admits a kernel of polynomial size, where the degree of the polynomial grows exponentially with $d$. We prove the following generalized and stronger result.

\begin{theorem}\label{thm:IntroUpperF}
For every field $\Fset$ and for every integer $d \geq 3$, the $d\dODP_\Fset$ problem on $\Empty+k\mathrm{v}$ graphs admits a kernel with $O(k^d)$ vertices and bit-size $O(k^d)$.
\end{theorem}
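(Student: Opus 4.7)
The approach follows the classical template for kernelizing vertex-cover-parameterized problems, adapted to the algebraic nature of orthogonality dimension. Writing $I := V \setminus X$ for the independent set outside the modulator, the plan is to replace each $v \in I$ by a canonical representative determined only by $N(v) \subseteq X$, and to encode the constraint imposed by a vertex of large neighborhood through the collection of its size-$d$ subsets.

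The algebraic core of the reduction is the following lemma: for any vectors $u_1, \ldots, u_s \in \Fset^d$ with $s \geq d$, the subspace $\{u_1, \ldots, u_s\}^\perp$ contains a non-isotropic vector if and only if, for every $d$-element subset $T \subseteq [s]$, the subspace $\{u_i : i \in T\}^\perp$ contains a non-isotropic vector. The forward direction is immediate since $\{u_1, \ldots, u_s\}^\perp \subseteq \{u_i : i \in T\}^\perp$ whenever $T \subseteq [s]$. For the reverse direction, if the vectors $u_1, \ldots, u_s$ had rank $d$, then some $d$-subset $T$ would span $\Fset^d$ and thus satisfy $\{u_i : i \in T\}^\perp = \{0\}$, contradicting the hypothesis; hence the rank is at most $d-1$, and choosing $T$ to be any $d$-subset of $[s]$ containing a basis of $\linspan(u_1, \ldots, u_s)$ yields $\{u_i : i \in T\}^\perp = \{u_1, \ldots, u_s\}^\perp$, giving the claim.

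Given the input $(G, X)$ with $|X| = k$, the kernel $G'$ will have vertex set $X \cup \{v_S : S \in \calS\}$, where $\calS$ collects those subsets $S \subseteq X$ with $|S| \leq d$ such that either $|S| < d$ and $S = N_G(v)$ for some $v \in I$, or $|S| = d$ and $S \subseteq N_G(v)$ for some $v \in I$ with $|N_G(v)| \geq d$. The edges of $G'$ are those of $G[X]$ together with an edge from $v_S$ to each vertex in $S$, for every $S \in \calS$. Computing $\calS$ from $(G, X)$ is polynomial-time, and $X$ remains a modulator of size $k$ in $G'$. For the forward direction of the equivalence, an orthogonal representation of $G$ yields one for $G'$ by keeping the vectors on $X$ and, for each $v_S$, reusing the vector assigned to any $v \in I$ whose neighborhood contains $S$ (which is non-isotropic and lies in $N_G(v)^\perp \subseteq S^\perp$). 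Conversely, a $d$-dimensional orthogonal representation of $G'$ extends to $G$ by copying, for $v \in I$ with $|N_G(v)| \leq d$, the vector assigned to $v_{N_G(v)}$, and by applying the key lemma, for $v \in I$ with $|N_G(v)| > d$, to the vectors $\{u_w : w \in N_G(v)\}$ to obtain a non-isotropic vector in $N_G(v)^\perp$; the hypothesis of the lemma is supplied by the vectors assigned to the representatives $v_T$ for all size-$d$ subsets $T \subseteq N_G(v)$, each of which belongs to $\calS$.

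The kernel has $k + |\calS| \leq k + \sum_{s=0}^{d} \binom{k}{s} = O(k^d)$ vertices. For the bit-size bound, the kernel is encoded compactly by the $\binom{k}{2}$-bit adjacency matrix of $G[X]$ together with a bitmap over subsets of $X$ of size at most $d$ indicating membership in $\calS$, using $\sum_{s=0}^{d} \binom{k}{s} = O(k^d)$ bits in total. The principal obstacle is establishing the key lemma, as it is the step where the algebra of non-isotropic vectors---a condition that, unlike linear (in)dependence, is sensitive to the underlying field $\Fset$---must be controlled through size-$d$ sub-sampling of the neighborhood; once the lemma is in hand, the kernel construction and its correctness are largely routine.
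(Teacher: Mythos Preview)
Your proposal is correct and follows essentially the same approach as the paper, which constructs the kernel graph $\KG(G,X,1,d)$ (Definition~\ref{def:K}) by retaining $G[X]$ and adding a vertex $v_S$ for each $S\subseteq X$ with $1\le|S|\le d$ that is contained in some $N_G(v)$, $v\notin X$; your key lemma is the paper's basis argument (Item~\ref{itm:K4} of Lemma~\ref{lemma:K}) in slightly different clothing. The only real difference is that for $|S|<d$ you keep just the exact neighborhoods rather than all sub-neighborhoods, which is marginally tidier but asymptotically irrelevant.
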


Theorem~\ref{thm:IntroUpperF} prompts us to determine the smallest possible kernel size for the $d\dODP_\Fset$ problem on $\Empty+k\mathrm{v}$ graphs.
The following result furnishes a lower bound, conditioned on the complexity-theoretic assumption $\NP \nsubseteq \coNPpoly$.
\begin{theorem}\label{thm:IntroLower}
For every field $\Fset$, every integer $d \geq 3$, and any real $\eps>0$, the $d\dODP_\Fset$ problem on $\Empty+k\mathrm{v}$ graphs does not admit a kernel with bit-size $O(k^{d-1-\eps})$ unless $\NP \subseteq \coNPpoly$.
\end{theorem}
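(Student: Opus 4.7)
My plan is to apply the standard weak cross-composition framework of Bodlaender, Jansen, and Kratsch: if an $\NP$-hard language $L$ admits a weak $d$-cross-composition into a parameterized problem $Q$, then $Q$ has no compression of bit-size $O(k^{d-\eps})$ unless $\NP \subseteq \coNPpoly$. To obtain the desired $O(k^{d-1-\eps})$ lower bound for the $d\dODP_\Fset$ problem on $\Empty+k\mathrm{v}$ graphs, I would construct a weak $(d-1)$-cross-composition into this parameterized problem from the base problem $d\dODP_\Fset$ on general graphs, whose $\NP$-hardness for $d \geq 3$ over every field $\Fset$ is due to Peeters~\cite{Peeters96}. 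This approach mirrors the strategy of~\cite{JansenK13,JansenP19color} for the analogous lower bound for $q\qCol$ on $\Empty+k\mathrm{v}$ graphs.

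Given $t$ instances $G_1,\ldots,G_t$ of $d\dODP_\Fset$ on $n$ vertices each, I would set $N=\lceil t^{1/(d-1)}\rceil$ and identify the $t$ instances with tuples $(a_1,\ldots,a_{d-1}) \in [N]^{d-1}$. The composed graph $G$ would consist of a modulator $X$ of size $|X| \leq N \cdot \poly(n) \leq t^{1/(d-1)} \cdot \poly(n)$ together with an independent set of ``instance-copy'' vertices. The modulator decomposes as $X = X_0 \cup S_1 \cup \cdots \cup S_{d-1}$, where each selector block $S_j$ of size $O(N)$ is attached to a rigidity gadget within $X_0$ so that in any $d$-dimensional orthogonal representation, the vectors assigned to $S_j$ encode a single coordinate $a_j^* \in [N]$ of a ``selected'' index tuple. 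The copies $V(G_1),\ldots,V(G_t)$ are placed outside $X$ as an independent set, and the edges of each $G_i$ are encoded indirectly through attachments to the selectors so that in every $d$-dimensional orthogonal representation of $G$, the vectors on $V(G_{(a_1^*,\ldots,a_{d-1}^*)})$ are forced to form an orthogonal representation of that input graph, while the vectors on every unselected copy are permitted to collapse to a single shared nonzero vector.

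The correctness argument splits into two directions. For the easy direction, given a $d$-dimensional orthogonal representation of some $G_i$, I would extend it to $G$ by assigning selector vectors encoding the tuple $i$, planting the representation on the $i$-th copy, and using a common vector on all other copies. The hard direction, extracting a valid $i$ from a $d$-dimensional orthogonal representation of $G$, is the main obstacle, because orthogonal representations over $\R^d$ have continuous degrees of freedom, unlike the discrete color sets used in the analogous coloring construction. The crucial step is to design the rigidity gadget so that its $d$-dimensional orthogonal representations are essentially unique up to an obvious symmetry group, forcing the vectors assigned to each $S_j$ to lie on one of $N$ prescribed rays; once the selection is pinned down in this way, standard orthogonality arguments unambiguously decode the active instance. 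With the weak $(d-1)$-cross-composition established, the stated lower bound follows directly from the transfer theorem.
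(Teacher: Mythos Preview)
Your proposal outlines a direct weak $(d-1)$-cross-composition from $d\dODP_\Fset$ into the parameterized problem, mirroring the selector machinery of~\cite{JansenK13,JansenP19color} for coloring. The paper does \emph{not} do this. Instead it treats the coloring lower bound (Theorem~\ref{thm:NoKernelCol}) as a black box and constructs a \emph{linear-parameter transformation} from $d\qCol$ on $\Empty+k\mathrm{v}$ graphs to $d\dODP_\Fset$ on $\Empty+k\mathrm{v}$ graphs (Theorem~\ref{thm:reductionColOD}); composing such a transformation with any hypothetical $O(k^{d-1-\eps})$-bit compression for $d\dODP_\Fset$ would yield one for $d\qCol$, which is impossible unless $\NP \subseteq \coNPpoly$. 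The single new technical ingredient is a fixed-size gadget (Lemma~\ref{lemma:gadget}): the complement $\overline{C_{2d}}$ of the $2d$-cycle, which in any $d$-dimensional orthogonal representation over any field forces two designated vertices to receive vectors that are either orthogonal or proportional. Attaching one such gadget between each vertex of the modulator $X$ and each vertex of an auxiliary $d$-clique $z_1,\ldots,z_d$ pins every modulator vector onto one of the $d$ rays $\linspan(u_{z_i})$; from that point the argument is purely combinatorial and uniform in $\Fset$.

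Your plan leaves precisely this rigidity step as an assumption. You write that ``the crucial step is to design the rigidity gadget so that \ldots\ the vectors assigned to each $S_j$ lie on one of $N$ prescribed rays,'' but you give no construction, and this is the whole difficulty: over $\R$ (or any infinite field) orthogonal representations carry continuous freedom, and a gadget of size $O(N)$ that pins vectors to $N$ prescribed rays is not at all obvious. The paper's gadget pins to $d$ rays, not $N$, and that is what makes the transformation-from-coloring route work while avoiding the need to rebuild selector blocks in the orthogonal setting. Your edge-encoding step (``edges of each $G_i$ are encoded indirectly through attachments to the selectors'') is likewise unspecified; in the coloring cross-composition this relies on list-coloring gadgets with no evident orthogonal analogue. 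As written, your proposal is a plausible programme whose hardest two pieces are missing; the paper sidesteps both by reducing from coloring rather than redoing the cross-composition.
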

\noindent
The proof of Theorem~\ref{thm:IntroLower} combines the lower bound on kernels for coloring problems proved in~\cite{JansenK13,JansenP19color} with a novel linear-parameter transformation from those problems to those associated with orthogonality dimension.
More specifically, we show that for every field $\Fset$ and for every integer $d \geq 3$, it is possible to efficiently transform a graph $G$ into a graph $G'$ so that $\chi(G) \leq d$ if and only if $\od_\Fset(G') \leq d$ while essentially preserving the vertex cover number (see Theorem~\ref{thm:reductionColOD}).
This is in contrast to a reduction of~\cite{Peeters96}, which is appropriate only for $d=3$ and significantly increases the vertex cover number.
The transformation relies on a gadget graph that enforces the vectors assigned to two specified vertices to be either orthogonal or equal up to scalar multiplication (see Lemma~\ref{lemma:gadget}).

We remark that Theorem~\ref{thm:IntroLower} implies that, unless $\NP \subseteq \coNPpoly$, the degree of the polynomial lower bound on the size of a kernel for the $d\dODP_\Fset$ problem on $\Empty+k\mathrm{v}$ graphs can be arbitrarily large when $d$ grows. This yields that the $\ODP_\Fset$ problem on $\Empty+k\mathrm{v}$ graphs, in which $d$ constitutes part of the input, is unlikely to admit a kernel of polynomial size.

Theorems~\ref{thm:IntroUpperF} and~\ref{thm:IntroLower} leave a multiplicative gap of roughly $k$ between the upper and lower bounds on the kernel size achievable for the $d\dODP_\Fset$ problem parameterized by the vertex cover number. For the real field $\R$, we narrow this gap to a multiplicative term of $k^{o(1)}$, as stated below.

\begin{theorem}\label{thm:IntroUpperR}
For every integer $d \geq 3$, the $d\dODP_\R$ problem on $\Empty+k\mathrm{v}$ graphs admits a kernel with $O(k^{d-1})$ vertices and bit-size $O(k^{d-1} \cdot \log k)$.
\end{theorem}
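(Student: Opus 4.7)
The plan is to improve upon the $O(k^d)$-vertex kernel of Theorem~\ref{thm:IntroUpperF} by a factor of $k$, leveraging a distinctive property of the real field: over $\R$, any nonzero vector $u \in \R^d$ satisfies $\langle u, u\rangle > 0$, so a nonzero vector orthogonal to a specified set is automatically a valid assignment in an orthogonal representation. Over a general field, $\langle u, u\rangle$ can vanish for $u \neq 0$, which is essentially what necessitates keeping a kernel vertex for every $d$-subset of $X$ that appears in some neighborhood.

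Given an instance $(G,X)$ with $|X| \leq k$, I first apply the basic reductions: delete every $v \in V \setminus X$ with $|N(v)| < d$ (for any $d$-dimensional orthogonal representation $(u_x)_{x \in X}$ of $G[X]$ over $\R$, the vectors at $N(v)$ span a proper subspace of $\R^d$, and any nonzero vector in the orthogonal complement is a valid assignment for $v$), and collapse vertices of $V \setminus X$ with identical neighborhoods in $X$ to a single representative. The main reduction then introduces kernel vertices indexed by $(d-1)$-subsets of $X$: for every $(d-1)$-subset $S \subseteq X$ contained in some $N(v)$, define $T_S := \bigcup\{N(v) : v \in V \setminus X,\ S \subseteq N(v)\}$ and introduce a kernel vertex $v_S$ whose neighborhood represents the rank constraint on $T_S$. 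Since there are $\binom{k}{d-1} = O(k^{d-1})$ such $(d-1)$-subsets, the kernel has the claimed number of vertices.

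Correctness in the forward direction is immediate. For the reverse, given a $d$-OR $(u_x)_{x \in X}$ of $G[X]$ over $\R$ satisfying the kernel constraints so that $\rank\{u_x : x \in T_S\} \leq d-1$ for every retained $S$, every original vertex $v \in V \setminus X$ with $|N(v)| \geq d$ can be extended: pick any $(d-1)$-subset $S \subseteq N(v)$; then $N(v) \subseteq T_S$, so $\rank\{u_x : x \in N(v)\} \leq d-1$, leaving a nonzero vector in the orthogonal complement, which is valid over $\R$ since nonzero vectors have strictly positive self-inner-product.

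The main obstacle will be matching the tighter bit-size bound $O(k^{d-1} \log k)$, since a straightforward storage of each neighborhood $T_S$ would take $O(k)$ bits per kernel vertex, totaling $O(k^d)$ bits. To reach the claimed bit-size, the plan is to replace each $v_S$ by constantly many kernel vertices, each adjacent to only $O(d)$ vertices in $X$, so that the total number of edges is $O(k^{d-1})$ and each is encoded in $O(\log k)$ bits. Establishing that this finer sparsification preserves equivalence over $\R$ is the central technical challenge; an algebraic sparsification in the spirit of Jansen and Pieterse appears the most promising route, where for each $(d-1)$-subset $S$ one selects a constant number of canonical ``extension'' vertices from $T_S$ whose rank constraints over $\R$ generate all the others.
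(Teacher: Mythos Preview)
Your forward direction is not immediate; in fact it fails. The union $T_S$ of all neighborhoods containing a fixed $(d-1)$-subset $S$ can have full rank~$d$ in a valid orthogonal representation of $G$, even though every individual neighborhood has rank at most $d-1$. Concretely, take $d=3$, $X=\{a,b,c,d\}$ with $G[X]$ the triangle on $a,c,d$ (and $b$ isolated in $G[X]$), and two outside vertices $v_1,v_2$ with $N(v_1)=\{a,b,c\}$ and $N(v_2)=\{a,b,d\}$. The assignment $u_a=u_b=e_1$, $u_c=e_2$, $u_d=e_3$, $u_{v_1}=e_3$, $u_{v_2}=e_2$ shows $\od_\R(G)\le 3$. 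But for $S=\{a,b\}$ you get $T_S=\{a,b,c,d\}$, and in \emph{any} $3$-dimensional orthogonal representation of your kernel the triangle forces $u_a,u_c,u_d$ to span $\R^3$, so no nonzero vector is orthogonal to all of $T_S$; your kernel vertex $v_S$ cannot be assigned. Thus the kernel is a $\NO$ instance while $G$ is a $\YES$ instance. Merging neighborhoods that share a $(d-1)$-subset loses information: the rank constraint you impose on $T_S$ is strictly stronger than the per-vertex constraints of the original graph.

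The paper avoids this by working with $d$-subsets throughout (giving the intermediate graph $\KG(G,X,d,d)$ with up to $\binom{k}{d}$ outside vertices), and then sparsifying \emph{algebraically} rather than combinatorially. After a normalization ensuring every vector has first entry~$1$, the constraint for a $d$-subset $S$ becomes $\det(u_x:x\in S)=0$, which is a multilinear homogeneous polynomial of degree $d-1$ in the $kd$ coordinate variables. One then keeps only those $d$-subsets whose polynomials form a basis of the span of all such polynomials; this yields at most $\binom{kd}{d-1}=O(k^{d-1})$ kept vertices, each of degree exactly~$d$, hence $O(k^{d-1})$ edges and the claimed bit-size. The point is that linear dependence among these degree-$(d-1)$ polynomials, not set-containment among neighborhoods, is the right redundancy to exploit.
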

\noindent
The proof of Theorem~\ref{thm:IntroUpperR} borrows the sparsification technique of~\cite{JansenP19sparse} (see also~\cite{JansenP19color}).
A key technical ingredient in applying this method lies in a construction of a low-degree polynomial, which assesses the feasibility of assigning a vector to a vertex based on the vectors of its neighbors (see Lemma~\ref{lemma:det}). Our construction hinges on the fact that the zero vector is the only self-orthogonal vector over the reals. It would be interesting to decide whether or not a similar upper bound on the kernel size could be obtained for finite fields, where this property does not hold.

We finally turn to the study of kernels for the $d\dODP_\Fset$ problem on $\calG+k\mathrm{v}$ graphs for general hereditary graph families $\calG$.
Our first result in this context offers a sufficient condition on $\calG$ for the existence of a polynomial size kernel for $d\dODP_\Fset$ on $\calG+k\mathrm{v}$ graphs.
This condition is related to a variant of the $d\dODP_\Fset$ problem, termed $d\dSubChoose_\Fset$, which was previously studied in various forms (see, e.g.,~\cite{HaynesPSWM10,ChawinH22}) and may be viewed as a counterpart of the $q\qListCol$ problem for orthogonal representations (see Definition~\ref{def:SC}). In this problem, the input consists of a graph $G$ and an assignment of a subspace of $\Fset^d$ to each vertex, and the goal is to decide whether $G$ admits an orthogonal representation over $\Fset$ that assigns to every vertex a vector from its subspace. We show that if the $d\dSubChoose_\Fset$ problem on graphs from a family $\calG$ is local, in the sense that every $\NO$ instance has a $\NO$ sub-instance on at most $g(d)$ vertices, then the $d\dODP_\Fset$ problem on $\calG+k\mathrm{v}$ graphs admits a kernel with $O(k^{d \cdot g(d)})$ vertices. For the precise statement, see Theorem~\ref{thm:KernelGenG}. We demonstrate the applicability of this result for the graph families $\USplit$ and $\UCochordal$.
On the contrary, for the $\Path$ family, we show that it is unlikely that the $d\dODP_\Fset$ problem on $\Path+k\mathrm{v}$ graphs admits a polynomial size kernel even for $d=3$ (see Theorem~\ref{thm:ODPath}).

\subsection{Outline}
The remainder of the paper is structured as follows.
In Section~\ref{sec:preliminaries}, we collect several definitions and facts that will be used throughout the paper.
In Section~\ref{sec:FPT}, we study the fixed-parameter tractability of the $\ODP_\Fset$ problem parameterized by the vertex cover number and prove Theorem~\ref{thm:IntroFPT}.
In Section~\ref{sec:KernelVC}, we present polynomial size kernels for the $d\dODP_\Fset$ problem parameterized by the vertex cover number and prove Theorems~\ref{thm:IntroUpperF} and~\ref{thm:IntroUpperR}. In Section~\ref{sec:lower}, we complement the results of Section~\ref{sec:KernelVC} by providing limits on the kernelizability of the $d\dODP_\Fset$ problem parameterized by the vertex cover number and prove Theorem~\ref{thm:IntroLower}. We further show there that the $3\dODP_\Fset$ problem on $\Path+k\mathrm{v}$ graphs is unlikely to admit a kernel of polynomial size. Finally, in Section~\ref{sec:KernelG}, we study the kernelizability of the $d\dODP_\Fset$ problem on $\calG+k\mathrm{v}$ graphs for general hereditary graph families $\calG$.

\section{Preliminaries}\label{sec:preliminaries}

\subsection{Notations}
For an integer $n$, let $[n] = \{1,2,\ldots,n\}$.
All graphs considered in this paper are simple.
For a graph $G=(V,E)$ and a set $X \subseteq V$, we let $G[X]$ denote the subgraph of $G$ induced by $X$.
The set $X$ is called a vertex cover of $G$ if every edge of $G$ is incident with a vertex of $X$.
We let $G \setminus X$ denote the graph obtained from $G$ by removing the vertices of $X$ (and the edges that touch them).
For a vertex $v \in V$, we let $N_G(v)$ denote the set of neighbors of $v$ in $G$.
A family of graphs is called hereditary if it is closed under vertex removal, or equivalently, under taking induced subgraphs.

\subsection{Linear Algebra}

For a field $\Fset$ and an integer $d$, two vectors $x,y \in \Fset^d$ are said to be orthogonal if $\langle x,y \rangle = 0$, where $\langle x,y \rangle = \sum_{i=1}^{d}{x_i y_i}$ with operations over $\Fset$. If $\langle x,x \rangle = 0$ then $x$ is self-orthogonal, and otherwise it is non-self-orthogonal.
The orthogonal complement of a subspace $W \subseteq \Fset^d$ is the subspace $W^{\perp}$ of all vectors in $\Fset^d$ that are orthogonal to all vectors of $W$, that is, $W^{\perp} = \{ x \in \Fset^d \mid \forall y \in W,~\langle x,y \rangle =0\}$. Note that the orthogonal complement satisfies $\dim(W) + \dim(W^{\perp})=d$ and $W = (W^{\perp})^{\perp}$.
For a vector $u \in \Fset^d$, we let $u^\perp$ denote the orthogonal complement of the subspace $\linspan(u)$ spanned by $u$.
The following simple lemma characterizes the subspaces whose orthogonal complement includes a non-self-orthogonal vector.
Recall that the characteristic of a field is the smallest positive number of copies of the field's identity element that sum to zero, or $0$ if no such number exists.

\begin{lemma}\label{lemma:W_ortho}
Let $\Fset$ be a field, let $d$ be an integer, and let $W$ be a subspace of $\Fset^d$.
\begin{enumerate}
  \item\label{itm:W1} If the characteristic of $\Fset$ is $2$, then there exists a non-self-orthogonal vector in $W^\perp$ if and only if the all-one vector does not lie in $W$.
  \item\label{itm:W2} If the characteristic of $\Fset$ is not $2$, then there exists a non-self-orthogonal vector in $W^\perp$ if and only if $W^\perp \nsubseteq W$.
\end{enumerate}
\end{lemma}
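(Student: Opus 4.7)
The plan is to handle the two characteristic cases by exploiting the polarization identity $\langle x+y,x+y\rangle = \langle x,x\rangle + 2\langle x,y\rangle + \langle y,y\rangle$, which behaves very differently depending on whether the factor $2$ vanishes in $\Fset$.

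For part~\ref{itm:W1}, the key observation is that in characteristic $2$ the Frobenius identity gives $\langle x,x\rangle = \sum_{i=1}^d x_i^2 = \bigl(\sum_{i=1}^d x_i\bigr)^2 = \langle x, \mathbf{1}\rangle^2$, where $\mathbf{1}$ is the all-one vector. Hence a vector $x$ is non-self-orthogonal precisely when $\langle x, \mathbf{1}\rangle \neq 0$. The existence of a non-self-orthogonal vector in $W^\perp$ is therefore equivalent to $\mathbf{1} \notin (W^\perp)^\perp = W$, which is exactly the stated condition.

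For part~\ref{itm:W2}, I would prove both directions cleanly using the relation $(W^\perp)^\perp = W$. If $W^\perp \subseteq W$, then any $x \in W^\perp$ lies in $W$ and so is orthogonal to every element of $W^\perp$ including itself, giving $\langle x,x\rangle = 0$; thus every vector in $W^\perp$ is self-orthogonal. For the converse, I argue the contrapositive: assume every vector in $W^\perp$ is self-orthogonal. Then for any $x,y \in W^\perp$, the sum $x+y$ also lies in $W^\perp$ and is self-orthogonal, so the polarization identity yields $0 = \langle x+y, x+y\rangle = \langle x,x\rangle + 2\langle x,y\rangle + \langle y,y\rangle = 2\langle x, y\rangle$. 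Since the characteristic is not $2$, this forces $\langle x,y\rangle = 0$, so $W^\perp$ is contained in its own orthogonal complement, i.e., $W^\perp \subseteq (W^\perp)^\perp = W$.

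I do not anticipate a real obstacle here; the lemma is essentially a clean application of polarization together with the involutive nature of $W \mapsto W^\perp$ in a nondegenerate bilinear setting. The only mild subtlety is remembering that the symmetric bilinear form $\langle \cdot,\cdot\rangle$ on $\Fset^d$ is nondegenerate so that $(W^\perp)^\perp = W$ holds unconditionally, and flagging why the Frobenius trick in characteristic $2$ replaces the polarization argument which would otherwise be vacuous there.
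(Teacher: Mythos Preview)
Your proposal is correct and follows essentially the same approach as the paper: the Frobenius identity $\langle x,x\rangle = \langle x,\mathbf{1}\rangle^2$ in characteristic~$2$, the polarization identity in characteristic~$\neq 2$, and the relation $(W^\perp)^\perp = W$ are exactly the tools the paper uses. The only cosmetic difference is that for the harder direction of Item~\ref{itm:W2} the paper argues directly (pick $x\in W^\perp\setminus W$, find $y\in W^\perp$ with $\langle x,y\rangle\neq 0$, and take $x+y$ if both are self-orthogonal), whereas you phrase the same polarization step as a contrapositive; the content is identical.
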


\begin{proof}
We start with the proof of Item~\ref{itm:W1}. Suppose that $\Fset$ has characteristic $2$. It follows that for every vector $x \in \Fset^d$, it holds that $\langle x,x \rangle = \sum_{i=1}^{d}{x_i^2} = (\sum_{i=1}^{d}{x_i})^2$, hence $x$ is self-orthogonal if and only if it is orthogonal to the all-one vector.
Now, if there exists a non-self-orthogonal vector $x \in W^\perp$, then this vector is not orthogonal to the all-one vector. Since $x$ is orthogonal to $W$, it follows that the all-one vector does not lie in $W$.
Conversely, if the all-one vector does not lie in $W = (W^{\perp})^{\perp}$, then there exists a vector in $W^\perp$ that is not orthogonal to the all-one vector. This implies that there exists a non-self-orthogonal vector in $W^\perp$, as required.

We proceed with the proof of Item~\ref{itm:W2}. Suppose that $\Fset$ has characteristic other than $2$.
If there exists a non-self-orthogonal vector in $W^\perp$, then this vector does not lie in $W$, and thus $W^\perp \nsubseteq W$.
Conversely, if  $W^\perp \nsubseteq W$ then there exists a vector $x \in W^\perp$ such that $x \notin W = (W^{\perp})^{\perp}$, hence there exists a vector $y \in W^\perp$ such that $\langle x,y \rangle \neq 0$.
If $\langle x,x \rangle \neq 0$ or $\langle y,y \rangle \neq 0$, then one of $x$ and $y$ is a non-self-orthogonal vector in $W^{\perp}$.
Otherwise, the vector $x+y \in W^\perp$ satisfies this property, because $\langle x+y,x+y \rangle = 2 \cdot \langle x,y \rangle \neq 0$, where for the inequality we use the fact that the characteristic of $\Fset$ is not $2$. This completes the proof.
\end{proof}

Borrowing the terminology of~\cite{JansenP19sparse}, we say that a field $\Fset$ is efficient if field operations and Gaussian elimination can be performed in polynomial time in the size of a reasonable input encoding. All finite fields, as well as the real field $\R$ when restricted to rational numbers (to ensure finite representation), are efficient.

\begin{lemma}\label{lemma:poly}
For every efficient field $\Fset$, there exists a polynomial-time algorithm that given a collection of vectors in $\Fset^d$, decides whether there exists a non-self-orthogonal vector in $\Fset^d$ that is orthogonal to all of them.
\end{lemma}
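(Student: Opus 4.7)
The plan is to apply Lemma~\ref{lemma:W_ortho} directly: given the input vectors $v_1,\ldots,v_m \in \Fset^d$, let $W = \linspan(v_1,\ldots,v_m)$, so that a vector $x \in \Fset^d$ is orthogonal to all the $v_i$'s precisely when $x \in W^\perp$. The question thus reduces to deciding whether $W^\perp$ contains a non-self-orthogonal vector, which by Lemma~\ref{lemma:W_ortho} becomes a purely linear-algebraic membership/containment question that depends only on the characteristic of $\Fset$.

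First I would separate into two cases. If $\Fset$ has characteristic $2$, Lemma~\ref{lemma:W_ortho}(\ref{itm:W1}) says we need only test whether the all-one vector $\mathbf{1}\in \Fset^d$ fails to lie in $W$. Since $\Fset$ is efficient, Gaussian elimination on the matrix whose rows are $v_1,\ldots,v_m$ produces a basis of $W$ in polynomial time, after which testing whether $\mathbf{1} \in W$ is another Gaussian elimination step (solving a linear system). If $\Fset$ has characteristic different from $2$, Lemma~\ref{lemma:W_ortho}(\ref{itm:W2}) reduces the task to deciding whether $W^\perp \nsubseteq W$. Compute a basis of $W^\perp$ by solving the homogeneous system $\{\langle x, v_i\rangle = 0\}_{i=1}^m$, then for each basis vector $b$ of $W^\perp$ check whether $b \in W$ by solving a further linear system; $W^\perp \subseteq W$ holds iff every such $b$ lies in $W$.

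In both branches, every step (building the input matrices, performing Gaussian elimination, solving linear systems, checking membership in a subspace) is a polynomial-time operation by the definition of an efficient field. The total running time is polynomial in $m$, $d$, and the bit-size of the input encoding.

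There is no real obstacle here: the hard combinatorial content has already been isolated in Lemma~\ref{lemma:W_ortho}, and what remains is only the observation that the resulting conditions are linear-algebraic and therefore decidable in polynomial time via Gaussian elimination, which is available by the efficiency of~$\Fset$. The mildest subtlety is simply to be careful that the characteristic-$2$ and non-characteristic-$2$ cases require different tests, and that, in the latter, one must check containment of \emph{every} basis vector of $W^\perp$ in $W$ rather than some single candidate.
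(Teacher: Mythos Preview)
Your proposal is correct and follows essentially the same approach as the paper's proof: both reduce the question to Lemma~\ref{lemma:W_ortho}, split on the characteristic of $\Fset$, and then observe that the resulting membership test (all-one vector in $W$) or containment test ($W^\perp \subseteq W$, checked on a basis of $W^\perp$) can be carried out in polynomial time via Gaussian elimination over an efficient field. Your write-up merely spells out the Gaussian-elimination steps a bit more explicitly than the paper does.
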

\begin{proof}
For input vectors $u_1, \ldots, u_\ell \in \Fset^d$, let $W = \linspan(u_1, \ldots, u_\ell)$.
Observe that there exists a non-self-orthogonal vector in $\Fset^d$ that is orthogonal to $u_1, \ldots, u_\ell$ if and only if there exists a non-self-orthogonal vector in the orthogonal complement $W^{\perp}$. By Lemma~\ref{lemma:W_ortho}, for a field $\Fset$ of characteristic $2$, this is equivalent to the all-one vector not lying in $W$, and for every other field $\Fset$, this is equivalent to $W^\perp \nsubseteq W$ (that is, at least one vector of a basis of $W^\perp$ does not lie in $W$).
Since $\Fset$ is an efficient field, these conditions can be checked in polynomial time. This completes the proof.
\end{proof}

For a field $\Fset$ and an integer $d$, if $W$ is a subspace of $\Fset^d$ of dimension smaller than $d$, then its orthogonal complement $W^{\perp}$ has dimension at least $1$, hence there exists a nonzero vector orthogonal to $W$. However, if we require this vector not only to be nonzero but also non-self-orthogonal, its existence is no longer guaranteed. This consideration motivates the following definition.

\begin{definition}\label{def:m(F,d)}
For a field $\Fset$ and an integer $d$, let $m(\Fset,d)$ denote the largest integer $m$ such that for every subspace $W$ of $\Fset^d$ with $\dim(W) < m$, there exists a non-self-orthogonal vector in $W^\perp$.
\end{definition}

\begin{remark}\label{remark:m}
For every field $\Fset$ and for every integer $d \geq 1$, it holds that $1 \leq m(\Fset,d) \leq d$.
Indeed, the lower bound holds because there exists a non-self-orthogonal vector orthogonal to the zero subspace of $\Fset^d$, and the upper bound holds because no nonzero vector is orthogonal to the entire vector space $\Fset^d$.
For a field $\Fset$ of characteristic $2$, it holds that $m(\Fset,d)=1$, because no non-self-orthogonal vector is orthogonal to the $1$-dimensional subspace spanned by the all-one vector.
For every other field $\Fset$, it holds that $m(\Fset,d) \geq \lceil d/2 \rceil$.
To see this, consider a subspace $W \subseteq \Fset^d$ of dimension $\dim(W) < \lceil d/2 \rceil$, and observe that it satisfies $\dim(W^{\perp}) = d-\dim(W) > d-\lceil d/2 \rceil = \lfloor d/2 \rfloor$, and thus $\dim(W^{\perp}) > \dim(W)$. This implies that $W^{\perp} \nsubseteq W$, hence by Item~\ref{itm:W2} of Lemma~\ref{lemma:W_ortho}, there exists a non-self-orthogonal vector in $W^\perp$, as required.
We also observe that if the vector space $\Fset^d$ has no nonzero self-orthogonal vectors, then $m(\Fset,d)=d$, because for every subspace $W \subseteq \Fset^d$ of dimension smaller than $d$ there exists a nonzero vector in $W^\perp$. In particular, for every integer $d$, it holds that $m(\R,d)=d$.
\end{remark}

\subsection{Orthogonality Dimension}

The orthogonality dimension of a graph over a given field is defined as follows.
\begin{definition}\label{def:OD}
For a field $\Fset$ and an integer $d$, a $d$-dimensional orthogonal representation of a graph $G=(V,E)$ over $\Fset$ is an assignment of a vector $u_v \in \mathbb{F}^d$ with $\langle u_v,u_v \rangle \neq 0$ to each vertex $v \in V$, such that for every two adjacent vertices $v$ and $v'$ in $G$, it holds that $\langle u_{v} , u_{v'} \rangle = 0$.
The orthogonality dimension of a graph $G$ over a field $\Fset$, denoted by $\od_\Fset(G)$, is the smallest integer $d$ for which $G$ admits a $d$-dimensional orthogonal representation over $\Fset$.
\end{definition}

\begin{remark}\label{remark:OD}
Let us emphasize that the definition of an orthogonal representation does not require vectors assigned to non-adjacent vertices to be non-orthogonal.
Orthogonal representations that satisfy this additional property are called faithful (see, e.g.,~\cite[Chapter~10]{LovaszBook}).
Note that orthogonal representations of graphs are sometimes defined in the literature as orthogonal representations of the complement, requiring vectors associated with non-adjacent vertices to be orthogonal (with no constraint imposed on vectors of adjacent vertices). We decided to use here the other definition, but one may view the notation $\od_\Fset(G)$ as standing for $\xi_\Fset(\overline{G})$.
\end{remark}

\begin{claim}\label{claim:OD<=CHI}
For every field $\Fset$ and for every graph $G$, it holds that $\od_\Fset(G) \leq \chi(G)$.
\end{claim}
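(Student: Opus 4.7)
The plan is to follow the informal observation already stated in the introduction: a proper coloring gives rise to an orthogonal representation that uses only standard basis vectors. Let $q = \chi(G)$, and fix any proper $q$-coloring $c : V \to [q]$ of the graph $G = (V,E)$. Working in $\Fset^q$, I would assign to each vertex $v \in V$ the standard basis vector $u_v = e_{c(v)}$, where $e_i$ denotes the vector whose $i$-th coordinate is the multiplicative identity of $\Fset$ and all other coordinates are zero.

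To verify that this assignment is a $q$-dimensional orthogonal representation of $G$ over $\Fset$ in the sense of Definition~\ref{def:OD}, I would check the two required properties. First, for every $v \in V$, one has $\langle u_v, u_v \rangle = \langle e_{c(v)}, e_{c(v)} \rangle = 1 \neq 0$, so each assigned vector is non-self-orthogonal. Second, for every edge $\{v, v'\} \in E$, the properness of the coloring $c$ gives $c(v) \neq c(v')$, and therefore $\langle u_v, u_{v'} \rangle = \langle e_{c(v)}, e_{c(v')} \rangle = 0$ by the definition of the standard inner product. Hence $G$ admits a $q$-dimensional orthogonal representation over $\Fset$, which by Definition~\ref{def:OD} yields $\od_\Fset(G) \leq q = \chi(G)$.

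There is no real obstacle here; the only thing worth noting is that the argument works over an arbitrary field because the multiplicative identity is nonzero in any field, ensuring that the standard basis vectors are non-self-orthogonal regardless of the characteristic.
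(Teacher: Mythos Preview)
Your proof is correct and essentially identical to the paper's: both take a proper $q$-coloring and assign standard basis vectors $e_{c(v)}$ in $\Fset^q$, then verify the non-self-orthogonality and pairwise orthogonality conditions directly. Your explicit remark that $\langle e_i, e_i \rangle = 1 \neq 0$ in any field is a nice touch that the paper leaves implicit.
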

\begin{proof}
For a graph $G=(V,E)$, let $q = \chi(G)$, and consider a proper coloring $c:V \rightarrow [q]$ of $G$.
Assign to each vertex $v \in V$ the vector $e_{c(v)}$ in $\Fset^q$, where $e_i$ stands for the vector of $\Fset^q$ with $1$ on the $i$th entry and $0$ everywhere else. The vectors assigned here to the vertices of $G$ are obviously non-self-orthogonal vectors of $\Fset^q$. Further, for every two adjacent vertices $v$ and $v'$ in $G$, it holds that $c(v) \neq c(v')$, hence $\langle e_{c(v)}, e_{c(v')} \rangle =0$. This implies that there exists a $q$-dimensional orthogonal representation of $G$ over $\Fset$, and thus $\od_\Fset(G) \leq q$.
\end{proof}

\subsection{Parameterized Complexity}\label{sec:parameter}

We present here a few fundamental definitions from the area of parameterized complexity.
For a thorough introduction to the field, the reader is referred to, e.g.,~\cite{CyganFKLMPPS15,KernelBook19}.

A parameterized problem is a set $Q \subseteq \Sigma^* \times \N$ for some finite alphabet $\Sigma$.
A fixed-parameter algorithm for $Q$ is an algorithm that given an instance $(x,k) \in \Sigma^* \times \N$ decides whether $(x,k) \in Q$ in time $f(k) \cdot |x|^c$ for some computable function $f$ and some constant $c$. If $Q$ admits a fixed-parameter algorithm, then we say that $Q$ is fixed-parameter tractable.

A compression (also known as generalized kernel and bikernel) for a parameterized problem $Q \subseteq \Sigma^* \times \N$ into a parameterized problem $Q' \subseteq \Sigma^* \times \N$ is an algorithm that given an instance $(x,k) \in \Sigma^* \times \N$ returns in time polynomial in $|x|+k$ an instance $(x',k') \in \Sigma^* \times \N$, such that $(x,k) \in Q$ if and only if $(x',k') \in Q'$, and in addition, $|x'|+k' \leq h(k)$ for some computable function $h$. The function $h$ is referred to as the size of the compression. If $h$ is polynomial, then the compression is called a polynomial compression. If $|\Sigma|=2$, the function $h$ is called the bit-size of the compression.
When we say that a parameterized problem $Q$ admits a compression of size $h$, we mean that there exists a compression of size $h$ for $Q$ into some parameterized problem.
A compression for a parameterized problem $Q$ into itself is called a kernelization for $Q$ (or simply a kernel).
It is well known that a decidable problem admits a kernel if and only if it is fixed-parameter tractable.

A transformation from a parameterized problem $Q \subseteq \Sigma^* \times \N$ into a parameterized problem $Q' \subseteq \Sigma^* \times \N$ is an algorithm that given an instance $(x,k) \in \Sigma^* \times \N$ returns in time polynomial in $|x|+k$ an instance $(x',k') \in \Sigma^* \times \N$, such that $(x,k) \in Q$ if and only if $(x',k') \in Q'$, and in addition, $k' \leq h(k)$ for some computable function $h$. If $h$ is polynomial, the transformation is called polynomial-parameter, and if $h$ is linear, the transformation is called linear-parameter.

\section{Fixed-Parameter Tractability of \texorpdfstring{$\ODP_\Fset$}{Ortho-Dim F}}\label{sec:FPT}

In this section, we prove that the $\ODP_\Fset$ problem parameterized by the vertex cover number is fixed-parameter tractable for various fields $\Fset$ (recall Definition~\ref{def:OD_P}).

\subsection{Finite Fields}

We begin with the simple case, where the field $\Fset$ is finite.
The proof resembles the one of the fixed-parameter tractability of the $\Col$ problem parameterized by the vertex cover number.

\begin{theorem}\label{thm:FPT_finite}
For every finite field $\Fset$, $\ODP_\Fset$ on $\Empty+k\mathrm{v}$ graphs is fixed-parameter tractable.
\end{theorem}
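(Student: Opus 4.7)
The plan is to exploit the fact that, because $X$ is a vertex cover, the set $V \setminus X$ is an independent set: every vertex in $V \setminus X$ has all of its neighbors inside $X$, so once we fix vectors on $X$, the extendability to each vertex of $V \setminus X$ becomes an independent local question.

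First I would dispose of the easy case where $d$ is large. Coloring every vertex of $X$ with its own color and using one additional color for the independent set $V \setminus X$ shows that $\chi(G) \leq k+1$, hence by Claim~\ref{claim:OD<=CHI} we get $\od_\Fset(G) \leq k+1$. Thus if $d \geq k+1$ the algorithm can immediately return YES, and we may henceforth assume $d \leq k$.

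Next I would run a brute force over all vector assignments to $X$. Writing $q = |\Fset|$, there are at most $q^{dk} \leq q^{k(k+1)}$ candidate functions $X \to \Fset^d$, which is a function of $k$ alone since $\Fset$ is fixed. For each such candidate $(u_v)_{v \in X}$ it takes polynomial time to verify that every $u_v$ is non-self-orthogonal and that $\langle u_v, u_{v'}\rangle = 0$ holds for every edge $\{v,v'\}$ of $G[X]$. Then, for every vertex $w \in V \setminus X$, I would invoke Lemma~\ref{lemma:poly} on the collection $\{u_v : v \in N_G(w)\} \subseteq \Fset^d$ to decide in polynomial time whether there exists a non-self-orthogonal vector orthogonal to all of them; this is precisely the condition that the candidate assignment can be extended to $w$. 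The algorithm outputs YES iff some candidate passes all the tests on $X$ and all the per-vertex feasibility checks on $V \setminus X$.

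Correctness rests on the independence of $V \setminus X$: since two vertices in $V \setminus X$ are never adjacent, their vectors are subject to no mutual constraints, so each one can be chosen individually from among the non-self-orthogonal vectors in the orthogonal complement of its $X$-neighborhood. The total running time is $q^{O(k^2)} \cdot \mathrm{poly}(|V| + \log d)$, which establishes fixed-parameter tractability. I do not foresee a serious obstacle; the delicate points are merely (i) observing that the crude bound $\chi(G) \leq k+1$ disposes of all large-$d$ instances via Claim~\ref{claim:OD<=CHI}, and (ii) applying Lemma~\ref{lemma:poly}, whose hypothesis is satisfied because every finite field is efficient.
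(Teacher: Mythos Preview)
Your proposal is correct and follows essentially the same approach as the paper's proof: dispose of the case $d>k$ via $\chi(G)\leq k+1$ and Claim~\ref{claim:OD<=CHI}, then brute-force over all assignments of vectors in $\Fset^d$ to $X$ and use Lemma~\ref{lemma:poly} to test extendability to each vertex of the independent set $V\setminus X$. If anything, your write-up is slightly more careful in explicitly checking the orthogonality constraints on the edges of $G[X]$ before testing extendability.
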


\begin{proof}
Fix a finite field $\Fset$.
The input of $\ODP_\Fset$ on $\Empty+k\mathrm{v}$ graphs consists of a graph $G=(V,E)$, a vertex cover $X \subseteq V$ of $G$ of size $|X|=k$, and an integer $d$.
Consider the algorithm that given such an input acts as follows.
If $d > k$ then the algorithm accepts.
Otherwise, the algorithm enumerates all possible assignments of non-self-orthogonal vectors from $\Fset^d$ to the vertices of $X$. For every such assignment, the algorithm checks for every vertex $v \in V \setminus X$ if there exists a non-self-orthogonal vector in $\Fset^d$ that is orthogonal to the vectors assigned to the neighbors of $v$ (note that they all lie in $X$). If there exists an assignment to the vertices of $X$ such that the answer is positive for all the vertices of $V \setminus X$, then the algorithm accepts, and otherwise it rejects.

For correctness, observe first that the input graph $G$ is $(k+1)$-colorable, as follows by assigning $k$ distinct colors to the vertices of the vertex cover $X$ and another color to the vertices of the independent set $V \setminus X$. It thus follows, using Claim~\ref{claim:OD<=CHI}, that $\od_\Fset(G) \leq \chi(G) \leq k+1$. Therefore, if $d >k$, then it holds that $\od_\Fset(G) \leq d$, hence our algorithm correctly accepts. Otherwise, the algorithm tries all possible assignments of non-self-orthogonal vectors of $\Fset^d$ to the vertices of $X$. Since the vertices of $V \setminus X$ form an independent set in $G$, an assignment to the vertices of $X$ can be extended to the whole graph if and only if for each vertex $v \in V \setminus X$ there exists a non-self-orthogonal vector in $\Fset^d$ that is orthogonal to the vectors assigned to the neighbors of $v$ (which all lie in $X$). Since this condition is checked by the algorithm for all possible assignments to the vertices of $X$, its answer is correct.

We finally analyze the running time of the algorithm.
On instances with $d>k$, the algorithm is clearly efficient.
For instances with $d \leq k$, the number of assignments of vectors from $\Fset^d$ to the vertices of $X$ is at most ${|\Fset|}^{d \cdot |X|} \leq |\Fset|^{k^2}$.
Further, by Lemma~\ref{lemma:poly}, given a collection of vectors of $\Fset^d$, it is possible to decide in polynomial time whether there exists a non-self-orthogonal vector in $\Fset^d$ that is orthogonal to all of them. This implies that our algorithm for $\ODP_\Fset$ on $\Empty+k\mathrm{v}$ graphs can be implemented in time $|\Fset|^{k^2} \cdot n^{O(1)}$, where $n$ stands for the input size, hence the problem is fixed-parameter tractable.
\end{proof}

\subsection{General Fields}

We turn to the following generalization of Theorem~\ref{thm:FPT_finite}.

\begin{theorem}\label{thm:FPT_gen}
Let $\Fset$ be a field for which the $\ODP_\Fset$ problem is decidable.
Then the $\ODP_\Fset$ problem on $\Empty+k\mathrm{v}$ graphs is fixed-parameter tractable.
\end{theorem}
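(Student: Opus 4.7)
The plan is to generalise the proof of Theorem~\ref{thm:FPT_finite} by replacing the exhaustive enumeration over $\Fset^d$ (which is unavailable when $\Fset$ is infinite) with a reduction to a small equivalent instance that can be fed to the decidability procedure for $\ODP_\Fset$.

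Given an input $(G,X,d)$ with $|X|=k$, I would first dispose of the easy case $d > k$. Here the graph $G$ has $X$ as a vertex cover and $V \setminus X$ as an independent set, so $\chi(G) \leq k+1 \leq d$, and Claim~\ref{claim:OD<=CHI} gives $\od_\Fset(G) \leq d$; the algorithm accepts. So assume $d \leq k$.

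The crucial observation is that two vertices $v,v' \in V \setminus X$ with the same neighborhood in $X$ (equivalently, the same neighborhood in $G$, since $V\setminus X$ is independent) are non-adjacent and impose identical orthogonality constraints, so they can safely receive the \emph{same} vector in any orthogonal representation. This motivates the following kernelization-style reduction: partition $V \setminus X$ into at most $2^k$ equivalence classes according to their neighborhoods in $X$, and let $G'$ be the induced subgraph of $G$ obtained by keeping $X$ together with exactly one representative from each non-empty class. Then $|V(G')| \leq k + 2^k$, and I claim that
\[
\od_\Fset(G) \leq d \quad \iff \quad \od_\Fset(G') \leq d.
\]
The forward direction is immediate since $G'$ is an induced subgraph of $G$. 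For the converse, given an orthogonal representation of $G'$, assign to each removed vertex $v \in V \setminus V(G')$ the vector of the representative of its class; the assigned vector is non-self-orthogonal and remains orthogonal to the vectors of $N_G(v) \subseteq X$, and distinct removed vertices are non-adjacent, so the result is a valid orthogonal representation of $G$.

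To finish, I would run the decidability algorithm for $\ODP_\Fset$ on the instance $(G',d)$. Since the bit-size of $(G',d)$ is bounded by a function of $k$ alone, the running time of the decidability algorithm on this reduced instance is also bounded by some computable function $f(k)$, and the reduction itself takes time polynomial in the size of $(G,X,d)$. The overall running time is therefore $f(k) \cdot n^{O(1)}$, establishing fixed-parameter tractability. There is no serious technical obstacle: the only subtle point is verifying that the "collapse by type" step is lossless, which relies precisely on the fact that the independent set $V \setminus X$ contains no internal edges and hence allows vectors to be reused across vertices of the same type.
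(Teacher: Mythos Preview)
Your proof is correct. The twin-collapse step is sound: vertices of $V\setminus X$ with identical neighbourhoods in $X$ are non-adjacent and impose identical orthogonality constraints, so retaining one representative per neighbourhood class yields an induced subgraph $G'$ on at most $k+2^k$ vertices with $\od_\Fset(G)\le d \iff \od_\Fset(G')\le d$. Feeding $(G',d)$ to the decision procedure then gives a running time bounded by a computable function of $k$, exactly as you argue.

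This is a genuinely different route from the paper's. The paper builds the graph $\KG=\KG(G,X,1,d)$ of Definition~\ref{def:K}: it keeps $G[X]$ and, for every $S\subseteq X$ with $1\le|S|\le d$ that is contained in the neighbourhood of some vertex of $V\setminus X$, adds a fresh vertex adjacent precisely to $S$. Lemma~\ref{lemma:K} shows $\od_\Fset(G)\le d\iff\od_\Fset(\KG)\le d$ and that $\KG$ has at most $k+\sum_{i=1}^{d}\binom{k}{i}=O(k^d)$ vertices. The paper's construction is thus \emph{parameterised by $d$} and produces an instance of size polynomial in $k$ for fixed $d$; this is exactly what is reused to obtain the polynomial kernel of Theorem~\ref{thm:IntroUpperF}. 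Your twin reduction is more elementary and shorter, but its $2^k$ bound is exponential in $k$ regardless of $d$, so it suffices for fixed-parameter tractability yet does not by itself yield the polynomial kernel results that follow. In short: your argument is cleaner for the FPT statement alone; the paper's is heavier but does double duty.
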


Recall that the algorithm of Theorem~\ref{thm:FPT_finite} for the $\ODP_\Fset$ problem on $\Empty+k\mathrm{v}$ graphs enumerates all possible assignments of non-self-orthogonal vectors to the vertices of a given vertex cover. This approach is clearly not applicable when the field $\Fset$ is infinite. In order to extend the fixed-parameter tractability result to general fields and to obtain Theorem~\ref{thm:FPT_gen}, we use the following definition inspired by an idea of~\cite{JansenK13}.

\begin{definition}\label{def:K}
Let $G=(V,E)$ be a graph, let $X \subseteq V$ be a vertex cover of $G$, and let $d \geq m \geq 1$ be integers.
We define the graph $\KG = \KG(G,X,m,d)$ as follows.
We start with $\KG = G[X]$. Then, for every subset $S \subseteq X$ of size $m \leq |S| \leq d$, if there exists a vertex $v \in V \setminus X$ such that $S \subseteq N_G(v)$, then we add to $\KG$ a new vertex $v_S$ and connect it to all the vertices of $S$.
\end{definition}

The following lemma lists useful properties of the graph given in Definition~\ref{def:K} (recall Definition~\ref{def:m(F,d)}).

\begin{lemma}\label{lemma:K}
Let $G=(V,E)$ be a graph, let $X \subseteq V$ be a vertex cover of $G$ of size $|X|=k$, let $d \geq m \geq 1$ be integers, and let $\KG = \KG(G,X,m,d)$.
\begin{enumerate}
  \item\label{itm:K1} The set $X$ forms a vertex cover of $\KG$.
  \item\label{itm:K2} The number of vertices in $\KG$ is at most $k + \sum_{i=m}^{d}{\binom{k}{i}}$.
  \item\label{itm:K3} The graph $\KG$ can be encoded in $\binom{k}{2} + \sum_{i=m}^{d}{\binom{k}{i}}$ bits.
  \item\label{itm:K4} For every field $\Fset$ with $m \leq m(\Fset,d)$, it holds that $\od_\Fset(G) \leq d$ if and only if $\od_\Fset(\KG) \leq d$.
\end{enumerate}
\end{lemma}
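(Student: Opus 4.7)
Items~\ref{itm:K1}--\ref{itm:K3} are largely bookkeeping consequences of the construction. For Item~\ref{itm:K1}, I would note that every edge of $\KG$ is either an edge of $G[X]$ (both endpoints in $X$) or of the form $\{v_S, w\}$ with $w \in S \subseteq X$; in either case $X$ hits it. Item~\ref{itm:K2} follows since the vertex set of $\KG$ consists of the $k$ vertices of $X$ together with at most one new vertex $v_S$ for each subset $S \subseteq X$ with $|S| \in [m,d]$. For Item~\ref{itm:K3}, I would fix a canonical ordering of $X$, record $G[X]$ via an upper-triangular adjacency bitstring of length $\binom{k}{2}$, and append one bit per subset $S \subseteq X$ with $|S| \in [m,d]$ indicating whether $v_S$ is present; the neighborhood of each new vertex is then determined by its indexing subset.

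For Item~\ref{itm:K4}, the forward direction is routine. Given a $d$-dimensional orthogonal representation $\phi$ of $G$ over $\Fset$, I would reuse $\phi$ on $X$ and, for each new vertex $v_S$ of $\KG$, pick some witness $v \in V \setminus X$ with $S \subseteq N_G(v)$ (which exists by the construction of $\KG$) and assign to $v_S$ the vector $\phi(v)$. This vector is non-self-orthogonal, and since $v$ is adjacent in $G$ to every vertex of $S$, it is orthogonal to $\phi(w)$ for every $w \in S$, which are precisely the neighbors of $v_S$ in $\KG$. Edges inside $X$ are handled automatically since $\phi$ already respects them.

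The backward direction is the heart of the proof and is where the hypothesis $m \le m(\Fset,d)$ enters. Given an orthogonal representation $\phi$ of $\KG$, I would retain $\phi$ on $X$ and extend it to each $v \in V \setminus X$ independently; this is legitimate because $V \setminus X$ is an independent set in $G$, so no constraints arise between the extensions at different vertices. Fix $v \in V \setminus X$, let $W_v = \linspan\{\phi(w) : w \in N_G(v)\}$, and let $r = \dim(W_v)$, noting $r \le d$. If $r < m$, then $r < m(\Fset,d)$, so Definition~\ref{def:m(F,d)} furnishes a non-self-orthogonal vector in $W_v^\perp$, which I assign to $v$. Otherwise $m \le r \le d$, and I would extract a subset $T \subseteq N_G(v)$ of size exactly $r$ whose image under $\phi$ is a basis of $W_v$; since $m \le |T| \le d$ and $v$ itself witnesses $T \subseteq N_G(v)$, the vertex $v_T$ lies in $\KG$ and is adjacent there to every vertex of $T$. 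Consequently $\phi(v_T)$ is non-self-orthogonal and orthogonal to each $\phi(w)$ with $w \in T$, hence to all of $W_v$, and may be assigned to $v$.

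The main obstacle is the backward direction in the regime $|N_G(v)| > d$, where no single new vertex of $\KG$ directly captures the whole neighborhood $N_G(v)$, since subsets of size exceeding $d$ were deliberately excluded from Definition~\ref{def:K}. The resolution is the linear-algebraic observation above: the image of $N_G(v)$ under $\phi$ must span a subspace of $\Fset^d$ of dimension at most $d$, so one can always extract a basis subset $T$ of size at most $d$, and the corresponding gadget vertex $v_T$ already encodes the orthogonality constraints that $N_G(v)$ imposes. The parameter $m(\Fset,d)$ is precisely what handles the complementary low-dimensional regime $r < m$, which is why the range $[m,d]$ used in Definition~\ref{def:K} is exactly the right one.
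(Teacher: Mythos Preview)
Your proposal is correct and follows essentially the same route as the paper: Items~\ref{itm:K1}--\ref{itm:K3} are handled by the same bookkeeping, and for Item~\ref{itm:K4} both directions match the paper's argument, including the key observation in the backward direction that one splits on whether $\dim W_v < m$ (handled by the hypothesis $m \le m(\Fset,d)$) or $m \le \dim W_v \le d$ (handled by extracting a basis subset $T \subseteq N_G(v)$ and invoking the gadget vertex $v_T$).
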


\begin{proof}
Consider the graph $\KG = \KG(G,X,m,d)$ given in Definition~\ref{def:K}.
Since $X$ is a vertex cover of $G$, it immediately follows from the definition that every edge of $\KG$ is incident with a vertex of $X$, hence $X$ is a vertex cover of $\KG$, as required for Item~\ref{itm:K1}.
It further follows that the vertex set of $\KG$ consists of the vertices of $X$ and at most one vertex per every subset $S \subseteq X$ of size $m \leq |S| \leq d$. Since the number of those subsets is $\sum_{i=m}^{d}{\binom{k}{i}}$, the number of vertices in $\KG$ is at most $k + \sum_{i=m}^{d}{\binom{k}{i}}$, as required for Item~\ref{itm:K2}.
For Item~\ref{itm:K3}, notice that to encode the graph $\KG$, it suffices to specify the adjacencies in $\KG[X]$ and the existence of the vertex $v_S$ in $\KG$ for each $S \subseteq X$ of size $m \leq |S| \leq d$, hence $\KG$ can be encoded in $\binom{k}{2} + \sum_{i=m}^{d}{\binom{k}{i}}$ bits.

We turn to the proof of Item~\ref{itm:K4}.
Let $\Fset$ be a field with $m \leq m(\Fset,d)$.
Suppose first that $\od_\Fset(G) \leq d$, that is, there exists a $d$-dimensional orthogonal representation $(u_v)_{v \in V}$ of $G$ over $\Fset$.
We define a $d$-dimensional orthogonal representation of $\KG$ over $\Fset$ as follows.
First, we assign to each vertex $v \in X$ the vector $u_v$.
It clearly holds that every two vertices of $X$ that are adjacent in $\KG$ are assigned orthogonal vectors.
Next, for each vertex $v_S$ of $\KG$ with $S \subseteq X$ and $m \leq |S| \leq d$, there exists a vertex $v \in V \setminus X$ such that $S \subseteq N_G(v)$. We assign to $v_S$ the vector $u_v$ of such a vertex $v$. Notice that such a vector is orthogonal to all the vectors associated with the vertices of $S$, i.e., the neighbors of $v_S$ in $\KG$. This gives us a $d$-dimensional orthogonal representation of $\KG$ over $\Fset$, implying that $\od_\Fset(\KG) \leq d$.

For the other direction, suppose that $\od_\Fset(\KG) \leq d$. Letting $V'$ denote the vertex set of $\KG$, there exists a $d$-dimensional orthogonal representation $(u_v)_{v \in V'}$ of $\KG$ over $\Fset$.
We define a $d$-dimensional orthogonal representation of $G$ over $\Fset$ as follows.
First, we assign to each vertex $v \in X$ the vector $u_v$.
It clearly holds that every two vertices of $X$ that are adjacent in $G$ are assigned orthogonal vectors.
We next extend this assignment to the vertices of the independent set $V \setminus X$ of $G$.
Consider some vertex $v \in V \setminus X$, let $W = \linspan(\{u_{v'} \mid v' \in N_G(v)\})$, and notice that $v$ may be assigned any non-self-orthogonal vector of $\Fset^d$ that lies in $W^\perp$.
If $\dim(W) < m$, then by $m \leq m(\Fset,d)$, there exists a non-self-orthogonal vector in $W^\perp$, which can be assigned to the vertex $v$.
Otherwise, there exists a set of vertices $S \subseteq N_G(v)$ of size $m \leq |S| \leq d$ whose vectors form a basis of $W$, that is, $W = \linspan(\{ u_{v'} \mid v' \in S\})$. By the definition of the graph $\KG$, it includes the vertex $v_S$, and its vector is orthogonal to the vectors $u_{v'}$ with $v' \in S$, and thus lies in $W^\perp$. This yields the existence of the desired vector for $v$, so we are done.
\end{proof}

With Lemma~\ref{lemma:K} at hand, we are ready to prove Theorem~\ref{thm:FPT_gen}.

\begin{proof}[ of Theorem~\ref{thm:FPT_gen}]
The input of the $\ODP_\Fset$ problem on $\Empty+k\mathrm{v}$ graphs consists of a graph $G$, a vertex cover $X$ of $G$ of size $|X|=k$, and an integer $d$.
Consider the algorithm that given such an input acts as follows.
If $d > k$ then the algorithm accepts.
Otherwise, the algorithm calls an algorithm for the $\ODP_\Fset$ problem on the input $(\KG,d)$, where $\KG=\KG(G,X,1,d)$ is the graph given in Definition~\ref{def:K}, and returns its answer.
Note that we use here the assumption that the $\ODP_\Fset$ problem is decidable.

For correctness, observe first that the input graph $G$ is $(k+1)$-colorable, as follows by assigning $k$ distinct colors to the vertices of the vertex cover $X$ and another color to the vertices of the independent set $V \setminus X$. It thus follows, using Claim~\ref{claim:OD<=CHI}, that $\od_\Fset(G) \leq \chi(G) \leq k+1$. Therefore, if $d >k$, then it holds that $\od_\Fset(G) \leq d$, hence our algorithm correctly accepts. Otherwise, the algorithm calls an algorithm for $\ODP_\Fset$ on the input $(\KG,d)$. The correctness of its answer follows from Item~\ref{itm:K4} of Lemma~\ref{lemma:K}, which guarantees that $\od_\Fset(G) \leq d$ if and only if $\od_\Fset(\KG) \leq d$.

We finally analyze the running time of the algorithm.
On instances with $d>k$, the algorithm is clearly efficient.
For instances with $d \leq k$, by Item~\ref{itm:K2} of Lemma~\ref{lemma:K}, the number of vertices in $\KG$ is $O(k^d) \leq O(k^k)$.
Using the decidability of $\ODP_\Fset$, this implies that the running time of the algorithm is bounded by $f(k) \cdot n^{O(1)}$ for some computable function $f$, where $n$ stands for the input size. Therefore, the $\ODP_\Fset$ problem on $\Empty+k\mathrm{v}$ graphs is fixed-parameter tractable.
\end{proof}

In order to apply Theorem~\ref{thm:FPT_gen} to the real field $\R$, one has to show that the $\ODP_\R$ problem is decidable.
We obtain this result using the problem of the existential theory of the reals, in which the input is a collection of equalities and inequalities of polynomials over the reals, and the goal is to decide whether there exists an assignment of real values to the variables satisfying all the constraints. In 1951, Tarski~\cite{Tarski51} proved that the problem is decidable. His result was strengthened in 1988 by Canny~\cite{Canny88}, who proved that it actually lies in the complexity class $\PSPACE$.
We derive the following simple consequence.

\begin{proposition}\label{prop:OD_R_PSPACE}
The $\ODP_\R$ problem lies in $\PSPACE$.
\end{proposition}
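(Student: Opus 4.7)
The plan is to reduce $\ODP_\R$ to the existential theory of the reals and invoke Canny's theorem~\cite{Canny88}. Given an input $(G,d)$ with $G=(V,E)$, I first observe that if $d \geq |V|$ then we may immediately accept: the standard basis vectors $e_1, \ldots, e_{|V|}$ of $\R^{|V|}$ (padded with zeros, if needed, to reach dimension $d$) yield an orthogonal representation of $G$, so $\od_\R(G) \leq |V| \leq d$. This allows us to assume $d < |V|$, so in what follows the dimension $d$ is bounded by the number of vertices and the forthcoming system has size polynomial in the input.

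Next, I encode the question ``$\od_\R(G) \leq d$?'' as an existential first-order sentence over the reals. Introduce $d \cdot |V|$ real variables $\{u_{v,i} \mid v \in V,\, i \in [d]\}$, interpreting $u_v := (u_{v,1}, \ldots, u_{v,d}) \in \R^d$ as the vector to be assigned to vertex $v$. The defining constraints of a $d$-dimensional orthogonal representation translate into the polynomial formula
\[
\varphi \;=\; \Bigl(\bigwedge_{v \in V}\, \sum_{i=1}^{d} u_{v,i}^{2} \neq 0\Bigr) \;\wedge\; \Bigl(\bigwedge_{\{v,v'\} \in E}\, \sum_{i=1}^{d} u_{v,i} \cdot u_{v',i} = 0\Bigr),
\]
where the disequalities can be expressed in the standard ETR vocabulary (equivalently, as strict inequalities $\sum_i u_{v,i}^2 > 0$ over $\R$). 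By Definition~\ref{def:OD}, the sentence $\exists (u_{v,i})\, \varphi$ is true if and only if $G$ admits a $d$-dimensional orthogonal representation over $\R$, i.e., if and only if $\od_\R(G) \leq d$.

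The formula $\varphi$ has $d \cdot |V|$ variables, $|V| + |E|$ atomic polynomial constraints, and each polynomial has degree at most $2$ and a sum of at most $d$ terms, so the whole instance is of size polynomial in $|V| + |E| + d$, which under our reduction $d < |V|$ is polynomial in the size of the input $(G,d)$. Since by the Canny~\cite{Canny88} strengthening of Tarski~\cite{Tarski51} the existential theory of the reals is decidable in $\PSPACE$, running such a decision procedure on $\varphi$ places $\ODP_\R$ in $\PSPACE$. There is no real obstacle here beyond checking that disequalities $\neq 0$ are a standard part of the ETR vocabulary (otherwise one introduces, for each $v$, an auxiliary variable $t_v$ and the equality $t_v \cdot \sum_i u_{v,i}^2 = 1$), which is a routine transformation that preserves polynomial size.
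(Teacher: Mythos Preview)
Your proof is correct and follows essentially the same approach as the paper: both reduce $\ODP_\R$ to the existential theory of the reals via the obvious polynomial system (non-self-orthogonality constraints on vertices, orthogonality constraints on edges) and then invoke Canny's $\PSPACE$ result. Your additional handling of the case $d \geq |V|$ is a nice touch that the paper glosses over, ensuring the constructed formula is genuinely polynomial in the input size even when $d$ is encoded in binary.
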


\begin{proof}
It is sufficient to show that the $\ODP_\R$ problem is reducible in polynomial time to the problem of the existential theory of the reals, which lies in $\PSPACE$~\cite{Canny88}.
Consider the reduction that given a graph $G = (V,E)$ and an integer $d$ produces a collection $P_G$ of polynomial constraints over the reals defined as follows.
For each vertex $v \in V$, let $x_{v,1}, \ldots, x_{v,d}$ denote $d$ variables associated with $v$.
For each vertex $v \in V$, add to $P_G$ the inequality $\sum_{i=1}^{d}{x_{v,i}^2} \neq 0$, and for each edge $\{v,v'\} \in E$, add to $P_G$ the equality $\sum_{i=1}^d{x_{v,i} \cdot x_{v',i}} = 0$. The reduction returns the collection $P_G$, which can clearly be computed in polynomial time.
Observe that $\od_\R(G) \leq d$ if and only if there exists an assignment over the reals satisfying the constraints of $P_G$, implying the correctness of the reduction.
\end{proof}

Proposition~\ref{prop:OD_R_PSPACE} implies that the $\ODP_\R$ problem is decidable.
Using Theorem~\ref{thm:FPT_gen}, we obtain the following corollary, which combined with Theorem~\ref{thm:FPT_finite}, confirms Theorem~\ref{thm:IntroFPT}.
\begin{corollary}\label{cor:FPT_R}
The $\ODP_\R$ problem on $\Empty+k\mathrm{v}$ graphs is fixed-parameter tractable.
\end{corollary}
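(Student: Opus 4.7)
The plan is to obtain the corollary as an immediate consequence of the two results just established, namely Theorem~\ref{thm:FPT_gen} and Proposition~\ref{prop:OD_R_PSPACE}. The first of these furnishes a general fixed-parameter algorithm for $\ODP_\Fset$ on $\Empty+k\mathrm{v}$ graphs whenever $\ODP_\Fset$ is decidable, while the second certifies exactly the decidability hypothesis needed to invoke it for the real field.

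First, I would observe that Proposition~\ref{prop:OD_R_PSPACE} places $\ODP_\R$ in $\PSPACE$; in particular, $\ODP_\R$ is decidable. Then I would apply Theorem~\ref{thm:FPT_gen} with $\Fset = \R$, which immediately yields that the $\ODP_\R$ problem on $\Empty+k\mathrm{v}$ graphs is fixed-parameter tractable. Concretely, on an input $(G,X,d)$ with $|X|=k$, the algorithm behind Theorem~\ref{thm:FPT_gen} either accepts right away when $d>k$ (justified by Claim~\ref{claim:OD<=CHI} and the trivial $(k+1)$-coloring of $G$), or otherwise builds the graph $\KG = \KG(G,X,1,d)$ of Definition~\ref{def:K}, whose size is bounded by a function of $k$ (Item~\ref{itm:K2} of Lemma~\ref{lemma:K}), and then invokes a decision procedure for $\ODP_\R$ on the instance $(\KG,d)$. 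The correctness of this reduction is supplied by Item~\ref{itm:K4} of Lemma~\ref{lemma:K}, which applies here because $m=1 \leq m(\R,d)$ by Remark~\ref{remark:m}.

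Since there is no genuine obstacle beyond assembling the two cited results, I do not anticipate a hard step; the only subtlety worth mentioning explicitly is the verification that the hypothesis of Theorem~\ref{thm:FPT_gen} is met, which follows directly from Proposition~\ref{prop:OD_R_PSPACE} via the containment $\PSPACE \subseteq \mathsf{DECIDABLE}$. The resulting proof should therefore be a single short paragraph citing these two facts.
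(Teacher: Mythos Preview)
Your proposal is correct and matches the paper's approach exactly: the paper derives the corollary in one line by noting that Proposition~\ref{prop:OD_R_PSPACE} gives decidability of $\ODP_\R$ and then invoking Theorem~\ref{thm:FPT_gen}. Your additional unpacking of the algorithm behind Theorem~\ref{thm:FPT_gen} is harmless extra detail but not needed for the proof.
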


\section{Kernelization for \texorpdfstring{$d\dODP_\Fset$}{d-Ortho-Dim F} Parameterized by Vertex Cover}\label{sec:KernelVC}

We consider now the $d\dODP_\Fset$ problem for a fixed constant $d$ and study its kernelizability when parameterized by the vertex cover number (recall Definition~\ref{def:OD_P}).
We first leverage our discussion from the previous section to derive Theorem~\ref{thm:IntroUpperF}, namely, to show that for every field $\Fset$ and for every integer $d \geq 3$, the $d\dODP_\Fset$ problem on $\Empty+k\mathrm{v}$ graphs admits a kernel with $O(k^d)$ vertices and bit-size $O(k^d)$.

\begin{proof}[ of Theorem~\ref{thm:IntroUpperF}]
Fix a field $\Fset$ and an integer $d \geq 3$.
The input of $d\dODP_\Fset$ on $\Empty+k\mathrm{v}$ graphs consists of a graph $G$ and a vertex cover $X$ of $G$ of size $|X|=k$.
Consider the algorithm that given such an input returns the pair $(\KG,X)$, where $\KG = \KG(G,X,1,d)$ is the graph from Definition~\ref{def:K}.
Since $d$ is a fixed constant, the algorithm can be implemented in polynomial time.
By Lemma~\ref{lemma:K}, the set $X$ forms a vertex cover of $\KG$, the graph $\KG$ has $O(k^d)$ vertices and bit-size $O(k^d)$, and the instances $(G,X)$ and $(\KG,X)$ are equivalent. This completes the proof.
\end{proof}

For the real field $\R$, we prove Theorem~\ref{thm:IntroUpperR}, which improves on the kernel provided by Theorem~\ref{thm:IntroUpperF} to $O(k^{d-1})$ vertices and bit-size $O(k^{d-1} \cdot \log k)$. We start with a couple of auxiliary lemmas.

\begin{lemma}\label{lemma:1}
For every integer $d$, if a graph has a $d$-dimensional orthogonal representation over $\R$, then it has a $d$-dimensional orthogonal representation over $\R$, all of whose vectors have $1$ as their first entry.
\end{lemma}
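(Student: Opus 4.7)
The plan is to compose two elementary operations on a given $d$-dimensional orthogonal representation: a rotation that ensures every vector has nonzero first entry, and a per-vertex scaling that normalizes this first entry to $1$. Start from any $d$-dimensional orthogonal representation $(u_v)_{v \in V}$ of the graph $G=(V,E)$ over $\R$. Since each $u_v$ satisfies $\langle u_v, u_v \rangle \neq 0$ and we are working over $\R$, each $u_v$ is in particular nonzero, so its orthogonal complement $u_v^\perp$ is a proper $(d-1)$-dimensional subspace of $\R^d$.

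Next I would select a good direction to serve as the first coordinate axis. Since $V$ is finite, the union $\bigcup_{v \in V} u_v^{\perp}$ is a finite union of proper subspaces of $\R^d$ and therefore a proper subset of $\R^d$; hence there exists a unit vector $q_1 \in \R^d$ satisfying $\langle q_1, u_v \rangle \neq 0$ for every $v \in V$. I would then extend $q_1$ to an orthonormal basis $q_1, q_2, \ldots, q_d$ of $\R^d$ and let $Q$ be the orthogonal $d \times d$ matrix whose $i$-th row is $q_i$. Because $Q$ is orthogonal, $\langle Q u, Q u' \rangle = \langle u, u' \rangle$ for all $u,u' \in \R^d$, so the assignment $v \mapsto Q u_v$ is again a $d$-dimensional orthogonal representation of $G$ over $\R$: adjacency-orthogonality and non-self-orthogonality are inherited from $(u_v)_{v \in V}$. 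By the choice of $q_1$, the first entry of $Q u_v$, which equals $\langle q_1, u_v \rangle$, is nonzero for every $v$.

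Finally, for each $v \in V$, I would replace $Q u_v$ by the scaled vector $(Q u_v) / (Q u_v)_1$. Multiplying a vector by a nonzero scalar $\alpha$ multiplies every inner product in which it participates by $\alpha$, so orthogonality between adjacent vertices and non-self-orthogonality of each vector are preserved. The resulting assignment is therefore a $d$-dimensional orthogonal representation of $G$ over $\R$ whose vectors all have $1$ in the first entry, as required.

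There is essentially no serious obstacle here; the only point that requires any argument is the existence of the direction $q_1$, and this is immediate from the observation that a finite union of proper linear subspaces of $\R^d$ cannot cover $\R^d$ (equivalently, it has measure zero in $\R^d$). Everything else amounts to verifying that orthogonal transformations and coordinate-wise nonzero scalings preserve the defining conditions of a $d$-dimensional orthogonal representation.
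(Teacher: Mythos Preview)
Your proof is correct and follows essentially the same approach as the paper: find a direction $q_1$ with $\langle q_1, u_v\rangle \neq 0$ for all $v$, apply an orthogonal transformation making $q_1$ the first coordinate axis, and then rescale. The only cosmetic difference is that the paper establishes the existence of such a direction by a probabilistic argument (a random vector in $[2n]^d$), whereas you use the fact that a finite union of proper subspaces cannot cover $\R^d$; both are standard and interchangeable.
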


\begin{proof}
The proof applies the probabilistic method.
Let $d$ be an integer, let $G=(V,E)$ be a graph, and set $n=|V|$.
Suppose that there exists a $d$-dimensional orthogonal representation $(u_v)_{v \in V}$ of $G$ over $\R$.
Let $a \in [2n]^d$ be a random $d$-dimensional vector, such that each entry of $a$ is chosen from $[2n]$ uniformly at random.
We observe that for every fixed nonzero vector $u \in \R^d$, it holds that $\langle a,u \rangle = 0$ with probability at most $\frac{1}{2n}$.
Indeed, letting $i \in [d]$ be an index with $u_i \neq 0$, for every fixed choice of the values of $a_j$ with $j \in [d] \setminus \{i\}$, there is at most one value of $a_i$ in $[2n]$ for which it holds that $\langle a,u \rangle = 0$.
By the union bound, it follows that the probability that there exists a vertex $v \in V$ such that $\langle a, u_v \rangle = 0$ is at most $n \cdot \frac{1}{2n} = \frac{1}{2}$. In particular, there exists a vector $a \in [2n]^d$ satisfying $\langle a, u_v \rangle \neq 0$ for all $v \in V$. Let us fix such a vector $a$.

Now, let $M \in \R^{d \times d}$ be some orthonormal matrix (i.e., a matrix satisfying $M \cdot M^t = I_d$) whose first row is the vector $a$ scaled to have Euclidean norm $1$, i.e., $a/\|a\|$.
We assign to each vertex $v \in V$ of the graph $G$ the vector $M \cdot u_v$.
Since $M$ is orthonormal, it preserves inner products, hence this assignment forms a $d$-dimensional orthogonal representation of $G$ over $\R$.
Additionally, for every vertex $v \in V$, the first entry of the vector $M \cdot u_v$ is nonzero, because $\langle a,u_v \rangle \neq 0$.
By scaling, one can obtain a $d$-dimensional orthogonal representation of $G$ over $\R$, all of whose vectors have $1$ as their first entry, as required.
\end{proof}

Before we state the next lemma, we need a brief preparation.
For a field $\Fset$, a polynomial in $\Fset[x_1,\ldots,x_n]$ is called homogeneous of degree $d$ if each of its monomials has degree $d$. Note that the zero polynomial is homogeneous of degree $d$ for every $d \geq 0$.
A monomial is called multilinear if it forms a product of distinct variables, and a polynomial is called multilinear if it forms a linear combination of multilinear monomials.
For example, the determinant of $d \times d$ matrices over a field $\Fset$, viewed as a polynomial on $d^2$ variables, is multilinear and homogeneous of degree $d$.
Moreover, it is a linear combination of $d!$ monomials, each of which forms a product of $d$ variables, one taken from each row of the matrix.
Note that the dimension over $\Fset$ of the vector space of multilinear homogeneous polynomials of degree $d$ in $\Fset[x_1, \ldots, x_n]$ is $\binom{n}{d}$.

\begin{lemma}\label{lemma:det}
For every integer $d$, there exists a multilinear homogeneous polynomial $p: \R^{d \times d} \rightarrow \R$ of degree $d-1$, defined on $d^2$ variables corresponding to the entries of a $d \times d$ matrix, such that for every matrix $M \in \R^{d \times d}$ whose first row is the all-one vector, it holds that $p(M)=0$ if and only if there exists a nonzero vector in $\R^d$ that is orthogonal to all columns of $M$.
\end{lemma}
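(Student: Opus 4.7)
The plan is to define $p$ by Laplace expansion of the determinant along the first row, treating the first-row entries as constants. Explicitly, for a $d \times d$ indeterminate matrix $M = (M_{i,j})_{i,j \in [d]}$, I set
\[
p(M) := \sum_{j=1}^{d} (-1)^{1+j} \det(M_{1j}),
\]
where $M_{1j}$ denotes the $(d-1) \times (d-1)$ submatrix obtained from $M$ by deleting row $1$ and column $j$. Formally $p$ is viewed as a polynomial in the $d^2$ variables of $M$; it just happens to ignore the $d$ variables of the first row.

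Next, I would verify the three structural properties of $p$. Each $\det(M_{1j})$ is a signed sum of products of the form $\prod_{i=2}^{d} M_{i,\sigma(i)}$, where $\sigma$ ranges over the bijections from $\{2,\ldots,d\}$ to $[d] \setminus \{j\}$; hence $\det(M_{1j})$ is multilinear and homogeneous of degree $d-1$. For $j \neq j'$, every monomial appearing in $\det(M_{1j})$ involves the column index $j'$ (but not $j$), while every monomial in $\det(M_{1j'})$ involves $j$ (but not $j'$), so monomials from distinct cofactors have distinct variable supports and cannot cancel or combine. Therefore $p$ is multilinear and, as a sum of polynomials that are homogeneous of the common degree $d-1$, it is itself homogeneous of degree $d-1$.

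For the characterizing property, let $M \in \R^{d \times d}$ be any matrix whose first row is the all-one vector. By the Laplace expansion of $\det(M)$ along the first row,
\[
\det(M) = \sum_{j=1}^{d} (-1)^{1+j} \cdot M_{1,j} \cdot \det(M_{1j}) = \sum_{j=1}^{d} (-1)^{1+j} \det(M_{1j}) = p(M).
\]
Consequently, $p(M) = 0$ if and only if $\det(M) = 0$, which is equivalent to $M$ being singular, i.e., to the existence of a nonzero vector $v \in \R^d$ with $M^T v = 0$. The latter condition is precisely that $v$ is orthogonal to every column of $M$, establishing the desired equivalence.

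I do not anticipate a real obstacle. The only subtle point is the multilinearity of the sum, which is handled by the disjoint-supports argument above; all remaining claims follow directly from cofactor expansion and the standard fact that a square matrix is singular if and only if its transpose has a nontrivial kernel.
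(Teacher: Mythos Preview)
Your proof is correct and follows essentially the same approach as the paper: both obtain $p$ by substituting $1$ for the first-row variables in the determinant (the paper says this directly, while you phrase it via the Laplace expansion along the first row), and both conclude via $p(M)=\det(M)$ for matrices with all-one first row. The disjoint-supports argument you give for multilinearity is correct but more than needed, since a sum of multilinear polynomials is automatically multilinear.
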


\begin{proof}
For an integer $d$, consider the determinant polynomial $\det : \R^{d \times d} \rightarrow \R$.
It is well known that for every matrix $M \in \R^{d \times d}$, it holds that $\det(M) = 0$ if and only if the columns of $M$ span a subspace of dimension smaller than $d$, and that this condition is equivalent to the existence of a nonzero vector in $\R^d$ that is orthogonal to all columns of $M$.
Recall that $\det$ is a multilinear polynomial, with each monomial being a product of $d$ variables, each selected from a different row of the matrix.
Let $p: \R^{d \times d} \rightarrow \R$ be the polynomial obtained from $\det$ by substituting $1$ for the variables that correspond to the first row of the matrix, and observe that $p$ is a multilinear homogeneous polynomial of degree $d-1$.
Note that although $p$ is defined on $d^2$ variables, it actually depends on only $d^2-d$ of them.
We finally observe that for every matrix $M \in \R^{d \times d}$ whose first row is the all-one vector, it holds that $p(M)=0$ if and only if there exists a nonzero vector in $\R^d$ that is orthogonal to all columns of $M$.
This completes the proof.
\end{proof}

We are ready to prove Theorem~\ref{thm:IntroUpperR}, providing a kernel with $O(k^{d-1})$ vertices and bit-size $O(k^{d-1} \cdot \log k)$ for the $d\dODP_\R$ problem on $\Empty+k\mathrm{v}$ graphs for all integers $d \geq 3$.

\begin{proof}[ of Theorem~\ref{thm:IntroUpperR}]
Fix an integer $d \geq 3$.
The input of $d\dODP_\R$ on $\Empty+k\mathrm{v}$ graphs consists of a graph $G=(V,E)$ and a vertex cover $X \subseteq V$ of $G$ of size $|X|=k$.
Consider the algorithm that given such an input acts in two phases, as described next.

In the first phase, the algorithm constructs the graph $G' = \KG(G,X,d,d)$ given in Definition~\ref{def:K}.
Let $V'$ denote the vertex set of $G'$, and recall that every vertex $v_S \in V' \setminus X$ is associated with some set $S \subseteq X$ of size $|S|=d$ such that $N_{G'}(v_S)=S$. By Lemma~\ref{lemma:K}, the set $X$ is a vertex cover of $G'$, and it holds that $|V'| \leq k+\binom{k}{d}$.

In the second phase, the algorithm constructs a graph $G''$.
To do so, the algorithm first associates with each vertex $v \in X$ a $d$-dimensional vector $x_v$ of variables over $\R$. Note that the total number of variables is $k \cdot d$.
For each vertex $v_S \in V' \setminus X$, we apply Lemma~\ref{lemma:det} to obtain a multilinear homogeneous polynomial $p_S$ of degree $d-1$, defined on the $d^2$ variables associated with the $d$ neighbors of $v_S$ in $G'$ (which all lie in $X$). The polynomial $p_S$ satisfies that for every assignment $M \in \R^{d \times d}$ to its variables with first row equal to the all-one vector, it holds that $p_S(M)=0$ if and only if there exists a nonzero vector in $\R^d$ that is orthogonal to all columns of $M$.
Let $P = \linspan(\{ p_S ~|~ v_S \in V' \setminus X\})$ denote the subspace spanned by the polynomials associated with the vertices of $V' \setminus X$.
The algorithm proceeds by finding a set $Y \subseteq V' \setminus X$, such that the polynomials associated with the vertices of $Y$ form a basis for $P$. Note that $P$ is contained in the vector space of multilinear homogeneous polynomials of degree $d-1$ on $k \cdot d$ variables. Since the dimension of the latter is $\binom{k \cdot d}{d-1}$, recalling that $d$ is a fixed constant, it follows that $|Y| \leq \binom{k \cdot d}{d-1} \leq O(k^{d-1})$.
Letting $V'' = X \cup Y$, the algorithm returns the graph $G'' = G'[V'']$ and the set $X$, which forms a vertex cover of $G''$ because it forms a vertex cover of $G'$.

The number of vertices in $G''$ is $|V''| = |X|+|Y| \leq k + O(k^{d-1}) = O(k^{d-1})$.
The number of edges in $G''[X]$ is at most $\binom{k}{2}$, and since the degree of each vertex of $Y$ is $d$, the number of edges in $G''$ that involve vertices of $Y$ is $d \cdot |Y|$. It follows that the total number of edges in $G''$ is at most $\binom{k}{2} + d \cdot O(k^{d-1}) \leq O(k^{d-1})$.
Therefore, the number of bits required to encode the edges of $G''$ is at most $O(k^{d-1} \cdot \log |V''|) \leq O(k^{d-1} \cdot \log k)$, as required.

It is not difficult to verify that the algorithm can be implemented in polynomial time.
Note that the set $Y$ can be calculated in polynomial time by applying Gaussian elimination with $\binom{k \cdot d}{d-1}$ variables.

For the correctness of the algorithm, we shall prove that $\od_\R(G) \leq d$ if and only if $\od_\R(G'') \leq d$.
By Item~\ref{itm:K4} of Lemma~\ref{lemma:K}, using $m(\R,d)=d$ (see Remark~\ref{remark:m}), it holds that $\od_\R(G) \leq d$ if and only if $\od_\R(G') \leq d$.
It thus suffices to show that $\od_\R(G') \leq d$ if and only if $\od_\R(G'') \leq d$.

It obviously holds that if $\od_\R(G') \leq d$ then $\od_\R(G'') \leq d$, because $G'$ contains $G''$ as a subgraph.
For the converse, suppose that $\od_\R(G'') \leq d$, that is, there exists a $d$-dimensional orthogonal representation of $G''$ over $\R$.
By Lemma~\ref{lemma:1}, it further follows that there exists a $d$-dimensional orthogonal representation $(u_v)_{v \in V''}$ of $G''$ over $\R$, such that every vector $u_v$ has $1$ as its first entry.
For each vertex $v \in X$, assign the vector $u_v$ to the vertex $v$ as well as to the variables of the vector $x_v$ associated with $v$.
We will show that this assignment to the vertices of $X$ can be extended to an orthogonal representation of $G'$ over $\R$.
Indeed, for every vertex $v_S \in Y$ of $G''$, the nonzero vector $u_{v_S}$ is orthogonal to the vectors of the vertices of $S$. This implies, using Lemma~\ref{lemma:det} and the fact that the first entries of the vectors $x_v$ with $v \in X$ are all $1$, that the polynomial $p_S$ vanishes on this assignment. Since the polynomials $p_S$ with $v_S \in Y$ form a basis of the subspace $P$, it follows that all the polynomials $p_S$ associated with the vertices $v_S \in V' \setminus X$ vanish on this assignment as well. Using Lemma~\ref{lemma:det} again, we obtain that for each vertex $v_S \in V' \setminus X$, there exists a nonzero vector that is orthogonal to the vectors of the vertices of $S$, and these are precisely the neighbors of $v_S$ in $G'$. This gives us a $d$-dimensional orthogonal representation of $G'$ over $\R$, which yields that $\od_\R(G') \leq d$, concluding the proof.
\end{proof}

\section{Lower Bounds}\label{sec:lower}

In this section, we prove our lower bounds on the kernel size of orthogonality dimension problems.
We first present the gadget graph that will be used in the proofs.

\subsection{Gadget Graph}

A key ingredient in the proofs of our lower bounds is the following lemma, which generalizes a construction of~\cite{Peeters96} (corresponding to the case of $d=3$).
Here, two nonzero vectors $u_1,u_2 \in \Fset^d$ are said to be proportional if there exists some $\alpha \in \Fset$ such that $u_1 = \alpha \cdot u_2$.

\begin{lemma}\label{lemma:gadget}
For an integer $d \geq 3$, let $C_{2d}$ denote the cycle graph on $2d$ vertices, let $x_0$ and $x_1$ denote two adjacent vertices in the cycle, and let $H = \overline{C_{2d}}$ denote its complement graph.
\begin{enumerate}
  \item\label{itm:same} There exists a proper $d$-coloring of $H$ that assigns to $x_0$ and $x_1$ the same color.
  \item\label{itm:distinct} There exists a proper $d$-coloring of $H$ that assigns to $x_0$ and $x_1$ distinct colors.
  \item\label{itm:od} For every field $\Fset$ and for every $d$-dimensional orthogonal representation of $H$ over $\Fset$, the vectors assigned to $x_0$ and $x_1$ are either orthogonal or proportional.
\end{enumerate}
\end{lemma}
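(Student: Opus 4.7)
The plan is to index the vertices of $C_{2d}$ as $v_0, v_1, \ldots, v_{2d-1}$ in cyclic order with $x_0 = v_0$ and $x_1 = v_1$, so that edges of $C_{2d}$ are exactly the cyclically consecutive pairs and two vertices are adjacent in $H = \overline{C_{2d}}$ precisely when their indices differ cyclically by at least $2$. Any color class in a proper coloring of $H$ is then a set of cyclically consecutive cycle vertices, and since three consecutive cycle vertices contain a non-adjacent pair, every color class has size at most two. With $2d$ vertices and at most $d$ colors available, a proper $d$-coloring must correspond to a perfect matching of $C_{2d}$; the cycle $C_{2d}$ has exactly two such matchings, one containing $\{v_0, v_1\}$ (which furnishes Item~\ref{itm:same}) and one containing the edges $\{v_{2d-1}, v_0\}$ and $\{v_1, v_2\}$ (which furnishes Item~\ref{itm:distinct}).

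For Item~\ref{itm:od}, I fix a $d$-dimensional orthogonal representation $(u_{v_i})_{i=0}^{2d-1}$ of $H$ over $\Fset$. The $d$ even-indexed vertices $v_0, v_2, \ldots, v_{2d-2}$ are pairwise non-consecutive in $C_{2d}$, hence pairwise adjacent in $H$, so their vectors are pairwise orthogonal; being non-self-orthogonal, a standard inner-product argument shows they are linearly independent and thus form an orthogonal basis of $\Fset^d$. For each $k \in \{0, \ldots, d-1\}$, the vertex $v_{2k+1}$ is non-adjacent in $C_{2d}$ only to $v_{2k}$ and $v_{2k+2}$ among the even-indexed vertices, so $u_{v_{2k+1}}$ is orthogonal to every other even-indexed basis vector, forcing $u_{v_{2k+1}} \in \linspan(u_{v_{2k}}, u_{v_{2k+2}})$. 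Accordingly, I write $u_{v_{2k+1}} = a_k u_{v_{2k}} + b_k u_{v_{2k+2}}$, with the indices on $a_k$ and $b_k$ read modulo $d$.

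Since $v_{2k+1}$ and $v_{2k+3}$ differ cyclically by $2$, they are $H$-adjacent, so $\langle u_{v_{2k+1}}, u_{v_{2k+3}} \rangle = 0$. Expanding this inner product using pairwise orthogonality and non-self-orthogonality of the even-indexed basis collapses it to $b_k a_{k+1} \langle u_{v_{2k+2}}, u_{v_{2k+2}} \rangle = 0$, yielding the scalar equations $b_k a_{k+1} = 0$ for all $k \in \{0, \ldots, d-1\}$ (the case $k = d-1$ is the wrap-around between $v_{2d-1}$ and $v_1$, which gives $b_{d-1} a_0 = 0$). A dichotomy on $b_0$ then finishes the argument. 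If $b_0 = 0$, then $u_{v_1} = a_0 u_{v_0}$, and non-self-orthogonality of $u_{v_1}$ forces $a_0 \neq 0$, so $u_{v_0}$ and $u_{v_1}$ are proportional. If $b_0 \neq 0$, then $a_1 = 0$ and $u_{v_3} = b_1 u_{v_4}$ with $b_1 \neq 0$; iterating around the cycle produces $a_k = 0$ and $b_k \neq 0$ for every $k \in \{1, \ldots, d-1\}$, and the wrap-around relation $b_{d-1} a_0 = 0$ yields $a_0 = 0$. Hence $u_{v_1} = b_0 u_{v_2}$, and since $v_0$ and $v_2$ are non-consecutive in $C_{2d}$, $\langle u_{v_0}, u_{v_2} \rangle = 0$ gives $\langle u_{v_0}, u_{v_1} \rangle = 0$.

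I do not foresee a serious technical obstacle: the whole argument reduces to a single bilinear identity $b_k a_{k+1} = 0$ together with the cyclic closure, which converts the local structure of $C_{2d}$ into the promised orthogonal-or-proportional dichotomy. The only care needed is to consistently invoke the non-self-orthogonality of each even-indexed vector both to obtain the basis property and to strip the scalar $\langle u_{v_{2k+2}}, u_{v_{2k+2}} \rangle$ from the derived equations, so that the conclusion holds over every field $\Fset$.
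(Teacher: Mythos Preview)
Your proof is correct and follows essentially the same approach as the paper: exhibit the two perfect matchings of $C_{2d}$ for Items~\ref{itm:same} and~\ref{itm:distinct}, and for Item~\ref{itm:od} use the even-indexed vectors as an orthogonal basis of $\Fset^d$ to force each odd-indexed vector into the span of its two even neighbors. The only difference is in the endgame of Item~\ref{itm:od}: you iterate the relation $b_k a_{k+1}=0$ all the way around the cycle, whereas the paper also expands $u_{x_2}$ in the odd-indexed basis and finishes with a purely local argument involving only the vertices $x_0,\ldots,x_4$.
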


\begin{proof}
Let $x_0$ and $x_1$ be two adjacent vertices in the cycle graph $C_{2d}$, and let $x_0, x_1, \ldots, x_{2d-1}$ denote its $2d$ vertices ordered along the cycle. Throughout the proof, all the indices $i$ of the vertices $x_i$ are considered modulo $2d$.
For Item~\ref{itm:same}, consider the coloring of $H$ that for each $0 \leq i < d$ assigns the color $i$ to the vertices $x_{2i}$ and $x_{2i+1}$.
Notice that this assignment forms a proper $d$-coloring of $H$ and that it assigns to $x_0$ and $x_1$ the same color $0$.
For Item~\ref{itm:distinct}, consider the coloring of $H$ that for each $0 \leq i < d$ assigns the color $i$ to the vertices $x_{2i-1}$ and $x_{2i}$.
Notice that this assignment forms a proper $d$-coloring of $H$ and that it assigns to $x_0$ and $x_1$ the distinct colors $0$ and $1$ respectively.

For Item~\ref{itm:od}, let $\Fset$ be a field, and consider a $d$-dimensional orthogonal representation of $H$ over $\Fset$ assigning a vector $u_i \in \Fset^d$ to each vertex $x_i$. Note that for all indices $i$ and $j$ with $j \notin \{i-1,i,i+1\}$, the vertices $x_i$ and $x_j$ are adjacent in $H$, hence it holds that $\langle u_i,u_j \rangle = 0$.

We first claim that for each $i$, the vector $u_i$ is a linear combination of the vectors $u_{i-1}$ and $u_{i+1}$.
By symmetry, it suffices to prove it for $i=1$.
The vertices $x_0, x_2, \ldots, x_{2d-2}$ of even indices form a clique of size $d$ in $H$, hence their vectors $u_0, u_2, \ldots, u_{2d-2}$ are pairwise orthogonal. Since they are non-self-orthogonal, it follows that they are linearly independent and thus span the entire vector space $\Fset^d$. Therefore, there exist some coefficients $\alpha_0, \ldots, \alpha_{d-1} \in \Fset$ such that $u_1 = \sum_{0 \leq j < d}{\alpha_j \cdot u_{2j}}$.
It follows that for every integer $k$ with $1 < k < d$, it holds that
\[0 = \langle u_1, u_{2k} \rangle = \sum_{0 \leq j < d}{\alpha_j \cdot \langle u_{2j},u_{2k} \rangle } = \alpha_{k} \cdot \langle u_{2k},u_{2k} \rangle,\]
which using $\langle u_{2k},u_{2k} \rangle \neq 0$ implies that $\alpha_{k} = 0$. We thus obtain that $u_1 = \alpha_0 \cdot u_0 + \alpha_1 \cdot u_2$, so $u_1$ is a linear combination of $u_0$ and $u_2$, as required.

Now, we prove that the vectors $u_0$ and $u_1$, which are assigned to the vertices $x_0$ and $x_1$ respectively, are either orthogonal or proportional.
The above discussion implies that there exist some coefficients $\alpha_0, \alpha_1, \beta_0, \beta_1, \gamma_0, \gamma_1 \in \Fset$ such that
\[u_1 = \alpha_0 \cdot u_0 + \alpha_1 \cdot u_2,~~~u_2 = \beta_0 \cdot u_1 + \beta_1 \cdot u_3,\mbox{~~~and~~~}u_3 = \gamma_0 \cdot u_2 + \gamma_1 \cdot u_4.\]
It follows that \[0 = \langle u_1, u_3 \rangle = \langle \alpha_0 \cdot u_0 + \alpha_1 \cdot u_2, \gamma_0 \cdot u_2 + \gamma_1 \cdot u_4 \rangle = \alpha_1 \cdot \gamma_0 \cdot \langle u_2,u_2 \rangle,\]
which using $\langle u_2,u_2 \rangle \neq 0$ implies that $\alpha_1 = 0$ or $\gamma_0 = 0$.
If $\alpha_1 = 0$, then it holds that $u_1 = \alpha_0 \cdot u_0$, hence $u_0$ and $u_1$ are proportional, and we are done.
Otherwise, it holds that $\gamma_0 = 0$, implying that $u_3 = \gamma_1 \cdot u_4$.
This yields that $u_2 = \beta_0 \cdot u_1 + \beta_1 \cdot \gamma_1 \cdot u_4$, hence $0= \langle u_2,u_4 \rangle = \beta_1 \cdot \gamma_1 \cdot \langle u_4, u_4 \rangle$.
Using $\langle u_4, u_4 \rangle \neq 0$, we deduce that $\beta_1 \cdot \gamma_1 = 0$, hence $u_2 = \beta_0 \cdot u_1$ and $\beta_0 \neq 0$.
Since $u_2$ is orthogonal to $u_0$, it follows that so is $u_1$, and we are done.
\end{proof}

\subsection{The \texorpdfstring{$d\dODP_\Fset$}{d-Ortho-Dim F} Problem on \texorpdfstring{$\Empty+k\mathrm{v}$}{EMPTY+kv} Graphs}

We prove the following lower bound on the size of any compression for the $d\dODP_\Fset$ problem parameterized by the vertex cover number.

\begin{theorem}\label{thm:Lower_d}
For every field $\Fset$, every integer $d \geq 3$, and any real $\eps>0$, the $d\dODP_\Fset$ problem on $\Empty+k\mathrm{v}$ graphs does not admit a compression with bit-size $O(k^{d-1-\eps})$ unless $\NP \subseteq \coNPpoly$.
\end{theorem}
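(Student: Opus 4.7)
The plan is to deduce Theorem~\ref{thm:Lower_d} by exhibiting a linear-parameter transformation from the $d\qCol$ problem on $\Empty+k\mathrm{v}$ graphs to the $d\dODP_\Fset$ problem on $\Empty+k\mathrm{v}$ graphs, and invoking the kernelization lower bound of~\cite{JansenK13,JansenP19color}, according to which $d\qCol$ on $\Empty+k\mathrm{v}$ graphs admits no compression of bit-size $O(k^{d-1-\eps})$ unless $\NP \subseteq \coNPpoly$. Composing a hypothetical compression of bit-size $O(k^{d-1-\eps})$ for $d\dODP_\Fset$ with such a transformation would yield a compression of the same asymptotic bit-size for $d\qCol$, a contradiction.

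To construct the transformation, given an instance $(G,X)$ of $d\qCol$ with $|X|=k$, I would build $G'$ by (i) adding $d$ fresh reference vertices $r_1,\ldots,r_d$ forming a clique, and (ii) for each pair $(v,i)\in X\times[d]$, attaching a fresh copy of the gadget $H=\overline{C_{2d}}$ from Lemma~\ref{lemma:gadget} by identifying its special vertices $x_0$ and $x_1$ with $v$ and $r_i$ respectively. The returned vertex cover $X'$ would consist of $X$, the reference vertices, and all internal (neither $x_0$ nor $x_1$) vertices of every gadget. Since there are $kd$ gadgets, each contributing $2d-2$ internal vertices, one has $|X'| = k + d + kd(2d-2) = O(k)$ for fixed $d$. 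That $X'$ is a vertex cover is straightforward: edges of $G$ are covered by $X$, edges within $\{r_1,\ldots,r_d\}$ by the reference vertices, and every gadget edge involves an internal vertex of that gadget, because $x_0$ and $x_1$ are adjacent in $C_{2d}$ and hence non-adjacent in $H$, so no gadget edge joins $v$ to $r_i$ directly.

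For the equivalence $\chi(G)\leq d$ if and only if $\od_\Fset(G')\leq d$, the forward direction transforms a proper $d$-coloring $c$ of $G$ into a $d$-dimensional orthogonal representation of $G'$ by $v\mapsto e_{c(v)}$ and $r_i\mapsto e_i$; each gadget is then orthogonally represented using standard basis vectors obtained from a proper $d$-coloring of $H$ that assigns the prescribed colors to $x_0$ and $x_1$, which is provided by Item~\ref{itm:same} of Lemma~\ref{lemma:gadget} when $c(v)=i$ and by Item~\ref{itm:distinct} otherwise. Conversely, in any $d$-dimensional orthogonal representation $(u_w)$ of $G'$, the vectors $u_{r_1},\ldots,u_{r_d}$ are pairwise orthogonal and non-self-orthogonal, hence form a basis of $\Fset^d$. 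Item~\ref{itm:od} of the lemma yields that each $u_v$ with $v\in X$ is orthogonal or proportional to every $u_{r_i}$; it cannot be orthogonal to all of them (else $u_v$ would vanish), and it cannot be proportional to two distinct $u_{r_i},u_{r_j}$ (else these would be both proportional and orthogonal, forcing a self-orthogonal vector). Hence there is a unique $c(v)\in[d]$ with $u_v$ proportional to $u_{r_{c(v)}}$. For each $w\in V\setminus X$, the vector $u_w$ is orthogonal to every $u_u$ with $u\in N_G(w)\subseteq X$ and thus to every $r_i$ with $i\in S_w:=\{c(u):u\in N_G(w)\}$; since $u_w\neq 0$ lies in $\linspan(\{r_j:j\notin S_w\})$, we have $S_w\neq[d]$, and any choice $c(w)\notin S_w$ extends $c$ to a proper $d$-coloring of $G$.

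The main obstacle is ensuring that the transformation is linear-parameter. A naive reduction placing a gadget on every non-edge of $G$, or even only on every non-edge within $X$, would inflate the vertex cover to $\Omega(k^2)$ and yield only a polynomial-parameter transformation, too weak to transfer the exponent $d-1$ from the coloring lower bound. The use of the constant number $d$ of reference vertices together with exactly $kd=O(k)$ gadgets is what keeps the vertex cover linear in $k$, while the pairwise-orthogonal basis formed by $\{u_{r_i}\}$ is still rigid enough to pin every $u_v$ (with $v\in X$) to a unique proportionality class among the $r_i$'s, which is precisely what is needed to extract a proper $d$-coloring of $G$.
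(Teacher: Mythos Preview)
Your proposal is correct and follows essentially the same approach as the paper: both construct a linear-parameter transformation from $d\qCol$ on $\Empty+k\mathrm{v}$ graphs to $d\dODP_\Fset$ on $\Empty+k\mathrm{v}$ graphs by adjoining a clique of $d$ reference vertices and, for each pair $(v,i)\in X\times[d]$, a copy of the gadget $\overline{C_{2d}}$ from Lemma~\ref{lemma:gadget} identified at $v$ and $r_i$, then invoke the coloring lower bound of~\cite{JansenK13,JansenP19color}. Your correctness argument (basis property of the $u_{r_i}$, Item~\ref{itm:od} to force proportionality on $X$, extension to $V\setminus X$) and parameter count $|X'|=k+d+kd(2d-2)=O(k)$ match the paper's Theorem~\ref{thm:reductionColOD} verbatim; the only cosmetic difference is that the paper phrases the forward direction as producing a proper $d$-coloring of $G'$ and then applying Claim~\ref{claim:OD<=CHI}, whereas you write down the standard-basis orthogonal representation directly.
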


Note that Theorem~\ref{thm:Lower_d} confirms Theorem~\ref{thm:IntroLower}.
Another immediate corollary is the following.
\begin{corollary}
For every field $\Fset$, the $\ODP_\Fset$ problem on $\Empty+k\mathrm{v}$ graphs does not admit a polynomial compression unless $\NP \subseteq \coNPpoly$.
\end{corollary}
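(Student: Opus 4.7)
The plan is to derive the corollary directly from Theorem~\ref{thm:Lower_d} by a straightforward restriction/padding argument, taking advantage of the fact that the bound in Theorem~\ref{thm:Lower_d} has an arbitrarily large exponent as $d$ grows. Suppose toward a contradiction that the $\ODP_\Fset$ problem on $\Empty+k\mathrm{v}$ graphs admits a polynomial compression, of bit-size $O(k^c)$ for some constant $c$, into some parameterized problem $Q'$.

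The first step is to note that for every fixed integer $d\ge 3$, the $d\dODP_\Fset$ problem on $\Empty+k\mathrm{v}$ graphs can be viewed as a trivial linear-parameter transformation into $\ODP_\Fset$ on $\Empty+k\mathrm{v}$ graphs: given an instance $(G,X)$ of the $d$-fixed variant with parameter $k=|X|$, output the instance $(G,X,d)$ of the general problem with the same parameter $k$. Since $d$ is a fixed constant, appending it costs only $O(1)$ bits and preserves the yes/no answer. Composing this transformation with the hypothetical polynomial compression for $\ODP_\Fset$ yields, for every fixed $d\ge 3$, a compression of the $d\dODP_\Fset$ problem on $\Empty+k\mathrm{v}$ graphs into $Q'$ whose bit-size is still $O(k^c)$.

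Now pick any integer $d\ge\max\{3,\lceil c\rceil+2\}$ and set $\eps=\tfrac12$. Then $d-1-\eps\ge c+\tfrac12>c$, so a compression of bit-size $O(k^c)$ is in particular a compression of bit-size $O(k^{d-1-\eps})$. Applying Theorem~\ref{thm:Lower_d} to this $d$ and $\eps$ then gives $\NP\subseteq\coNPpoly$, which is the desired conclusion.

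There is essentially no obstacle beyond careful bookkeeping: one just has to verify that the composition really gives a compression of the right bit-size and parameter, and that Theorem~\ref{thm:Lower_d} is being invoked for an admissible choice of $d\ge 3$ and $\eps>0$. All the combinatorial and algebraic content lives in Theorem~\ref{thm:Lower_d}; the corollary is obtained by observing that a single polynomial bound $O(k^c)$ cannot simultaneously beat the whole family of lower bounds $O(k^{d-1-\eps})$ as $d\to\infty$.
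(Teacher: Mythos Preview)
Your proof is correct and is exactly the intended argument: the paper states this as an ``immediate corollary'' of Theorem~\ref{thm:Lower_d} without spelling out a proof, and the reasoning it has in mind (also alluded to in the introduction) is precisely that a single polynomial bound $O(k^c)$ would, via the trivial transformation $(G,X)\mapsto(G,X,d)$, yield an $O(k^c)$ compression for $d\dODP_\Fset$ on $\Empty+k\mathrm{v}$ graphs for every fixed $d$, contradicting Theorem~\ref{thm:Lower_d} once $d$ is taken large enough.
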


The starting point of the proof of Theorem~\ref{thm:Lower_d} is the following theorem, which summarizes the lower bounds proved in~\cite{JansenK13,JansenP19color} on the size of any compression for the $q\qCol$ problem parameterized by the vertex cover number (see~\cite[Corollary~2]{JansenP19color}).

\begin{theorem}[{\cite{JansenK13,JansenP19color}}]\label{thm:NoKernelCol}
For every integer $q \geq 3$, the $q\qCol$ problem on $\Empty+k\mathrm{v}$ graphs does not admit a compression with bit-size $O(k^{q-1-\eps})$ unless $\NP \subseteq \coNPpoly$.
\end{theorem}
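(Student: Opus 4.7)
The plan is to prove the lower bound via a degree-$(q-1)$ OR-cross-composition, invoking the standard framework (Bodlaender--Jansen--Kratsch and refinements) which states that if a parameterized problem admits a degree-$d$ OR-cross-composition from some NP-hard language, then it does not admit a compression with bit-size $O(k^{d-\eps})$ unless $\NP \subseteq \coNPpoly$. Thus it suffices to exhibit a degree-$(q-1)$ cross-composition from an NP-hard source language into $q\qCol$ on $\Empty+k\mathrm{v}$ graphs, where ``degree $q-1$'' here means that $N = t^{q-1}$ input instances of size at most $n$ are combined into a single output instance whose vertex cover parameter $k$ is bounded by $t \cdot \poly(n, \log t)$.

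As the source language I would choose an NP-hard graph or CSP problem whose YES-instances are easy to encode via $q$-coloring constraints---for example a structured precoloring-extension or list-coloring variant, or $q\qSAT$ itself. Given $N = t^{q-1}$ source instances indexed by tuples $\vec a = (a_1,\ldots,a_{q-1}) \in [t]^{q-1}$, the goal of the construction is to produce a graph $G^*$ together with a vertex cover $X^*$ of size $|X^*| \leq t \cdot \poly(n, \log t)$ such that $G^*$ is $q$-colorable if and only if at least one of the $N$ input instances is a YES-instance. Because $N = t^{q-1}$ while $k = |X^*|$ grows linearly in $t$ (up to polynomial factors in $n$ and $\log t$), this realises the cross-composition of the required degree.

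The intended graph $G^*$ features $(q-1)$ coordinate ``selector'' blocks $B_1,\ldots,B_{q-1}$, each containing $t$ selector vertices attached to a small shared palette, engineered so that every proper $q$-coloring of the selector part encodes a unique tuple $\vec a \in [t]^{q-1}$ (the ``active'' tuple); all selector and palette vertices, totalling $O(t)$, are placed in $X^*$. The remainder of $G^*$ consists of a single universal instance gadget, whose bulk lies in the independent set $V(G^*) \setminus X^*$ and whose edges to the selector blocks are arranged so that, conditional on the selectors announcing $\vec a$, the residual $q$-coloring constraints imposed on the gadget are precisely those coming from the $\vec a$-th input. Because this gadget is \emph{shared} across all $t^{q-1}$ inputs rather than duplicated, the vertex cover grows by only an additive $\poly(n)$ term, preserving the bound $|X^*| \leq t \cdot \poly(n, \log t)$.

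The main obstacle is the simultaneous design of selector-gadget interactions so that (i) each $q$-coloring of the selector blocks encodes exactly one tuple $\vec a$, (ii) the universal gadget correctly ``reads'' all $q-1$ coordinates without collapsing distinct instances, and (iii) the vertex cover stays within the claimed bound. Balancing these three demands is what pins the exponent at exactly $q-1$: with only $q$ colors available, precisely $q-1$ selector coordinates can be propagated into the gadget body while leaving one color class free for its internal colorings, which is why the argument refines Jansen and Kratsch's earlier degree-$q$ cross-composition to the tight degree $q-1$ of Jansen and Pieterse. Verifying the two-sided implication (YES iff at least one input is YES) together with the parameter bound then lets the framework theorem conclude that no compression with bit-size $O(k^{q-1-\eps})$ exists unless $\NP \subseteq \coNPpoly$.
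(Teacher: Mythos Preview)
The paper does not prove this theorem at all: it is stated as a citation of prior work (\cite{JansenK13,JansenP19color}) and used as a black box in the derivation of Theorem~\ref{thm:Lower_d}. There is therefore no proof in the paper to compare your proposal against.

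That said, your outline does match the strategy actually used in the cited sources: a degree-$(q-1)$ OR-cross-composition with $(q-1)$ selector coordinates, combined with the Bodlaender--Jansen--Kratsch framework. As a high-level plan this is correct, but be aware that the parts you leave unspecified are exactly where the work lies. You do not name a concrete source language, and ``a single universal instance gadget \ldots\ shared across all $t^{q-1}$ inputs'' is a description of the desired outcome rather than a construction; the actual Jansen--Pieterse argument builds the gadget so that each vertex outside the cover enforces a constraint only when the selector coloring matches a specific tuple, and verifying both directions of the equivalence (especially that a YES input can always be extended to a global $q$-coloring regardless of which tuple the selectors pick) requires real care. If you intend to write out the proof rather than cite it, those details need to be supplied; within the present paper, however, citing the result is all that is done.
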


Equipped with Lemma~\ref{lemma:gadget}, we relate the $d$-$\Col$ and $d\dODP_\Fset$ problems parameterized by the vertex cover number, as stated below.

\begin{theorem}\label{thm:reductionColOD}
For every field $\Fset$ and for every integer $d \geq 3$, there exists a linear-parameter transformation from $d$-$\Col$ on $\Empty+k\mathrm{v}$ graphs to $d\dODP_\Fset$ on $\Empty+k\mathrm{v}$ graphs.
\end{theorem}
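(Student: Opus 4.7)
The plan is, given an instance $(G,X)$ of $d\qCol$ on $\Empty+k\mathrm{v}$ graphs with $|X|=k$, to construct in polynomial time an equivalent instance $(G',X')$ of $d\dODP_\Fset$ on $\Empty+k\mathrm{v}$ graphs with $|X'|=O(k)$. The workhorse is the gadget $H=\overline{C_{2d}}$ of Lemma~\ref{lemma:gadget}: by Items~\ref{itm:same} and~\ref{itm:distinct}, $H$ admits proper $d$-colorings placing its distinguished vertices $x_0,x_1$ in either the same color class or in distinct color classes, while by Item~\ref{itm:od} every $d$-dimensional orthogonal representation of $H$ over $\Fset$ forces the vectors at $x_0,x_1$ to be orthogonal or proportional. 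A copy of $H$ therefore simulates an orthogonal-or-proportional constraint without eliminating the colorings that realize either outcome.

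The construction of $G'$ goes as follows. Start from $G$, add $d$ fresh vertices $y_1,\ldots,y_d$ joined pairwise into a clique $K_d$, and for every pair $(u,i)\in X\times[d]$ attach a private copy of $H$ by identifying its vertices $x_0,x_1$ with $u$ and $y_i$ respectively, keeping the remaining $2d-2$ internal vertices disjoint from everything else. Let $X'$ be the union of $X$, of $\{y_1,\ldots,y_d\}$, and of the internal vertices of all $k\cdot d$ copies of $H$. Since $X$ covers every edge of $G$ and every edge introduced by a copy of $H$ is incident either with an internal vertex of that copy or with one of $\{u,y_i\}$, the set $X'$ is a vertex cover of $G'$ of size $k+d+k\cdot d\cdot(2d-2)=O(k)$ for fixed $d$, giving the required linear parameter bound.

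For the forward implication, from a proper $d$-coloring $c$ of $G$ I would assign to each $v\in V(G)$ the standard basis vector $e_{c(v)}\in\Fset^d$ and to each $y_i$ the vector $e_i$, which already orthogonally handles the edges of $G$ and of $K_d$. For every copy $H_{u,i}$ the distinguished pair is assigned two vectors that are either equal (when $c(u)=i$) or orthogonal (when $c(u)\neq i$); the matching item of Lemma~\ref{lemma:gadget} supplies a proper $d$-coloring of $H$ consistent with this pattern after a suitable permutation of colors, and interpreting the colors as standard basis vectors extends the assignment to the internal vertices of the copy, yielding a valid $d$-dimensional orthogonal representation of $G'$.

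The converse is where the main subtleties lie. Given a $d$-dimensional orthogonal representation $(u_v)$ of $G'$, the vectors $u_{y_1},\ldots,u_{y_d}$ are pairwise orthogonal and non-self-orthogonal, hence form a basis of $\Fset^d$. Applying Item~\ref{itm:od} to each copy $H_{u,i}$ with $u\in X$, the vector $u_u$ is orthogonal or proportional to every $u_{y_i}$; since $u_u$ is non-self-orthogonal, it cannot be proportional to two different $u_{y_i}$ (these are orthogonal) nor orthogonal to all of them (the orthogonal complement of a basis is $\{0\}$), so each $u\in X$ receives a unique color $c(u)\in[d]$ with $u_u\parallel u_{y_{c(u)}}$. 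The main obstacle is extending $c$ to $V\setminus X$, since those vertices carry no gadget constraint; the rescue is that for $v\in V\setminus X$ the neighborhood $N_G(v)$ lies in $X$, and $u_v$ being orthogonal to each $u_u$ with $u\in N_G(v)$ forces it to be orthogonal to $u_{y_{c(u)}}$ for every such $u$, so if $\{c(u):u\in N_G(v)\}$ were all of $[d]$ then $u_v$ would be orthogonal to a basis and thus zero, contradicting non-self-orthogonality. Hence I may choose $c(v)\in[d]\setminus\{c(u):u\in N_G(v)\}$, and a routine check on the two edge types (inside $X$, or with exactly one endpoint in $V\setminus X$) confirms that $c$ is a proper $d$-coloring of $G$.
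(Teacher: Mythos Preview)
Your proposal is correct and follows essentially the same approach as the paper: the construction (clique $K_d$ plus one copy of $\overline{C_{2d}}$ per pair in $X\times[d]$), the size analysis, and both directions of the correctness argument match the paper's proof up to cosmetic differences in notation and phrasing. The only minor presentational difference is that in the forward direction you build the orthogonal representation directly via standard basis vectors, whereas the paper constructs a proper $d$-coloring of $G'$ and then invokes $\od_\Fset(G')\le\chi(G')$; these are of course the same argument.
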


\begin{proof}
Fix a field $\Fset$ and an integer $d \geq 3$.
Consider an instance of the $d$-$\Col$ problem on $\Empty+k\mathrm{v}$ graphs, namely, a graph $G=(V,E)$ and a vertex cover $X \subseteq V$ of $G$ of size $|X|=k$.
Our goal is to construct in polynomial time a graph $G'=(V',E')$ and a vertex cover $X' \subseteq V'$ of $G'$ of size $|X'| = O(k)$, such that $\chi(G) \leq d$ if and only if $\od_\Fset(G') \leq d$.

To do so, we start with the graph $G$ and add to it a clique of size $d$ whose vertices are denoted by $z_1, \ldots, z_d$.
Then, for each index $i \in [d]$ and each vertex $v \in X$, we add to the graph a copy $H_{i,v}$ of the complement $\overline{C_{2d}}$ of the cycle graph on $2d$ vertices, where two consecutive vertices of the cycle are identified with the vertices $z_i$ and $v$. Note that we add here $d \cdot k$ such gadgets to the graph and that each of them involves $2d-2$ new vertices. Let $G' = (V',E')$ denote the obtained graph, and let $X'$ denote the set that consists of the vertices of $X$ and the vertices that were added to $G$ in the construction. The transformation returns the pair $(G',X')$. Since $X$ is a vertex cover of $G$, the set $V \setminus X$ is an independent set of $G$. It follows that $V' \setminus X'$ is an independent set of $G'$, hence $X'$ is a vertex cover of $G'$. Its size satisfies $|X'| = k+ d+d \cdot k \cdot (2d-2) = O(k)$, hence the transformation is linear-parameter. The transformation can clearly be implemented in polynomial time.
For correctness, we shall prove that $\chi(G) \leq d$ if and only if $\od_\Fset(G') \leq d$.

Suppose first that $\chi(G) \leq d$, and consider some proper $d$-coloring of $G$ with color set $[d]$.
We extend this coloring to a $d$-coloring of $G'$ as follows.
First, for each $i \in [d]$, we assign the color $i$ to the vertex $z_i$.
Clearly, no edge that connects two of the vertices $z_1, \ldots, z_d$ is monochromatic.
Next, for each $i \in [d]$ and $v \in X$, consider the vertices of the component $H_{i,v}$. The only vertices of $H_{i,v}$ that already received colors are $z_i$ and $v$. By Lemma~\ref{lemma:gadget}, this partial coloring of $H_{i,v}$ can be extended to a proper $d$-coloring of the whole gadget. Indeed, if $z_i$ and $v$ are assigned the same color then this follows from Item~\ref{itm:same} of the lemma, and if $z_i$ and $v$ are assigned distinct colors then this follows from Item~\ref{itm:distinct} of the lemma.
This gives us a proper $d$-coloring of $G'$, which implies using Claim~\ref{claim:OD<=CHI} that $\od_\Fset(G') \leq \chi(G') \leq d$.

For the converse direction, suppose that $\od_\Fset(G') \leq d$, and consider a $d$-dimensional orthogonal representation $(u_v)_{v \in V'}$ of $G'$ over $\Fset$.
Since the vertices $z_1, \ldots, z_d$ form a clique in $G'$, it follows that their vectors $u_{z_1}, \ldots, u_{z_d}$ are pairwise orthogonal. Since they are non-self-orthogonal, it follows that they are linearly independent, and thus span the entire vector space $\Fset^d$.
For each $i \in [d]$ and $v \in X$, consider the vectors assigned by the given orthogonal representation to the vertices of the component $H_{i,v}$ in $G'$, and apply Item~\ref{itm:od} of Lemma~\ref{lemma:gadget} to obtain that the vectors $u_{z_i}$ and $u_v$ are either orthogonal or proportional.
However, the vectors $u_{z_1}, \ldots, u_{z_d}$ span the vector space $\Fset^d$, hence for each $v \in X$, the nonzero vector $u_v$ cannot be orthogonal to all of them. This yields that for each vertex $v \in X$, the vector $u_v$ is proportional to exactly one of the vectors $u_{z_1}, \ldots, u_{z_d}$.

We define a $d$-coloring of $G$ as follows.
To each vertex $v \in X$, assign the color $i \in [d]$ for which $u_v$ is proportional to $u_{z_i}$.
Since the given orthogonal representation assigns orthogonal vectors to adjacent vertices, it follows that this coloring assigns distinct colors to adjacent vertices in $X$.
Next, to each vertex $v \in V \setminus X$, assign a color from $[d]$ that does not appear on its neighbors.
Notice that all the neighbors of $v$ lie in $X$ and were already colored, because $X$ is a vertex cover of $G$.
To see that such a color exists, recall that the vector $u_v$ is nonzero and orthogonal to the vectors associated with its neighbors in $X$ by the given orthogonal representation of $G'$. Since every such vector is proportional to one of $u_{z_1}, \ldots, u_{z_d}$, it follows that there exists some $i \in [d]$ for which no neighbor of $v$ is associated with a vector proportional to $u_{z_i}$, yielding the existence of the desired color for $v$.
This gives us a proper $d$-coloring of $G$ and implies that $\chi(G) \leq d$, so we are done.
\end{proof}

We finally combine Theorems~\ref{thm:NoKernelCol} and~\ref{thm:reductionColOD} to derive Theorem~\ref{thm:Lower_d}.

\begin{proof}[ of Theorem~\ref{thm:Lower_d}]
Fix a field $\Fset$, an integer $d \geq 3$, and a real $\eps>0$.
By Theorem~\ref{thm:reductionColOD}, there exists a linear-parameter transformation from $d$-$\Col$ on $\Empty+k\mathrm{v}$ graphs to $d\dODP_\Fset$ on $\Empty+k\mathrm{v}$ graphs. Therefore, if $d\dODP_\Fset$ on $\Empty+k\mathrm{v}$ graphs admits a compression with bit-size $O(k^{d-1-\eps})$, then by composing this compression with the given transformation, it follows that $d\qCol$ on $\Empty+k\mathrm{v}$ graphs admits a compression with bit-size $O(k^{d-1-\eps})$ as well.
By Theorem~\ref{thm:NoKernelCol}, this implies that $\NP \subseteq \coNPpoly$, and we are done.
\end{proof}

\subsection{The \texorpdfstring{$3\dODP_\Fset$}{3-Ortho-Dim F} Problem on \texorpdfstring{$\Path+k\mathrm{v}$}{PATH+kv} Graphs}

The proof strategy of Theorem~\ref{thm:Lower_d} enables us to establish the following lower bound on the size of any compression for the $3\dODP_\Fset$ problem on $\Path+k\mathrm{v}$ graphs.

\begin{theorem}\label{thm:ODPath}
For every field $\Fset$, the $3\dODP_\Fset$ problem on $\Path+k\mathrm{v}$ graphs does not admit a polynomial compression unless $\NP \subseteq \coNPpoly$.
\end{theorem}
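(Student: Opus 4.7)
The plan is to follow the blueprint of Theorem~\ref{thm:Lower_d}: combine the lower bound of Jansen and Kratsch~\cite{JansenK13} on the $3\qCol$ problem on $\Path+k\mathrm{v}$ graphs with a linear-parameter transformation into the $3\dODP_\Fset$ problem on $\Path+k\mathrm{v}$ graphs. Given an instance $(G,X)$ of the former problem with $|X|=k$ and $G\setminus X \in \Path$, the transformation would apply the construction of Theorem~\ref{thm:reductionColOD} for $d=3$ verbatim: add a triangle on fresh vertices $z_1,z_2,z_3$ and, for every $(i,v) \in [3]\times X$, attach a fresh copy of $\overline{C_6}$ in which $z_i$ and $v$ play the role of two consecutive cycle vertices. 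Define $X'$ to consist of $X$ together with $\{z_1,z_2,z_3\}$ and all newly introduced gadget vertices. Then $|X'|=k+3+3k(2d-2)=O(k)$ while $V'\setminus X'=V\setminus X$, so $G'[V'\setminus X']=G[V\setminus X]\in\Path$, giving a valid linear-parameter transformation into the $\Path+k\mathrm{v}$ setting.

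The forward direction of correctness transfers verbatim from the proof of Theorem~\ref{thm:reductionColOD}: a proper $3$-coloring of $G$ extends to a proper $3$-coloring of $G'$ via Lemma~\ref{lemma:gadget}, and then $\od_\Fset(G')\leq 3$ by Claim~\ref{claim:OD<=CHI}. The converse direction is the delicate step, since the argument in Theorem~\ref{thm:reductionColOD} crucially uses that $V\setminus X$ is independent, while here it is only a disjoint union of paths. Starting from a $3$-dimensional orthogonal representation $(u_v)_{v\in V'}$ of $G'$, Lemma~\ref{lemma:gadget} still yields that $u_{z_1},u_{z_2},u_{z_3}$ form a pairwise orthogonal basis of $\Fset^3$ and that every $u_v$ with $v\in X$ is proportional to some $u_{z_{c(v)}}$, defining a proper coloring $c$ of $G[X]$. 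For each $v\in V\setminus X$, let $L_v=[3]\setminus c(N_G(v)\cap X)$; the identity $u_{z_i}^\perp=\linspan(\{u_{z_j}:j\neq i\})$ forces $u_v\in\linspan(\{u_{z_i}:i\in L_v\})$, which is nontrivial because $u_v\neq 0$, so $L_v\neq\emptyset$.

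The main new ingredient is to extend $c$ to a proper $3$-coloring of each path of $G\setminus X$ obeying the lists $L_v$. I plan to use an iterative list-shrinking rule along each path: whenever the current reduced list $L_{v_t}^{\mathrm{red}}$ is a singleton $\{i\}$, remove $i$ from the reduced lists of the path-neighbors of $v_t$. The orthogonality $\langle u_{v_t},u_{v_{t+1}}\rangle=0$ together with $u_{v_{t+1}}\propto u_{z_i}$ gives $u_{v_t}\perp u_{z_i}$, so by induction $u_{v_t}$ continues to lie in $\linspan(\{u_{z_j}:j\in L_{v_t}^{\mathrm{red}}\})$ after the reduction, and hence $L_{v_t}^{\mathrm{red}}$ remains nonempty. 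Once the lists stabilize, a single greedy walk along each path picks $c(v_t)\in L_{v_t}^{\mathrm{red}}\setminus\{c(v_{t-1})\}$: at singleton reduced lists the adjacent reduced lists exclude the forced color, and at larger lists a valid color always survives. Since $c(v)\in L_v$ automatically ensures $c(v)\neq c(u)$ for every $u\in N_G(v)\cap X$, this produces a proper $3$-coloring of $G$, yielding $\chi(G)\leq 3$. Composing the transformation with any polynomial compression for $3\dODP_\Fset$ on $\Path+k\mathrm{v}$ graphs would then give a polynomial compression for $3\qCol$ on $\Path+k\mathrm{v}$ graphs, contradicting~\cite{JansenK13} unless $\NP\subseteq\coNPpoly$. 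The main obstacle I foresee is cleanly formalizing the list-shrinking invariant so that the inductive step is fully self-contained.
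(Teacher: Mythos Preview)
Your proposal is correct and follows the paper's approach almost verbatim: the same linear-parameter transformation (triangle $z_1,z_2,z_3$ plus $\overline{C_6}$ gadgets attached at each $(z_i,v)$ with $v\in X$), the same observation that $V'\setminus X'=V\setminus X$ remains a path, and the same use of Lemma~\ref{lemma:gadget} to force each $u_v$ with $v\in X$ to be proportional to some $u_{z_i}$.

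The only difference is in how you extend the coloring along the path in the converse direction. You introduce an explicit list-shrinking preprocessing (propagate singleton lists to neighbors, maintain the invariant $u_{v_t}\in\linspan\{u_{z_j}:j\in L_{v_t}^{\mathrm{red}}\}$) and then do a greedy pass. The paper skips the preprocessing entirely and does a single greedy pass directly on the coefficient data: at $v_j$, write $u_{v_j}=\alpha_1 u_{z_1}+\alpha_2 u_{z_2}+\alpha_3 u_{z_3}$ and pick any $i$ with $\alpha_i\neq 0$ and $i\neq c(v_{j-1})$; if only one $\alpha_i$ is nonzero then $u_{v_j}\propto u_{z_i}$, so $u_{v_{j-1}}\perp u_{z_i}$ forces the $u_{z_i}$-coefficient of $u_{v_{j-1}}$ to vanish, whence $c(v_{j-1})\neq i$. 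This is precisely the base case of your invariant applied on the fly, so your list-shrinking machinery, while correct, is not needed---the obstacle you anticipated dissolves once you realize the greedy step already self-validates.
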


As an immediate consequence of Theorem~\ref{thm:ODPath}, we obtain the following.

\begin{corollary}
For every field $\Fset$, the $3\dODP_\Fset$ problem on $\Path+k\mathrm{v}$ graphs does not admit a polynomial kernel unless $\NP \subseteq \coNPpoly$.
\end{corollary}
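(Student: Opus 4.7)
The plan is to mimic the strategy of Theorem~\ref{thm:Lower_d}, composing the Jansen--Kratsch lower bound on the compressibility of $3\qCol$ on $\Path+k\mathrm{v}$ graphs~\cite{JansenK13} with a linear-parameter transformation to $3\dODP_\Fset$ on $\Path+k\mathrm{v}$ graphs. A polynomial compression for the latter would then yield one for the former, contradicting the assumption unless $\NP \subseteq \coNPpoly$.

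For the transformation, given $(G,X)$ with $G \setminus X$ a path, I would apply the construction of Theorem~\ref{thm:reductionColOD} with $d = 3$: attach a triangle on new vertices $\{z_1, z_2, z_3\}$ and gadgets $H_{i,v} = \overline{C_6}$ sharing the edge $\{z_i, v\}$ for each $i \in [3]$ and $v \in X$, and take the new modulator $X'$ to consist of $X$, the $z_i$'s, and the four internal vertices of each gadget. Every new edge is incident to $X'$, so $G' \setminus X'$ equals the original path $G \setminus X$, keeping the instance in $\Path+k\mathrm{v}$, and $|X'| = 13k+3 = O(k)$ makes the transformation linear-parameter. The forward direction $\chi(G) \leq 3 \Rightarrow \od_\Fset(G') \leq 3$ carries over verbatim from Theorem~\ref{thm:reductionColOD}, invoking Items~\ref{itm:same} and~\ref{itm:distinct} of Lemma~\ref{lemma:gadget} together with Claim~\ref{claim:OD<=CHI}; crucially, this direction never uses independence of $V \setminus X$.

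The main obstacle is the reverse direction $\od_\Fset(G') \leq 3 \Rightarrow \chi(G) \leq 3$, where the argument of Theorem~\ref{thm:reductionColOD} extends the coloring to each $v \in V \setminus X$ by picking any color missing from its $X$-neighborhood---a step that is only safe when $V \setminus X$ is independent. To handle a path $V \setminus X$, I would work in the orthogonal basis $u_{z_1}, u_{z_2}, u_{z_3}$ of $\Fset^3$: as before, every $u_v$ with $v \in X$ is proportional to exactly one $u_{z_{c(v)}}$, giving a proper coloring $c$ of $G[X]$, and for each $v \in V \setminus X$ the support $\mathrm{supp}(v) = \{i : \alpha_i \neq 0\}$ of $u_v = \sum_i \alpha_i u_{z_i}$ is nonempty and disjoint from $\{c(w) : w \in N_G(v) \cap X\}$. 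I would then traverse each connected path from one endpoint, extending $c$ greedily by picking $c(v_j) \in \mathrm{supp}(v_j) \setminus \{c(v_{j-1})\}$. The only way this can fail is $\mathrm{supp}(v_j) = \{c(v_{j-1})\}$, in which case $u_{v_j}$ is proportional to $u_{z_{c(v_{j-1})}}$; the orthogonality $\langle u_{v_{j-1}}, u_{v_j} \rangle = 0$ then forces the $c(v_{j-1})$-th coordinate of $u_{v_{j-1}}$ to vanish, contradicting $c(v_{j-1}) \in \mathrm{supp}(v_{j-1})$. Hence the extension always succeeds and yields a proper $3$-coloring of $G$, completing the equivalence.
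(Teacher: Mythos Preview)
Your proposal is correct and essentially mirrors the paper's approach: the paper deduces the corollary from Theorem~\ref{thm:ODPath} (no polynomial compression), which in turn is proved via Theorem~\ref{thm:reductionColODPath}, a linear-parameter transformation from $3\qCol$ to $3\dODP_\Fset$ on $\Path+k\mathrm{v}$ graphs using exactly the construction and greedy path-coloring argument you describe. Your treatment of the reverse direction---maintaining the invariant $c(v_{j-1}) \in \mathrm{supp}(v_{j-1})$ and deriving a contradiction when $\mathrm{supp}(v_j)=\{c(v_{j-1})\}$---is the contrapositive of the paper's phrasing but amounts to the same computation.
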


The proof of Theorem~\ref{thm:ODPath} relies on the following theorem, which indicates that it is unlikely that the $3\qCol$ problem on $\Path+k\mathrm{v}$ graphs admits a polynomial compression. This theorem follows from a result of~\cite{DellM14}, saying that the satisfiability problem $\SAT$ parameterized by the number of variables admits no polynomial compression unless $\NP \subseteq \coNPpoly$, combined with a polynomial-parameter transformation of~\cite{JansenK13} to the $3\qCol$ problem on $\Path+k\mathrm{v}$ graphs.

\begin{theorem}\label{thm:ColPath}
The $3\qCol$ problem on $\Path+k\mathrm{v}$ graphs does not admit a polynomial compression unless $\NP \subseteq \coNPpoly$.
\end{theorem}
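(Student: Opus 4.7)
The plan is to prove Theorem~\ref{thm:ColPath} by composing the result of Dell and van Melkebeek~\cite{DellM14} --- which asserts that $\SAT$ parameterized by the number of variables $n$ admits no polynomial compression unless $\NP \subseteq \coNPpoly$ --- with a polynomial-parameter transformation from $\SAT/n$ to $3\qCol$ on $\Path+k\mathrm{v}$ graphs. Given such a transformation, any hypothetical polynomial compression for the latter would, after composition with the transformation, yield a polynomial compression for the former, contradicting the complexity-theoretic assumption. Hence all the work lies in producing the transformation.

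The transformation takes a CNF formula $\varphi$ with $n$ variables and $m$ clauses and outputs a graph $G$ together with a modulator $X \subseteq V(G)$ of size $\poly(n)$ such that $\chi(G) \leq 3$ iff $\varphi$ is satisfiable, and such that $G \setminus X$ is a disjoint union of paths (and thus belongs to $\Path$, which as a hereditary family naturally contains all linear forests). The modulator $X$ will consist of a palette triangle on three vertices labeled $T, F, B$ --- forcing three distinct colors --- together with, for each variable $x_i$, a pair of adjacent literal vertices $v_i, \bar v_i$ both made adjacent to $B$. The latter construction forces $\{v_i, \bar v_i\}$ to receive the two colors $\{T, F\}$ in some order, thereby encoding the truth value of $x_i$. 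This yields $|X| = 3 + 2n = O(n)$, which is polynomial in the $\SAT$ parameter.

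For each clause $C_j = (\ell_{j,1} \vee \ell_{j,2} \vee \ell_{j,3})$, the plan is to attach a clause gadget $H_j$ whose vertices lie entirely in $V(G)\setminus X$ and which connects to the modulator only through the three literal vertices of $\ell_{j,1}, \ell_{j,2}, \ell_{j,3}$ (and possibly some palette vertex). The gadget must enforce the following: given any $\{T,F\}$-assignment to its three literal neighbors, $H_j$ extends to a proper $3$-coloring (consistent with the palette) iff at least one literal receives color $T$. Crucially, the vertices of $H_j$ must be arrangeable as a single path in the induced subgraph $G[V(G)\setminus X]$, with distinct clause gadgets contributing pairwise vertex-disjoint paths, so that the whole $G \setminus X$ is a linear forest. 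Note that the edges from $H_j$ into $X$ do not obstruct this, since they disappear when we take $G \setminus X$.

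The main obstacle is the design of the path-structured clause gadget: a standard OR-gadget for $3$-coloring is not itself a path, so one must either construct a tailored gadget or invoke the construction provided by Jansen and Kratsch~\cite{JansenK13}, who established precisely such a path-friendly reduction. Once a constant-size clause gadget is in hand, the overall construction uses $O(n + m)$ vertices, runs in polynomial time, has modulator size $|X| = O(n)$ that depends only on the $\SAT$ parameter, and its correctness is a case analysis over the $\{T,F\}$-assignment to the literal vertices: a satisfying assignment to $\varphi$ extends to a proper $3$-coloring of $G$ via the gadget's completeness, and conversely any proper $3$-coloring induces such an assignment via the $\{T,F\}$-colors of the $v_i$'s. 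Composition with the Dell--van Melkebeek lower bound then yields the theorem.
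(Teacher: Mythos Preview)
Your overall strategy matches the paper's: cite the Dell--van Melkebeek lower bound for $\SAT$ parameterized by the number of variables and compose it with the polynomial-parameter transformation of Jansen--Kratsch~\cite{JansenK13} into $3\qCol$ on $\Path+k\mathrm{v}$ graphs. The paper does not spell out the transformation either; it simply cites those two results. So at the level of approach there is no divergence.

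Your sketch of the transformation, however, contains two technical slips that you should be aware of. First, you restrict attention to clauses with exactly three literals, but then the source problem is $3\qSAT/n$, which \emph{does} admit a polynomial compression (there are only $O(n^3)$ distinct $3$-clauses on $n$ variables, so every instance fits in $O(n^3)$ bits). The Dell--van Melkebeek result that rules out \emph{any} polynomial compression applies to CNF-$\SAT$ with unbounded clause width; hence the clause gadget must handle clauses of arbitrary width while keeping the modulator size polynomial in $n$ and independent of $m$. This is precisely what the construction in~\cite{JansenK13} achieves, but your sketch as written does not. Second, the family $\Path$ in this paper consists of single paths, not linear forests; it is not hereditary, and a disjoint union of paths does not belong to it. The clause-gadget paths must therefore be linked into a single path (again, \cite{JansenK13} arranges this). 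Since you ultimately defer to~\cite{JansenK13}, the argument can be repaired, but the sketch on its own does not stand.
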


By another application of Lemma~\ref{lemma:gadget}, we relate the $3$-$\Col$ and $3\dODP_\Fset$ problems on $\Path+k\mathrm{v}$ graphs, as stated below.

\begin{theorem}\label{thm:reductionColODPath}
For every field $\Fset$, there exists a linear-parameter transformation from $3$-$\Col$ on $\Path+k\mathrm{v}$ graphs to $3\dODP_\Fset$ on $\Path+k\mathrm{v}$ graphs.
\end{theorem}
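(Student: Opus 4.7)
The plan is to mirror the construction of Theorem~\ref{thm:reductionColOD} in the special case $d=3$, while placing every freshly introduced vertex into the new modulator so that the portion of the graph required to lie in $\Path$ is left untouched.

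Given an instance $(G,X)$ of $3$-$\Col$ on $\Path+k\mathrm{v}$ graphs with $|X|=k$ and $G \setminus X \in \Path$, I would build $G'$ exactly as in Theorem~\ref{thm:reductionColOD}: add three new vertices $z_1,z_2,z_3$ joined into a triangle, and for each pair $(i,v) \in [3] \times X$ attach a fresh copy $H_{i,v}$ of $\overline{C_6}$ in which two consecutive vertices of the underlying $C_6$ are identified with $z_i$ and $v$. Each such gadget contributes $2 \cdot 3 - 2 = 4$ internal vertices, for a total of $12k$ new gadget vertices alongside the three clique vertices.

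The modulator $X'$ is then taken to be $X \cup \{z_1,z_2,z_3\} \cup W$, where $W$ is the set of all $12k$ gadget-internal vertices. This yields $|X'| = 13k + 3 = O(k)$, so the transformation is linear-parameter. The key structural observation is that \emph{no} edge added during the construction has both endpoints in $V \setminus X$: every new edge is incident either to a gadget-internal vertex, to some $z_i$, or to a vertex of $X$, all of which lie in $X'$. Consequently $G' \setminus X' = G[V \setminus X] = G \setminus X \in \Path$, so $(G',X')$ is a legal instance of $3\dODP_\Fset$ on $\Path+k\mathrm{v}$ graphs.

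For correctness, I would invoke the equivalence $\chi(G) \leq 3 \iff \od_\Fset(G') \leq 3$ verbatim from Theorem~\ref{thm:reductionColOD}: in one direction, color $z_i$ with color $i$ and extend any proper $3$-coloring of $G$ through each $H_{i,v}$ using Items~\ref{itm:same} and~\ref{itm:distinct} of Lemma~\ref{lemma:gadget}; in the other direction, observe that $u_{z_1},u_{z_2},u_{z_3}$ are pairwise orthogonal non-self-orthogonal vectors and hence span $\Fset^3$, so Item~\ref{itm:od} of Lemma~\ref{lemma:gadget} forces every $u_v$ with $v \in X$ to be proportional to exactly one $u_{z_i}$, after which the induced $3$-coloring of $X$ extends greedily to $V \setminus X$. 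The entire equivalence argument is local to $X \cup \{z_1,z_2,z_3\}$ and the gadgets, so the $\Path$ structure of $G \setminus X$ is irrelevant to it. I see no genuine obstacle here; the only point worth stressing is that $\overline{C_6}$ contains triangles and must therefore be absorbed into the modulator, which is precisely what placing all new vertices into $X'$ accomplishes.
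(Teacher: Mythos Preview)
Your construction and modulator are exactly what the paper uses, and the forward direction is fine. The gap is in the backward direction. You write that the equivalence $\chi(G) \leq 3 \iff \od_\Fset(G') \leq 3$ can be taken ``verbatim from Theorem~\ref{thm:reductionColOD}'' and that ``the $\Path$ structure of $G \setminus X$ is irrelevant to it.'' That is not true: the backward argument in Theorem~\ref{thm:reductionColOD} explicitly uses that $V \setminus X$ is an independent set, so that every vertex outside $X$ has all of its neighbors in $X$ and can be colored independently once $X$ is colored. Here $G \setminus X$ is a path, so a vertex $v_j \in V \setminus X$ may be adjacent to $v_{j-1}$ and $v_{j+1}$, neither of which lies in $X$. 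A naive greedy extension could fail: $v_j$ might have two neighbors in $X$ receiving two distinct colors, and then the color of $v_{j-1}$ could in principle block the third.

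The paper resolves this by coloring the path vertices $v_1,\ldots,v_\ell$ in order and, for each $v_j$, writing $u_{v_j} = \alpha_1 u_{z_1} + \alpha_2 u_{z_2} + \alpha_3 u_{z_3}$ and choosing a color $i$ with $\alpha_i \neq 0$ that differs from the color of $v_{j-1}$. The nontrivial point is that such an $i$ always exists: if at least two coefficients are nonzero this is immediate, and if only one coefficient $\alpha_i$ is nonzero then $u_{v_j}$ is proportional to $u_{z_i}$, the orthogonality $\langle u_{v_{j-1}}, u_{v_j}\rangle = 0$ forces the $u_{z_i}$-coefficient of $u_{v_{j-1}}$ to vanish, and hence $v_{j-1}$ was \emph{not} assigned color $i$. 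This step genuinely uses the orthogonality along the path edge $\{v_{j-1},v_j\}$ and cannot be skipped; without it the transformation is not shown to be correct.
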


\begin{proof}
Fix a field $\Fset$.
We use here the transformation applied in the proof of Theorem~\ref{thm:reductionColOD} for $d=3$ and analyze it with respect to the $\Path+k\mathrm{v}$ parameterization.
Consider an instance of the $3$-$\Col$ problem on $\Path+k\mathrm{v}$ graphs, namely, a graph $G=(V,E)$ and a set $X \subseteq V$ of size $|X|=k$, such that the graph $G \setminus X$ is a path.
We first add to $G$ a triangle whose vertices are denoted by $z_1, z_2, z_3$.
Then, for each index $i \in [3]$ and each vertex $v \in X$, we add to the graph a copy $H_{i,v}$ of the complement $\overline{C_{6}}$ of the cycle graph on $6$ vertices, where two consecutive vertices of the cycle are identified with the vertices $z_i$ and $v$. Note that we add here $3k$ such gadgets to the graph and that each of them involves $4$ new vertices. Let $G' = (V',E')$ denote the obtained graph, and let $X'$ denote the set that consists of the vertices of $X$ and the vertices that were added to $G$ in the construction. The transformation returns the pair $(G',X')$. Since $G \setminus X$ is a path, so is $G' \setminus X'$. It further holds that $|X'| = k+ 3+3k \cdot 4 = O(k)$, hence the transformation is linear-parameter. The transformation can clearly be implemented in polynomial time.
For correctness, we shall prove that $\chi(G) \leq 3$ if and only if $\od_\Fset(G') \leq 3$.

Suppose first that $\chi(G) \leq 3$, and consider some proper $3$-coloring of $G$ with color set $[3]$.
We extend this coloring to a $3$-coloring of $G'$ as follows.
First, for each $i \in [3]$, we assign the color $i$ to the vertex $z_i$.
Clearly, no edge of the triangle of $z_1, z_2, z_3$ is monochromatic.
Next, for each $i \in [3]$ and $v \in X$, consider the vertices of the component $H_{i,v}$. The only vertices of $H_{i,v}$ that already received colors are $z_i$ and $v$. By Items~\ref{itm:same} and~\ref{itm:distinct} of Lemma~\ref{lemma:gadget}, this partial coloring of $H_{i,v}$ can be extended to a proper $3$-coloring of the whole gadget.
This gives us a proper $3$-coloring of $G'$, which implies using Claim~\ref{claim:OD<=CHI} that $\od_\Fset(G') \leq \chi(G') \leq 3$.

For the converse direction, suppose that $\od_\Fset(G') \leq 3$, and consider a $3$-dimensional orthogonal representation $(u_v)_{v \in V'}$ of $G'$ over $\Fset$.
The vectors $u_{z_1}, u_{z_2}, u_{z_3}$ are non-self-orthogonal and pairwise orthogonal, hence they are linearly independent and thus span the entire vector space $\Fset^3$. For each $i \in [3]$ and $v \in X$, consider the vectors assigned by the given orthogonal representation to the vertices of the component $H_{i,v}$ in $G'$, and apply Item~\ref{itm:od} of Lemma~\ref{lemma:gadget} to obtain that the vectors $u_{z_i}$ and $u_v$ are either orthogonal or proportional. It follows that for each vertex $v \in X$, the vector $u_v$ is proportional to exactly one of the vectors $u_{z_1}, u_{z_2}, u_{z_3}$.

We define a $3$-coloring of $G$ as follows.
To each vertex $v \in X$, assign the color $i \in [3]$ for which $u_v$ is proportional to $u_{z_i}$.
Since the given orthogonal representation assigns orthogonal vectors to adjacent vertices, it follows that this coloring assigns distinct colors to adjacent vertices in $X$.
Recalling that $G \setminus X$ is a path, let $v_1, \ldots, v_\ell$ denote its vertices according to their order along the path.
We extend the $3$-coloring of $X$ to the vertices of the path, one by one, as follows.
For some $j \in [\ell]$, suppose that we have already assigned colors to $v_1, \ldots, v_{j-1}$.
The vector $u_{v_j}$ can be written as $u_{v_j} = \alpha_1 \cdot u_{z_1} + \alpha_2 \cdot u_{z_2} + \alpha_3 \cdot u_{z_3}$ for some coefficients $\alpha_1, \alpha_2, \alpha_3 \in \Fset$, because $u_{z_1},u_{z_2},u_{z_3}$ span $\Fset^3$.
We assign to $v_j$ a color $i \in [3]$ with $\alpha_i \neq 0$ that differs from the color of the preceding vertex $v_{j-1}$ in the path (where the latter condition is ignored for $j=1$).
Observe that the condition $\alpha_i \neq 0$ guarantees that the color assigned to $v_j$ is distinct from the colors of its neighbors in $X$, because their vectors are orthogonal to $u_{v_j}$ and because each of them is proportional to one of $u_{z_1}, u_{z_2}, u_{z_3}$.
The condition that $i$ differs from the color of $v_{j-1}$ guarantees that the color of $v_j$ is different from that of its already colored neighbor in the path.
It remains to show that there exists a color that satisfies the two conditions.
First, since $u_{v_j}$ is nonzero, there exists some index $i \in [3]$ satisfying $\alpha_i \neq 0$.
If there are at least two such indices $i$, then it is clearly possible to choose one that differs from the color of $v_{j-1}$. Otherwise, $u_{v_j} = \alpha_i \cdot u_{z_i}$ for some $i \in [3]$ with $\alpha_i \neq 0$. Writing the vector $u_{v_{j-1}}$ as a linear combination of $u_{z_1}, u_{z_2}, u_{z_3}$, the coefficient of $u_{z_i}$ must be zero, because $u_{v_j}$ is orthogonal to $u_{v_{j-1}}$. By the definition of our coloring, this implies that the vertex $v_{j-1}$ is not assigned color $i$, so the color $i$ can be assigned to $v_j$.
This completes the proof.
\end{proof}

We finally combine Theorems~\ref{thm:ColPath} and~\ref{thm:reductionColODPath} to derive Theorem~\ref{thm:ODPath}.

\begin{proof}[ of Theorem~\ref{thm:ODPath}]

Fix a field $\Fset$.
By Theorem~\ref{thm:reductionColODPath}, there exists a linear-parameter transformation from $3$-$\Col$ on $\Path+k\mathrm{v}$ graphs to $3\dODP_\Fset$ on $\Path+k\mathrm{v}$ graphs.
Therefore, if $3\dODP_\Fset$ on $\Path+k\mathrm{v}$ graphs admits a polynomial compression, then by composing this compression with the given transformation, it follows that $3\qCol$ on $\Path+k\mathrm{v}$ graphs admits a polynomial compression as well.
By Theorem~\ref{thm:ColPath}, this implies that $\NP \subseteq \coNPpoly$, and we are done.
\end{proof}

\section{Kernelization for \texorpdfstring{$d\dODP_\Fset$}{d-Ortho-Dim F} with Structural Parameterizations}\label{sec:KernelG}

We study now the kernelizability of the $d\dODP_\Fset$ problem on $\calG+k\mathrm{v}$ graphs for a general hereditary family of graphs $\calG$, extending the case of $\calG = \Empty$ considered in the previous sections.
To do so, we need the following variant of the $d\dODP_\Fset$ problem.

\begin{definition}\label{def:SC}
For a field $\Fset$ and an integer $d$, define the $d\dSubChoose_\Fset$ problem as follows.
An instance of the problem is a pair $(G,L)$ of a graph $G=(V,E)$ and a function $L$ that maps each vertex $v \in V$ to a subspace $L(v)$ of $\Fset^d$.
The goal is to decide whether there exists a $d$-dimensional orthogonal representation $(u_v)_{v \in V}$ of $G$ over $\Fset$ satisfying $u_v \in L(v)$ for all $v \in V$.
\end{definition}
\noindent
For instances $(G,L)$ and $(G',L')$ of $d\dSubChoose_\Fset$, we say that $(G',L')$ is a sub-instance of $(G,L)$ if $G'$ is an induced subgraph of $G$ and for every vertex $v$ of $G'$ it holds that $L(v)=L'(v)$.
For a function $g: \N \rightarrow \N$, we say that a family of graphs $\calG$ has $g$-size $\NO$-certificates for $\SubChoose_\Fset$ if for every integer $d$, every $\NO$ instance $(G,L)$ of $d\dSubChoose_\Fset$ with $G \in \calG$ has a $\NO$ sub-instance $(G',L')$, where $G'$ has at most $g(d)$ vertices.
For example, it can be seen that the family $\Empty$ has $g$-size $\NO$-certificates for $\SubChoose_\Fset$ for the function $g$ defined by $g(d)=1$ for all integers $d$. This indeed follows from the fact that a $\NO$ instance $(G,L)$ of the problem with $G \in \Empty$ must have a vertex $v$, such that all the vectors of $L(v)$ are self-orthogonal.

The following theorem shows that if a hereditary family of graphs $\calG$ has $g$-size $\NO$-certificates for $\SubChoose_\Fset$ for some function $g$, then the $d\dODP_\Fset$ problem on $\calG+k\mathrm{v}$ graphs admits a polynomial kernel for every integer $d$.
Its proof employs the technique used in~\cite{JansenK13} for the analogous coloring problem.
Note that the special case of the theorem corresponding to $\calG = \Empty$ coincides with Theorem~\ref{thm:IntroUpperF}.
\begin{theorem}\label{thm:KernelGenG}
For a field $\Fset$, a function $g: \N \rightarrow \N$, and a hereditary family of graphs $\calG$, suppose that $\calG$ has $g$-size $\NO$-certificates for $\SubChoose_\Fset$.
Then for every integer $d$, the $d\dODP_\Fset$ problem on $\calG+k\mathrm{v}$ graphs admits a kernel with $O(k^{d \cdot g(d)})$ vertices.
\end{theorem}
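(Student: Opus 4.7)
The plan is to generalize the graph $\KG$ from Definition~\ref{def:K} to accommodate a general hereditary family $\calG$, by recording representatives not only for single vertices of $V \setminus X$ together with their $X$-neighborhoods, but for tuples of up to $g(d)$ such vertices together with their induced adjacencies and $X$-neighborhoods. Given an input $(G,X,d)$ with $|X|=k$ and $G \setminus X \in \calG$, I define $\KG'$ as follows. Start from $G[X]$. For each \emph{type} $\tau = (H, S_1, \ldots, S_\ell)$, where $\ell \leq g(d)$, $H$ is a graph on labeled vertices $\{t_1, \ldots, t_\ell\}$, and each $S_i \subseteq X$ satisfies $|S_i| \leq d$, test whether there exist distinct vertices $v_1, \ldots, v_\ell \in V \setminus X$ such that $G[\{v_1, \ldots, v_\ell\}]$ is isomorphic to $H$ via $t_i \mapsto v_i$ and $S_i \subseteq N_G(v_i) \cap X$ for every $i$. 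If so, add to $\KG'$ a fresh copy $t_1^\tau, \ldots, t_\ell^\tau$ of $H$ and connect $t_i^\tau$ to every vertex of $S_i$. Since $d$ and $g(d)$ are constants, the number of types is at most $2^{\binom{g(d)}{2}} \cdot O(k^d)^{g(d)} = O(k^{d \cdot g(d)})$, each contributing at most $g(d)$ vertices, so $\KG'$ has $O(k^{d \cdot g(d)})$ vertices and can be constructed in polynomial time.

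The forward implication $\od_\Fset(G) \leq d \Rightarrow \od_\Fset(\KG') \leq d$ is immediate: given an orthogonal representation $(u_v)_{v \in V}$ of $G$, assign $u_v$ to each $v \in X$ and $u_{v_i}$ to $t_i^\tau$, where $v_1, \ldots, v_\ell$ are witnesses chosen for $\tau$. Each $u_{v_i}$ is non-self-orthogonal, orthogonal to $u_w$ for all $w \in N_G(v_i) \cap X \supseteq S_i$, and orthogonal to $u_{v_j}$ precisely when $v_i$ and $v_j$ are adjacent in $G$, i.e., when $t_i^\tau$ and $t_j^\tau$ are adjacent in $\KG'$.

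For the reverse implication, assume $(u_v)_{v \in V(\KG')}$ is a $d$-dimensional orthogonal representation of $\KG'$ over $\Fset$. For each $v \in V \setminus X$, set $L(v) = \linspan(\{u_w : w \in N_G(v) \cap X\})^\perp$, producing an instance $(G \setminus X, L)$ of $d\dSubChoose_\Fset$. Any solution extends the assignment on $X$ to an orthogonal representation of $G$, so suppose toward contradiction that it is a $\NO$ instance. Since $G \setminus X \in \calG$, the hypothesis supplies a $\NO$ sub-instance $(G[T], L|_T)$ with $T = \{v_1, \ldots, v_\ell\}$ and $\ell \leq g(d)$. For each $i$, the subspace $W_i = \linspan\{u_w : w \in N_G(v_i) \cap X\}$ has dimension at most $d$, so some $S_i \subseteq N_G(v_i) \cap X$ with $|S_i| \leq d$ spans $W_i$. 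Thus $\tau = (G[T], S_1, \ldots, S_\ell)$ is witnessed by $T$ in $G$, and the corresponding vertices $t_1^\tau, \ldots, t_\ell^\tau$ live in $\KG'$. The vectors $u_{t_i^\tau}$ are non-self-orthogonal, orthogonal to $S_i$ and hence to the whole of $W_i$, so $u_{t_i^\tau} \in L(v_i)$; moreover, $u_{t_i^\tau} \perp u_{t_j^\tau}$ iff $v_i$ and $v_j$ are adjacent in $G[T]$. This is a valid solution to the sub-instance, contradicting its being $\NO$.

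The main obstacle is that the correct type $\tau$ witnessing a failure of extension cannot be chosen at kernelization time, as it depends on the vectors $u_w$ assigned to $X$ by a hypothetical representation of $\KG'$. The way around this is to record a representative for \emph{every} potentially witnessable type in advance, then use the span argument above to show that any actual extension failure must correspond to one of them. A secondary point is that $\KG' \setminus X$, being a disjoint union of induced subgraphs of $G \setminus X \in \calG$, belongs to $\calG$ whenever $\calG$ is closed under disjoint union, as holds for the hereditary families $\USplit$ and $\UCochordal$ mentioned in the introduction.
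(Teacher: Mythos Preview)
Your equivalence argument is essentially the paper's: enumerate all ``types'' $(H,S_1,\ldots,S_\ell)$ with $\ell\le g(d)$ and $|S_i|\le d$, keep one witness tuple per realized type, and use the $g$-size $\NO$-certificate hypothesis together with the basis-extraction trick ($S_i$ spanning $W_i$) to show that any failure to extend a representation of the kernel to $G$ would already be contradicted by the vectors on a stored witness tuple. The counting and running-time estimates are also the same.

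There is, however, a genuine gap in your construction, which you flag only as a ``secondary point'': you add \emph{fresh disjoint copies} $t_1^\tau,\ldots,t_\ell^\tau$ for each realized type, so $\KG'\setminus X$ is a disjoint union of induced subgraphs of $G\setminus X$. A kernel must output an instance of the same parameterized problem, i.e.\ a pair $(\KG',X)$ with $\KG'\setminus X\in\calG$. For a merely hereditary $\calG$ this need not hold (take, e.g., $\calG$ the hereditary family of complete graphs: two disjoint edges are not complete). Closure under disjoint union is not part of the hypothesis of the theorem, so as stated your algorithm does not produce a kernel for general $\calG$.

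The paper avoids this by not adding fresh copies at all: it \emph{marks} the actual witness vertices $v_1,\ldots,v_\ell\in V\setminus X$ for each realized type and returns $G'=G[X\cup Y]$, where $Y$ is the set of all marked vertices. Then $G'\setminus X$ is an induced subgraph of $G\setminus X\in\calG$, hence lies in $\calG$ by heredity alone. The forward direction becomes trivial ($G'$ is an induced subgraph of $G$), and the backward direction is exactly your argument: the marked witness tuple for the relevant type sits inside $G'$, its vectors lie in the required $W_i^\perp$ (since $S_i\subseteq N_{G'}(v_i)\cap X$), and its internal adjacencies match $H'$, yielding the contradiction. Replacing your fresh-copy step by this marking step closes the gap with no change to the rest of your proof.
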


\begin{proof}
Fix an integer $d$, and consider an instance of the $d\dODP_\Fset$ problem on $\calG+k\mathrm{v}$ graphs, i.e., a graph $G= (V,E)$ and a set $X \subseteq V$ of size $|X|=k$ such that $G \setminus X \in \calG$. Consider the algorithm that given such an instance acts as follows.
\begin{enumerate}
  \item\label{itm:1kernelG} For every graph $F$ on $t \leq g(d)$ vertices $h_1, \ldots, h_t$ and for every $t$-tuple $(S_1,\ldots,S_t)$ of sets of at most $d$ vertices of $X$, if there is an induced subgraph of $G \setminus X$ on vertices $v_1, \ldots,v_t$ that is isomorphic to $F$ by the mapping $v_i \mapsto h_i$ and with $S_i \subseteq N_G(v_i)$ for all $i \in [t]$, then mark the vertices $v_1, \ldots, v_t$ for one such subgraph chosen arbitrarily.
  \item Let $Y$ denote the set of all marked vertices of $G \setminus X$, and set $V' = X \cup Y$.
  \item Return the instance $(G',X)$ where $G' = G[V']$.
\end{enumerate}
\noindent
Note that $G'$ is an induced subgraph of $G$. By $G \setminus X \in \calG$, using the fact that $\calG$ is hereditary, it follows that $G' \setminus X \in \calG$, hence $(G',X)$ is an appropriate instance of the problem.

We first observe that the produced graph $G'$ has $O(k^{d \cdot g(d)})$ vertices and that the algorithm can be implemented in polynomial time.
Indeed, every iteration of Step~\ref{itm:1kernelG} of the algorithm is associated with a graph $F$ on $t \leq g(d)$ vertices and with a $t$-tuple of sets of at most $d$ vertices of $X$. Recalling that $d$ is a fixed constant and that $|X|=k$, it follows that the number of iterations is bounded by $O(k^{d \cdot g(d)})$. In every such iteration, at most $g(d)$ vertices are marked, hence the total number of vertices in $G'$ is $|X|+|Y| = k+O(k^{d \cdot g(d)}) = O(k^{d \cdot g(d)})$. As for the running time, an iteration associated with a graph $F$ on $t$ vertices and a $t$-tuple $(S_1, \ldots, S_t)$ can be performed by enumerating  all the $t$-subsets of $V \setminus X$ and checking whether the corresponding induced subgraph of $G$ is isomorphic to $F$. Since $d$ is a fixed constant, using $t \leq g(d)$, this can be done in polynomial time by trying all the permutations of the vertex set of $F$. As explained above, the number of iterations is also polynomial, hence the algorithm can be implemented in polynomial time.

We next prove the correctness of the kernel, namely, $\od_\Fset(G) \leq d$ if and only if $\od_\Fset(G') \leq d$.
It clearly holds that if  $\od_\Fset(G) \leq d$ then $\od_\Fset(G') \leq d$, because $G'$ is a subgraph of $G$.
For the converse, suppose that $\od_\Fset(G') \leq d$, that is, there exists a $d$-dimensional orthogonal representation $(u_v)_{v \in V'}$ of $G'$ over $\Fset$.
To define a $d$-dimensional orthogonal representation of $G$ over $\Fset$, we first assign to each vertex $v \in X$ the vector $u_v$. Every two adjacent vertices in $G[X]$ are obviously assigned orthogonal vectors, and we would like to extend this partial assignment to a $d$-dimensional orthogonal representation of $G$ over $\Fset$.
This task may be represented in terms of the $d\dSubChoose_\Fset$ problem.
Indeed, let $H = G \setminus X$ be the subgraph of $G$ induced by $V_H = V \setminus X$, and assign to every vertex $v \in V_H$ the subspace $L(v) \subseteq \Fset^d$ that consists of all the vectors that are orthogonal to the vectors assigned to the neighbors of $v$ in $X$, that is, \[L(v) = \linspan(\{ u_{v'} \mid v' \in X \cap N_G(v) \})^\perp.\]
It follows that our partial orthogonal representation of $G$ can be extended to the whole graph if and only if $(H,L)$ is a $\YES$ instance of $d\dSubChoose_\Fset$.
Suppose for contradiction that $(H,L)$ is a $\NO$ instance of $d\dSubChoose_\Fset$.
Since $H \in \calG$, our assumption on $\calG$ implies that $(H,L)$ has a $\NO$ sub-instance $(H',L')$, where $H'$ has $t$ vertices for some $t \leq g(d)$.

Denote the vertices of $H'$ by $h_1, \ldots, h_t$.
For an $i \in [t]$, let $W_i = \linspan(\{ u_{v'} \mid v' \in X \cap N_G(h_i) \})$ denote the subspace spanned by the vectors assigned to the vertices of $X$ that are adjacent to $h_i$ in $G$. It follows that there exists a set $S_i \subseteq X \cap N_G(h_i)$ such that the vectors $u_{v'}$ with $v' \in S_i$ form a basis of the subspace $W_i$ of $\Fset^d$, and it clearly holds that $|S_i| \leq d$.
Since $H'$ is an induced subgraph of $G \setminus X$ with at most $g(d)$ vertices, and since the $t$-tuple $(S_1, \ldots, S_t)$ satisfies $S_i \subseteq X \cap N_G(h_i)$ and $|S_i| \leq d$ for all $i \in [t]$, one of the iterations of our algorithm must have marked the vertices of an induced subgraph $H^*$ of $G \setminus X$ on vertices $v_1, \ldots, v_t$, such that
$H^*$ is isomorphic to $H'$ by the mapping $v_i \mapsto h_i$ and, in addition, $S_i \subseteq X \cap N_G(v_i)$ for all $i \in [t]$.
It follows that the vertices of $H^*$ are included in the graph $G'$ and that for each $i \in [t]$, the vectors assigned by the given orthogonal representation of $G'$ to the neighbors of $v_i$ in $X$ span $W_i$.
Therefore, the restriction of the given orthogonal representation of $G'$ to the vertices of $H^*$ assigns to each vertex $v_i$ a vector from $W_i^\perp$.
This gives us an orthogonal representation of $H'$, such that for each $i \in [t]$, the vertex $h_i$ is assigned a vector from $L(h_i)$, and thus from $L'(h_i)$, contradicting the fact that $(H',L')$ is a $\NO$ instance of $d\dSubChoose_\Fset$. This completes the proof.
\end{proof}

\subsection{\texorpdfstring{$\NO$}{NO}-certificates for \texorpdfstring{$d\dSubChoose_\Fset$}{d-Subspace Choosability F}}

In what follows, we demonstrate the applicability of Theorem~\ref{thm:KernelGenG}.
To do so, we consider a couple of hereditary families of graphs and prove upper bounds on the size of their $\NO$-certificates for the $d\dSubChoose_\Fset$ problem.

\subsubsection{Split Graphs}

Consider the following definition.

\begin{definition}\label{def:split}
A graph $G=(V,E)$ is called a split graph if there exists a partition $V = C \cup I$ of its vertex set into a clique $C$ and an independent set $I$.
Let $\Split$ denote the family of all split graphs, and let $\USplit$ denote the family of all graphs whose connected components are members of $\Split$.
\end{definition}
\noindent
Note that the family $\USplit$ is hereditary.

For integers $d$, $k$, and $q$, the $q$-binomial coefficient ${\rectbinom{d}{k}}_q$ is defined by $\rectbinom{d}{k}_q = \prod_{i=0}^{k-1}{\frac{q^{d-i}-1}{q^{k-i}-1}}$.
For a finite field $\Fset$ of order $q$, ${\rectbinom{d}{k}}_q$ is the number of $k$-dimensional subspaces of $\Fset^d$.
We will use the following Sperner-type result on subspaces over finite fields (see, e.g.,~\cite[Example~4.6.2]{SpernerBook97}).

\begin{theorem}\label{thm:SpernerSub}
Let $\Fset$ be a finite field of order $q$, and let $d$ be an integer.
Let $\calW$ be a collection of subspaces of $\Fset^d$, such that no subspace in $\calW$ is contained in another subspace in $\calW$.
Then $|\calW| \leq \rectbinom{d}{\lfloor d/2 \rfloor }_{q}$.
\end{theorem}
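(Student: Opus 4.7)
The plan is to mimic the classical proof of Sperner's theorem, adapting it to the subspace lattice of $\Fset^d$ via a $q$-analog of the LYM inequality. Concretely, I would double count incidences between the antichain $\calW$ and the set of \emph{maximal flags} in $\Fset^d$, that is, chains $\{0\} = V_0 \subset V_1 \subset \cdots \subset V_d = \Fset^d$ with $\dim V_i = i$. The antichain condition gives that each maximal flag meets $\calW$ in at most one subspace, because any two members of a flag are comparable.

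First I would count maximal flags. By successively choosing a hyperplane inside the current subspace, the total number $f(m)$ of maximal flags in $\Fset^m$ equals $\prod_{j=1}^{m} (q^j-1)/(q-1)$. For a fixed subspace $W \subseteq \Fset^d$ with $\dim W = k$, the number of maximal flags passing through $W$ factors as the number of maximal flags in $W$ times the number of maximal flags in the quotient $\Fset^d / W$, namely $f(k) \cdot f(d-k)$. A direct calculation with these product formulas yields the key identity
\[
\frac{f(k) \cdot f(d-k)}{f(d)} \;=\; \frac{1}{\rectbinom{d}{k}_q},
\]
which is precisely the $q$-analog of $k!(d-k)!/d! = 1/\binom{d}{k}$.

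Summing the count of flags through each $W \in \calW$ and using that distinct elements of $\calW$ share no common maximal flag gives $\sum_{W \in \calW} f(\dim W)\,f(d-\dim W) \leq f(d)$, i.e.\ the $q$-analog LYM inequality
\[
\sum_{W \in \calW} \frac{1}{\rectbinom{d}{\dim W}_q} \;\leq\; 1.
\]
To finish, I would invoke the unimodality of the Gaussian binomial coefficients: $\rectbinom{d}{k}_q \leq \rectbinom{d}{\lfloor d/2 \rfloor}_q$ for every $0 \leq k \leq d$. Replacing each denominator above by its maximum value yields $|\calW| / \rectbinom{d}{\lfloor d/2 \rfloor}_q \leq 1$, which is the desired bound.

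The main technical obstacle is the unimodality fact, which is standard but non-trivial; I would cite it as a classical result of Sylvester (or derive it from the log-concavity of $\rectbinom{d}{k}_q$ in $k$). Everything else reduces to the routine flag-counting identity above and the one-line antichain argument, so the hard work is really just the unimodality input.
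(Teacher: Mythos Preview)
Your flag-counting argument is correct and is precisely the standard proof of the $q$-analog of Sperner's theorem. Note, however, that the paper does not supply its own proof of this statement at all: Theorem~\ref{thm:SpernerSub} is stated with a reference to the literature (Example~4.6.2 in the cited Sperner theory monograph) and then used as a black box in the proof of Lemma~\ref{lemma:NosplitFinite}. So there is no ``paper's proof'' to compare against; you have simply filled in the classical argument that the paper elected to cite rather than reproduce.
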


For the $\SubChoose_\Fset$ problem over finite fields $\Fset$, we prove the following result.
\begin{lemma}\label{lemma:NosplitFinite}
Let $\Fset$ be a finite field of order $q$, and let $g:\N \rightarrow \N$ be the function defined by \[g(d) = d+2^d \cdot \rectbinom{d}{\lfloor d/2 \rfloor }_{q}.\]
Then the family $\USplit$ has $g$-size $\NO$-certificates for $\SubChoose_\Fset$.
\end{lemma}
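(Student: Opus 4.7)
My plan is to reduce to a single split component, dispose of large cliques by a trivial dimension argument, and then use Theorem~\ref{thm:SpernerSub} on the independent side to control the remaining vertices. In detail, suppose $(G,L)$ is a $\NO$ instance of $d\dSubChoose_\Fset$ with $G\in\USplit$. Since orthogonal representations on distinct connected components do not interact, some connected component $G_0$ of $G$ together with the restriction of $L$ is already a $\NO$ instance, and $G_0 \in \Split$. So it suffices to build a $\NO$ sub-instance of size $\le g(d)$ assuming $G_0$ is a split graph with partition $V(G_0) = C \cup I$ into a clique $C$ and an independent set $I$.

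Case~1: $|C| > d$. Any $d+1$ vertices of $C$ yield a clique whose vectors would have to be pairwise orthogonal and non-self-orthogonal, hence linearly independent; this is impossible in $\Fset^d$. Thus the induced sub-instance on these $d+1$ vertices is $\NO$, and $d+1 \le g(d)$.

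Case~2: $|C| \le d$. I keep all of $C$, and for each subset $S \subseteq C$ (of which there are at most $2^d$) I handle the set $I_S = \{v \in I \mid N_{G_0}(v) = S\}$ as follows. Consider the collection of subspaces $\{L(v) \mid v \in I_S\}$ and let $\mathcal{M}_S$ be its set of $\subseteq$-minimal elements. The family $\mathcal{M}_S$ is an antichain of subspaces of $\Fset^d$, so by Theorem~\ref{thm:SpernerSub} it has at most $\rectbinom{d}{\lfloor d/2 \rfloor}_q$ elements. For each $M \in \mathcal{M}_S$, pick one representative vertex $v_M \in I_S$ with $L(v_M) = M$, and let $A_S$ be the set of all such representatives, so $|A_S| \le \rectbinom{d}{\lfloor d/2 \rfloor}_q$. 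Define $V' = C \cup \bigcup_{S \subseteq C} A_S$; then $|V'| \le d + 2^d \cdot \rectbinom{d}{\lfloor d/2 \rfloor}_q = g(d)$, and the sub-instance $(G_0[V'], L|_{V'})$ inherits the split structure.

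To verify the sub-instance is $\NO$, suppose for contradiction it has a valid assignment $\phi$. Then $\phi|_C$ is a valid assignment of $C$, and since the original $(G_0,L)$ is $\NO$, there must exist some $v \in I$ with neighborhood $S$ for which $L(v) \cap \phi(S)^\perp$ contains no non-self-orthogonal vector, i.e., $v$ is ``bad'' for $\phi|_C$. By the minimality construction, there is some $v_M \in A_S$ with $L(v_M) \subseteq L(v)$; then $L(v_M) \cap \phi(S)^\perp \subseteq L(v) \cap \phi(S)^\perp$ also contains no non-self-orthogonal vector, contradicting that $\phi(v_M) \in L(v_M) \cap \phi(N_{G_0}(v_M))^\perp$ is non-self-orthogonal. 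The key step, and the only place where real work happens, is the invocation of Sperner's theorem for subspaces to bound $|A_S|$; everything else is bookkeeping once one observes that goodness for a fixed $\phi$ is monotone upward in $L(v)$ within each neighborhood class $I_S$.
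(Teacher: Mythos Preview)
Your proof is correct and follows essentially the same approach as the paper: reduce to a single split component, handle large cliques trivially, and in the small-clique case keep one vertex per $\subseteq$-minimal subspace within each neighborhood class, invoking Theorem~\ref{thm:SpernerSub} to bound the resulting antichain. The paper phrases the pruning step iteratively (repeatedly delete $v_2$ whenever some $v_1$ with the same neighborhood satisfies $L(v_1)\subseteq L(v_2)$), but this terminates at exactly the set of minimal representatives you select directly, and the monotonicity observation you use to derive the contradiction is the same as the paper's remark that any vector valid for $v_1$ is valid for $v_2$.
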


\begin{proof}
Let $d$ be an integer, and let $(G,L)$ be a $\NO$ instance  of $d\dSubChoose_\Fset$ for a graph $G=(V,E)$ that lies in $\USplit$.
Our goal is to show that $(G,L)$ has a $\NO$ sub-instance with at most $g(d)$ vertices.
It may be assumed that $G \in \Split$, because if $G$ is a $\NO$ instance, then one of its connected components is a $\NO$ instance as well.
It follows that there exists a partition $V = C \cup I$ of the vertex set of $G$, where $C$ is a clique and $I$ is an independent set.

To obtain the desired sub-instance of $(G,L)$ we act as follows.
If $|C| \geq d+1$, then any sub-instance of $(G,L)$ induced by $d+1$ vertices of $C$ is clearly a $\NO$ instance, and we are done.
Otherwise, we start with $(G,L)$ and as long as there exist two distinct vertices $v_1,v_2 \in I$ satisfying $N_G(v_1) = N_G(v_2)$ and $L(v_1) \subseteq L(v_2)$, we remove $v_2$ from the graph. Observe that after the removal of $v_2$, the instance remains a $\NO$ instance, because any assignment of a vector to the vertex $v_1$ suits the vertex $v_2$ as well. Let $(G',L')$ denote the $\NO$ sub-instance of $(G,L)$ obtained at the end of the process, and let $V'$ denote the vertex set of $G'$.
To complete the argument, we show that $|V'| \leq g(d)$.
Indeed, for every set $S \subseteq C$, consider the collection of subspaces $\calW_S = \{L'(v) \mid v \in V' \cap I,~N_G(v) = S\}$ of $\Fset^d$.
By construction, no subspace in $\calW_S$ is contained in another subspace in $\calW_S$, hence by Theorem~\ref{thm:SpernerSub}, it holds that $|\calW_S| \leq \rectbinom{d}{\lfloor d/2 \rfloor }_{q}$. Using $|C| \leq d$, we obtain that
\[|V'| \leq d+ \sum_{S \subseteq C}{|\calW_S|} \leq d+ 2^d \cdot \rectbinom{d}{\lfloor d/2 \rfloor }_{q} = g(d),\] so we are done.
\end{proof}

We turn to the more nuanced case of the $\SubChoose_\Fset$ problem over infinite fields $\Fset$.
We assume here that the field $\Fset$ satisfies that no nonzero vector over $\Fset$ is self-orthogonal (as is the case for the real field $\R$).
We need the following theorem of Kalai~\cite{Kalai80}.
\begin{theorem}[\cite{Kalai80}]\label{thm:Kalai}
For a field $\Fset$ and an integer $d$, let $(P_1,Q_1), \ldots, (P_m,Q_m)$ be $m$ pairs of subspaces of $\Fset^d$ such that
\begin{enumerate}
  \item $P_i \cap Q_i = \{0\}$ for every $i \in [m]$, and
  \item $P_i \cap Q_j \neq \{0\}$ for every $i \neq j \in [m]$.
\end{enumerate}
Then $m \leq \binom{d}{\lfloor d/2 \rfloor }$.
\end{theorem}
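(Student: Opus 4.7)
The plan is to approach the theorem via an exterior algebra argument, in the spirit of Lov\'asz's proof of the uniform-dimension Bollob\'as inequality for subspaces. First, I would perform a reduction: for each $i$, extend $Q_i$ to a subspace $\hat{Q}_i \supseteq Q_i$ of dimension $d - \dim P_i$ satisfying $P_i \oplus \hat{Q}_i = \Fset^d$. Since enlarging $Q_i$ can only enlarge its intersection with each $P_j$, condition~(2) is preserved, and condition~(1) now holds with equality. Hence I may assume $p_i + q_i = d$ for all $i$, where $p_i = \dim P_i$ and $q_i = \dim Q_i$.

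For each $i$, I would then choose an ordered basis of $P_i$ and an ordered basis of $Q_i$ and take their wedge products to form $\omega_i \in \bigwedge^{p_i} \Fset^d$ and $\tau_i \in \bigwedge^{d-p_i} \Fset^d$. These are nonzero Pl\"ucker-type elements determined up to nonzero scalars by $P_i$ and $Q_i$. Two key properties should be established. First, $\omega_i \wedge \tau_i$ is a nonzero element of $\bigwedge^d \Fset^d$, because $P_i \oplus Q_i = \Fset^d$. Second, for every $i \neq j$, condition~(2) supplies a nonzero vector $v \in P_i \cap Q_j$; by choosing bases of $P_i$ and $Q_j$ containing $v$, each of $\omega_i$ and $\tau_j$ factors through $v$ up to a nonzero scalar, so $\omega_i \wedge \tau_j$ contains the factor $v \wedge v = 0$ and hence vanishes.

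In the uniform case where $p_i = k$ for all $i$, the bound is immediate: all $\omega_i$ lie in $\bigwedge^k \Fset^d$, and wedging any relation $\sum_i c_i \omega_i = 0$ with $\tau_j$ gives $c_j (\omega_j \wedge \tau_j) = 0$, forcing $c_j = 0$. The $\omega_i$'s are linearly independent in a space of dimension $\binom{d}{k} \leq \binom{d}{\lfloor d/2 \rfloor}$, yielding the theorem.

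The main obstacle is the non-uniform case: the $\omega_i$'s live in different graded components of $\bigwedge^{\bullet} \Fset^d$, and the direct linear-independence argument only yields the much weaker bound $m \leq 2^d$. To overcome this, I would lift all $\omega_i$'s to the common middle component $\bigwedge^{\lfloor d/2 \rfloor} \Fset^d$, of dimension $\binom{d}{\lfloor d/2 \rfloor}$, via a generic construction. Concretely, if $p_i < \lfloor d/2 \rfloor$, set $\tilde \omega_i = \omega_i \wedge \alpha_i$ for a generic $\alpha_i \in \bigwedge^{\lfloor d/2 \rfloor - p_i} \Fset^d$; if $p_i > \lfloor d/2 \rfloor$, set $\tilde \omega_i$ to be the contraction of $\omega_i$ by a generic element of $\bigwedge^{p_i - \lfloor d/2 \rfloor}(\Fset^d)^{*}$. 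Applying the dual lifting to the $\tau_i$'s, one wants to preserve both $\tilde \omega_i \wedge \tilde \tau_i \neq 0$ and $\tilde \omega_i \wedge \tilde \tau_j = 0$ for $i \neq j$, which would conclude the proof as in the uniform case. The technical heart of the argument lies here: showing that the factorization through a shared vector $v \in P_i \cap Q_j$ is compatible with the lifting operations, and that the finite collection of algebraic non-vanishing conditions needed for genericity can be satisfied, if necessary after passing to a sufficiently large field extension and then descending to $\Fset$.
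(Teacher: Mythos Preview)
The paper does not prove this theorem; it is quoted from Kalai's 1980 paper and used as a black box in the proof of Lemma~\ref{lemma:NosplitInfinite}. So there is no ``paper's own proof'' to compare against, and your proposal should be judged on its own.

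Your reduction to $p_i + q_i = d$ and the uniform-dimension argument are correct and standard: when all $p_i = k$, the $\omega_i$'s are linearly independent in $\bigwedge^{k}\Fset^d$, giving $m \le \binom{d}{k} \le \binom{d}{\lfloor d/2\rfloor}$.

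The non-uniform step, however, has a genuine gap. Wedging $\omega_i$ with a generic $\alpha_i$ does preserve the factor $v$, but contraction does not. If $\omega_i = v \wedge \omega_i'$ and you contract by a generic covector $\phi$, you get
\[
\iota_\phi(v \wedge \omega_i') \;=\; \phi(v)\,\omega_i' \;-\; v \wedge \iota_\phi(\omega_i'),
\]
which for generic $\phi$ has $\phi(v)\neq 0$ and is no longer divisible by $v$. The same failure occurs on the $\tau_j$ side. Consequently there is no reason for $\tilde\omega_i \wedge \tilde\tau_j$ to vanish when $i\neq j$, and the linear-independence argument collapses. Your proposal explicitly flags this as ``the technical heart'' but offers no mechanism to repair it; as written, the lifting scheme simply does not preserve condition~(2), and passing to a field extension does not help because the obstruction is algebraic, not about the size of the field. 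A correct proof of the non-uniform bound (as in Kalai's original argument) proceeds differently and does not attempt to force all wedge elements into a single graded piece by generic contraction.
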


We also need the following claim.

\begin{claim}\label{claim:clean}
Let $\Fset$ be a field and let $d$ be an integer, such that no nonzero vector of $\Fset^d$ is self-orthogonal.
Let $G=(V,E)$ be a graph, and let $(G,L)$ be an instance of $d\dSubChoose_\Fset$.
Let $w \in V$ be a vertex of $G$, such that for every subspace $Q \subseteq \Fset^d$ that satisfies $Q \cap L(v) \neq \{0\}$ for all vertices $v \in V \setminus \{w\}$ with $N_G(v) = N_G(w)$, it holds that $Q \cap L(w) \neq \{0\}$.
Let $(G',L')$ be the sub-instance of $(G,L)$ obtained by removing the vertex $w$.
Then, $(G,L)$ is a $\YES$ instance if and only if $(G',L')$ is a $\YES$ instance.
\end{claim}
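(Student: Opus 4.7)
The plan is to prove the two implications separately. The forward direction is immediate: if $(G,L)$ is a $\YES$ instance with orthogonal representation $(u_v)_{v\in V}$, then its restriction to $V\setminus\{w\}$ witnesses that $(G',L')$ is a $\YES$ instance, since removing a vertex preserves adjacency constraints, subspace constraints, and the non-self-orthogonality of the remaining vectors.

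For the backward direction, I would take an orthogonal representation $(u_v)_{v \in V\setminus\{w\}}$ of $G'$ over $\Fset$ with $u_v \in L'(v) = L(v)$ for every $v \neq w$, and show that it can be extended to $w$. The natural candidate is to pick any nonzero $u_w \in L(w) \cap W^\perp$, where $W = \linspan(\{u_{v'} \mid v' \in N_G(w)\})$; if such a vector exists, then $u_w$ is orthogonal to the vectors of all its neighbors by construction, and the assumption that $\Fset^d$ has no nonzero self-orthogonal vectors makes $u_w$ automatically non-self-orthogonal. So the whole task reduces to showing $Q \cap L(w) \neq \{0\}$, where $Q := W^\perp$.

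To verify this, I would apply the hypothesis on $w$ directly to the subspace $Q$. It suffices to check the premise $Q \cap L(v) \neq \{0\}$ for every vertex $v \in V \setminus \{w\}$ with $N_G(v) = N_G(w)$. For any such $v$, the vector $u_v$ belongs to $L(v)$ and is nonzero (again using the no-nonzero-self-orthogonal-vectors property, since $u_v$ is non-self-orthogonal). Moreover, since the given assignment is an orthogonal representation of $G'$, the vector $u_v$ is orthogonal to $u_{v'}$ for every neighbor $v' \in N_G(v)$, and because $N_G(v) = N_G(w)$, this means $u_v$ is orthogonal to every spanning vector of $W$; hence $u_v \in W^\perp = Q$. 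Thus $u_v \in Q \cap L(v) \setminus \{0\}$, and the premise of the hypothesis is satisfied. The conclusion then gives $Q \cap L(w) \neq \{0\}$, which supplies the desired vector $u_w$ and completes the extension.

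There is essentially no obstacle; the main thing to get right is the bookkeeping that ensures non-self-orthogonality is free (which is exactly where the assumption on $\Fset^d$ enters) and that the condition $N_G(v) = N_G(w)$ is used correctly to transfer orthogonality of $u_v$ from the neighbors of $v$ to the neighbors of $w$. Everything else follows by unwinding definitions.
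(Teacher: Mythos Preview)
Your proposal is correct and follows essentially the same approach as the paper: define $Q$ as the orthogonal complement of the span of the vectors on $N_G(w)$, verify the premise of the hypothesis on $w$ by observing that each $u_v$ with $N_G(v)=N_G(w)$ is a nonzero element of $Q\cap L(v)$, and conclude $Q\cap L(w)\neq\{0\}$, with the no-self-orthogonal-vectors assumption ensuring the resulting $u_w$ is admissible. One tiny remark: in your parenthetical ``$u_v$ is nonzero (using the no-nonzero-self-orthogonal-vectors property)'' the special assumption is not actually needed there, since non-self-orthogonality already forces $u_v\neq 0$; the assumption is genuinely used only to guarantee that the chosen $u_w$ is non-self-orthogonal.
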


\begin{proof}
It obviously holds that if $(G,L)$ is a $\YES$ instance, then so is its sub-instance $(G',L')$.
For the other direction, suppose that $(G',L')$ is a $\YES$ instance.
Therefore, there exists a $d$-dimensional orthogonal representation of $G'$ over $\Fset$ that assigns a nonzero vector $u_v \in L(v)$ to each vertex $v \in V \setminus \{w\}$.
To extend this assignment to an orthogonal representation of $G$, one has to show that there exists a non-self-orthogonal vector $u_w \in L(w)$ that is orthogonal to the vectors associated with the vertices of $N_G(w)$, equivalently, that lies in the subspace $Q = \linspan(\{u_v ~|~ v \in N_G(w)\})^\perp$.
For each vertex $v \in V \setminus \{w\}$ with $N_G(v)=N_G(w)$, the vector $u_v$ is nonzero and lies in $Q$, hence $Q \cap L(v) \neq \{0\}$.
By the assumption of the claim, it follows that $Q \cap L(w) \neq \{0\}$.
Recalling that $\Fset^d$ has no nonzero self-orthogonal vectors, this yields the existence of the desired vector for the vertex $w$ and implies that $(G,L)$ is a $\YES$ instance.
\end{proof}

\begin{lemma}\label{lemma:NosplitInfinite}
Let $\Fset$ be a field such that for every integer $d$, no nonzero vector of $\Fset^d$ is self-orthogonal.
Let $g:\N \rightarrow \N$ be the function defined by $g(d) = d+2^d \cdot \binom{d}{\lfloor d/2 \rfloor}$.
Then the family $\USplit$ has $g$-size $\NO$-certificates for $\SubChoose_\Fset$.
\end{lemma}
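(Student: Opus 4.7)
The plan is to mirror the proof of Lemma~\ref{lemma:NosplitFinite}, but to replace the Sperner-type bound for subspaces (Theorem~\ref{thm:SpernerSub}) by Kalai's theorem (Theorem~\ref{thm:Kalai}) in the final counting step. As before, let $(G,L)$ be a $\NO$ instance of $d\dSubChoose_\Fset$ with $G \in \USplit$. One reduces immediately to a connected component, so it may be assumed that $G \in \Split$, with a fixed partition $V = C \cup I$ of the vertex set into a clique $C$ and an independent set $I$. If $|C| \geq d+1$, then the sub-instance of $(G,L)$ induced by any $d+1$ vertices of $C$ is a $\NO$ instance (as an orthogonal representation would require $d+1$ pairwise orthogonal non-self-orthogonal vectors in $\Fset^d$, which is impossible), and we are done. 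Hence we may suppose $|C| \leq d$.

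Next, I would repeatedly prune vertices of $I$ using Claim~\ref{claim:clean}: as long as there is a vertex $w \in I$ such that every subspace $Q \subseteq \Fset^d$ with $Q \cap L(v) \neq \{0\}$ for all $v \in I \setminus \{w\}$ with $N_G(v) = N_G(w)$ also satisfies $Q \cap L(w) \neq \{0\}$, remove $w$ from the instance. By Claim~\ref{claim:clean}, each such removal preserves the $\NO$ property (the claim uses precisely the hypothesis that $\Fset^d$ has no nonzero self-orthogonal vectors). Let $(G',L')$ be the $\NO$ sub-instance reached when no further removal is possible, and let $V' = C \cup I'$ be its vertex set.

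For each subset $S \subseteq C$, I would now bound the number of surviving vertices $v \in I'$ with $N_G(v) = S$. Let $v_1, \ldots, v_m$ be these vertices. By the stopping condition of the pruning process, for every $i \in [m]$ there exists a subspace $Q_i \subseteq \Fset^d$ such that $Q_i \cap L(v_j) \neq \{0\}$ for all $j \in [m] \setminus \{i\}$, but $Q_i \cap L(v_i) = \{0\}$. Setting $P_i = L(v_i)$, the pairs $(P_1,Q_1), \ldots, (P_m,Q_m)$ satisfy $P_i \cap Q_i = \{0\}$ for all $i$, and $P_i \cap Q_j \neq \{0\}$ for all $i \neq j$ (after a swap of indices in the condition $Q_i \cap L(v_j) \neq \{0\}$). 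Theorem~\ref{thm:Kalai} then yields $m \leq \binom{d}{\lfloor d/2 \rfloor}$.

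Summing over the at most $2^{|C|} \leq 2^d$ subsets $S \subseteq C$, and adding the $|C| \leq d$ clique vertices, we obtain
\[
|V'| \leq d + 2^d \cdot \binom{d}{\lfloor d/2 \rfloor} = g(d),
\]
so $(G',L')$ is a $\NO$ sub-instance of $(G,L)$ with at most $g(d)$ vertices, as required. The main conceptual step is the verification that non-removability in the pruning step translates exactly into the hypotheses of Kalai's theorem; setting up the witnessing pairs $(P_i,Q_i)$ correctly, and in particular extracting $Q_i$ with $Q_i \cap L(v_i) = \{0\}$ from the failure of the Claim~\ref{claim:clean} criterion, is where one must be most careful.
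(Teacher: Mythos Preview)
Your proposal is correct and follows essentially the same route as the paper: reduce to a single split component, handle $|C|\ge d+1$ trivially, iteratively prune vertices of $I$ via Claim~\ref{claim:clean}, and then, for each neighborhood class $S\subseteq C$, use the failure of the pruning criterion to produce the witnessing pairs $(P_i,Q_i)=(L(v_i),Q_i)$ to which Kalai's theorem applies. The only cosmetic difference is that you quantify over $v\in I\setminus\{w\}$ rather than $v\in V\setminus\{w\}$ in the pruning step; since the former condition is stronger, Claim~\ref{claim:clean} still applies upon removal, and the stopping condition still yields the required $Q_i$, so nothing changes.
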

\begin{proof}
Consider a $\NO$ instance $(G,L)$ of $d\dSubChoose_\Fset$ for a graph $G=(V,E)$ that lies in $\USplit$.
As before, it may be assumed that $G \in \Split$, hence there exists a partition $V = C \cup I$ of the vertex set of $G$, where $C$ is a clique and $I$ is an independent set.
To obtain a $\NO$ sub-instance of $(G,L)$ with at most $g(d)$ vertices, we act as follows.
If $|C| \geq d+1$, then any sub-instance of $(G,L)$ induced by $d+1$ vertices of $C$ is clearly a $\NO$ instance, and we are done.
Otherwise, we start with $(G,L)$ and as long as there exists a vertex $w \in I$ such that for every subspace $Q \subseteq \Fset^d$ that satisfies $Q \cap L(v) \neq \{0\}$ for all vertices $v \in V \setminus \{w\}$ with $N_G(v)=N_G(w)$, it holds that $Q \cap L(w) \neq \{0\}$, we remove the vertex $w$ from the current instance (as well as from $V$ and $I$). By Claim~\ref{claim:clean}, the instance remains a $\NO$ instance after the removal of every such vertex $w$.
Let $(G',L')$ denote the sub-instance of $(G,L)$ obtained at the end of this process, and let $I'$ denote the set of the remaining vertices of $I$ in $G'$.

We prove that $G'$ has at most $g(d)$ vertices.
Recall that $|C| \leq d$.
For any set $S \subseteq C$, we define $I_S = \{ w \in I' ~|~ N_{G'}(w) = S\}$ as the set of vertices of $I'$ whose neighbors in $G'$ are the vertices of $S$.
With each vertex $w \in I_S$, we associate a pair $(P_w,Q_w)$ of subspaces of $\Fset^d$ as follows.
The subspace $P_w$ is simply defined to be $P_w = L(w)$.
The subspace $Q_w$ is defined to be some subspace of $\Fset^d$ satisfying $Q_w \cap L(v) \neq \{0\}$ for all vertices $v \in I_S \setminus \{w\}$ as well as $Q_w \cap L(w) = \{0\}$.
Notice that such a subspace $Q_w$ is guaranteed to exist, because otherwise $w$ would be removed from the instance $(G',L')$ in the above process.
Now, the pairs $(P_w,Q_w)$ with $w \in I_S$ satisfy $P_w \cap Q_w = \{0\}$ for every $w \in I_S$ and $P_w \cap Q_{w'} \neq \{0\}$ for every distinct $w, w' \in I_S$. By Theorem~\ref{thm:Kalai}, this implies that $|I_S| \leq \binom{d}{\lfloor d/2 \rfloor}$.
Since the number of sets $S \subseteq C$ does not exceed $2^d$, it follows that $|I'| = \sum_{S \subseteq C}{|I_S|} \leq 2^d \cdot \binom{d}{\lfloor d/2 \rfloor}$.
Therefore, together with the vertices of $C$, the number of vertices of $G'$ does not exceed $g(d)$, and we are done.
\end{proof}
\noindent
Combining Theorem~\ref{thm:KernelGenG} with Lemmas~\ref{lemma:NosplitFinite} and~\ref{lemma:NosplitInfinite} yields, for every integer $d$ and for various fields $\Fset$, a polynomial kernel for the $d\dODP_\Fset$ problem on $\USplit+k\mathrm{v}$ graphs.

We conclude this section with a lower bound, for any field $\Fset$, on the size of $\NO$-certificates for $\SubChoose_\Fset$ on instances that involve split graphs.
Here, a $\NO$ instance is called irreducible if any removal of a vertex turns it into a $\YES$ instance.

\begin{lemma}
For every field $\Fset$ and for every integer $d$, there exists an irreducible $\NO$ instance $(G,L)$ of $d\dSubChoose_\Fset$, where the graph $G$ lies in $\Split$ and has $d + \binom{d}{\lfloor d/2 \rfloor}$ vertices.
\end{lemma}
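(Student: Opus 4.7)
The plan is to build an explicit split graph $G$ on $d+\binom{d}{\lfloor d/2\rfloor}$ vertices together with a subspace assignment $L$, and then verify that $(G,L)$ is a $\NO$ instance of $d\dSubChoose_\Fset$ in which every vertex is critical. Write $k=\lfloor d/2\rfloor$. Take the clique $C=\{c_1,\ldots,c_d\}$ of size $d$ and the independent set $I=\{w_A : A\in\binom{[d]}{k}\}$ indexed by the $k$-subsets of $[d]$, so $|V(G)|=d+\binom{d}{k}$ as required. Following the extremal configuration used in Kalai's theorem (Theorem~\ref{thm:Kalai}), I would set $L(w_A)=P_A:=\linspan(e_i:i\in A)$, and choose $L(c_i)$ together with the bipartite edges between $C$ and $I$ so that the pairs $(P_A,Q_A)=(P_A,\linspan(e_i:i\notin A))$ play the role of the Kalai pairs at the graph level.

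For the $\NO$ direction, the argument is as follows. In any would-be orthogonal representation, the clique vectors $u_{c_1},\ldots,u_{c_d}$ are pairwise orthogonal non-self-orthogonal vectors of $\Fset^d$ and therefore form a basis, and each $u_{w_A}$ must lie in $P_A\cap W_A^\perp$, where $W_A$ is the span of the clique neighbors of $w_A$. A Kalai-style dimension argument then shows that no choice of orthogonal basis $(u_{c_i})$ can simultaneously make $P_A\cap W_A^\perp\neq\{0\}$ for every $A\in\binom{[d]}{k}$, because the only $(d-k)$-dimensional subspaces hitting every $P_A$ nontrivially would have to satisfy a rank condition ruled out by a Kalai-type extremality.

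For irreducibility, two types of deletions have to be handled. Deleting an independent-set vertex $w_A$ should be addressed by exhibiting a specific orthogonal basis of the clique for which $W^\perp=Q_A$; then for every $B\neq A$ the Kalai property $P_B\cap Q_A\neq\{0\}$ furnishes a nonzero vector in $L(w_B)$ orthogonal to the clique neighbors of $w_B$, and all remaining vertices can be simultaneously assigned. Deleting a clique vertex $c_j$ should be addressed by the following dimension count: the remaining clique spans only a $(d{-}1)$-dimensional subspace, so each $W_A^\perp$ gains an extra dimension, and $\dim P_A+\dim W_A^\perp>d$ forces $P_A\cap W_A^\perp\neq\{0\}$ for every $A$.

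The part I expect to be the main obstacle is designing the bipartite edges and the clique lists $L(c_i)$ so that every $c_j$ is indispensable for the $\NO$ property. A first attempt in which all $w_A$ share a common clique neighborhood $S\subseteq C$ with $|S|=k$ is $\NO$ and handles the $w_A$-deletion case, but fails irreducibility for $c_j\notin S$ because removing such a $c_j$ does not affect $W_A$. The resolution will be to let the clique neighborhood of $w_A$ in $C$ depend on $A$ (for instance, $w_A$ adjacent to $c_i$ precisely for $i\in A$) and to choose the lists $L(c_i)$ with the right rigidity, so that each individual $c_j$ belongs to the neighborhood of some $w_A$ whose constraint it rigidifies; the two items of Lemma~\ref{lemma:W_ortho} and the extremal structure of Kalai's pairs $(P_A,Q_A)$ should then together yield both $\NO$-ness and irreducibility simultaneously, uniformly in the field $\Fset$.
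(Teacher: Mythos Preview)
Your plan has a genuine gap in the $\NO$ direction. With the final bipartite adjacency you settle on ($w_A$ adjacent to $c_i$ iff $i\in A$), lists $L(w_A)=P_A=\linspan(e_i:i\in A)$, and $L(c_i)=\Fset^d$, the instance is in fact a $\YES$ instance. Assign $u_{c_i}=e_{\sigma(i)}$ where $\sigma$ is the cyclic shift $i\mapsto i+1\pmod d$. Then $W_A=\linspan(e_{\sigma(i)}:i\in A)=P_{\sigma(A)}$, so $W_A^\perp=\linspan(e_j:j\notin\sigma(A))$ and $P_A\cap W_A^\perp=\linspan(e_j:j\in A\setminus\sigma(A))$. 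Since $\sigma$ is a single $d$-cycle and $A$ is a proper nonempty subset of $[d]$, we have $\sigma(A)\neq A$, hence $A\setminus\sigma(A)\neq\emptyset$ and $P_A\cap W_A^\perp$ contains a standard basis vector, which is non-self-orthogonal. Thus every $w_A$ can be served. The ``Kalai-style dimension argument'' you invoke does not exist here: Kalai's theorem bounds the number of cross-intersecting pairs, it says nothing about the impossibility of a single orthogonal basis satisfying all your constraints. Your fallback of ``choosing $L(c_i)$ with the right rigidity'' is not a proof; and the most natural rigidification $L(c_i)=\linspan(e_i)$ kills irreducibility, since then every $w_A$ is individually infeasible.

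The paper takes the opposite route: it keeps $L(c_i)=\Fset^d$, makes $w_A$ adjacent to $c_i$ for $i\in A$ exactly as you do, but assigns the \emph{same} subspace $W=\linspan(e_1,\ldots,e_{\lfloor d/2\rfloor})$ to every $w_A$. The $\NO$ argument is then a one-line basis exchange: extend $e_{\lfloor d/2\rfloor+1},\ldots,e_d$ to a basis of $\Fset^d$ by $\lfloor d/2\rfloor$ of the clique vectors $u_{c_i}$, say those with $i\in S$; the vertex $w_S$ would then need a nonzero vector in $W$ orthogonal to both $\{u_{c_i}:i\in S\}$ and $\{e_{\lfloor d/2\rfloor+1},\ldots,e_d\}$, hence to all of $\Fset^d$. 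Irreducibility is handled by two explicit assignments (shifting the standard basis after deleting a clique vertex; permuting it according to $S$ after deleting $w_S$). No appeal to Kalai's theorem or to Lemma~\ref{lemma:W_ortho} is needed. The key idea you are missing is that the independent-set vertices should all carry the same subspace; the variety comes from the neighborhoods, not from the lists.
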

\begin{proof}
Let $G$ denote the graph that consists of a clique $C$ of $d$ vertices denoted by $c_1, \ldots, c_d$ and an independent set $I$ whose vertices are all the subsets of $[d]$ of size $\lfloor d/2 \rfloor$, where every vertex $S \in I$ is adjacent to the vertices $c_i$ with $i \in S$. Note that $G$ is a split graph and that $|I| = \binom{d}{\lfloor d/2 \rfloor}$, hence the number of vertices in $G$ is $d + \binom{d}{\lfloor d/2 \rfloor}$.
Let $W$ denote the subspace of $\Fset^d$ spanned by the vectors $e_i$ with $1 \leq i \leq \lfloor d/2 \rfloor$, where $e_i$ stands for the vector of $\Fset^d$ with $1$ on the $i$th entry and $0$ everywhere else.
Let $L$ denote the function that maps every vertex of $C$ to the entire vector space $\Fset^d$ and every vertex of $I$ to $W$.
We shall prove that $(G,L)$ is an irreducible $\NO$ instance of $d\dSubChoose_\Fset$.

We first prove that $(G,L)$ is a $\NO$ instance of $d\dSubChoose_\Fset$.
Suppose for contradiction that there exists a $d$-dimensional orthogonal representation of $G$ over $\Fset$ that assigns a vector $u_v \in L(v)$ to each vertex $v \in C \cup I$.
The vectors $u_{c_1}, \ldots, u_{c_d}$ assigned to the vertices of the clique $C$ are non-self-orthogonal and pairwise orthogonal, hence they are linearly independent and thus span the entire vector space $\Fset^d$. It thus follows that the vectors $e_i$ with $\lfloor d/2 \rfloor +1 \leq i \leq d$ can be extended to a basis of $\Fset^d$ by the vectors $u_{c_i}$ with $i \in S$ for some set $S \subseteq [d]$ of size $|S| = \lfloor d/2 \rfloor$. However, the vector $u_S$ assigned by the given orthogonal representation to the vertex $S$ of $I$ must be orthogonal to the vectors $u_{c_i}$ with $i \in S$ of its neighbors as well as to the vectors $e_i$ with $\lfloor d/2 \rfloor +1 \leq i \leq d$ (because $u_S \in W$). It follows that $u_S$ must be orthogonal to every vector of $\Fset^d$, in contradiction to the fact that it is non-self-orthogonal.

We next show that any removal of a vertex $v \in C \cup I$ from $(G,L)$ results in a $\YES$ instance of $d\dSubChoose_\Fset$.
First, if $v \in C$, assign to the remaining vertices of $C$ the vectors $e_2, \ldots, e_d$, and to every vertex of $I$ the vector $e_1$.
This assignment is clearly an orthogonal representation of $G$ that associates each vertex with a vector from its subspace.
Otherwise, if $v \in I$, then let $S \subseteq [d]$ denote the set associated with $v$, assign the vectors $e_1, \ldots, e_{\lfloor d/2 \rfloor}$ to the vertices $c_i$ of $C$ with $i \in S$ and the vectors $e_{\lfloor d/2 \rfloor+1}, \ldots, e_d$ to the remaining vertices of $C$.
Further, for every vertex $S' \in I$ with $S' \neq S$, there exists some $i \in S \setminus S'$, so we assign to $S'$ the vector $u_{c_i}$. By $i \in S \setminus S'$, the vector $u_{c_i}$ lies in $W$ and is orthogonal to the vectors of the neighbors of $S'$ in $C$.
This gives us the desired orthogonal representation and completes the proof.
\end{proof}

\subsubsection{Cochordal Graphs}

We next study the size of $\NO$-certificates for the $\SubChoose_\Fset$ problem on instances that involve cochordal graphs, defined as follows.

\begin{definition}\label{def:cochordal}
A chordal graph is a graph with no induced cycle of length at least $4$, equivalently, every cycle of length at least $4$ in the graph has a chord.
A cochordal graph is the complement of a chordal graph.
Let $\Cochordal$ denote the family of all cochordal graphs, and let $\UCochordal$ denote the family of all graphs whose connected components are members of $\Cochordal$.
\end{definition}
\noindent
Note that the family $\UCochordal$ is hereditary and contains the family $\USplit$.

A fundamental property of chordal graphs is given by the following statement.

\begin{proposition}[{\cite[Chapter~1.2]{GraphClassesBook}}]\label{prop:simplicial}
Every chordal graph has a vertex whose neighbors form a clique.
\end{proposition}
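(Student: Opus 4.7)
The plan is to prove the slightly stronger statement due to Dirac: every chordal graph is either complete or has two non-adjacent simplicial vertices. This implies the proposition in both cases (with the base case of a single vertex handled by the observation that its neighborhood is empty, and hence trivially a clique). I would proceed by induction on the number of vertices $n$.

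The key technical ingredient is the claim that in a chordal graph, every minimal separator $S$ of two non-adjacent vertices $a, b$ induces a clique. To prove this, assume for contradiction that $x, y \in S$ are non-adjacent. By minimality of $S$, removing $x$ (resp.\ $y$) reconnects $a$ and $b$; this forces both $x$ and $y$ to have a neighbor in the component $A$ of $G \setminus S$ containing $a$, and also in the component $B$ containing $b$. Choosing shortest paths from $x$ to $y$ through $A$ and through $B$ and concatenating them yields an induced cycle of length at least $4$: any chord across the two paths is impossible since $A$ and $B$ lie in different components of $G \setminus S$, and any chord inside a single path contradicts its being shortest. This contradicts chordality.

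For the induction step, if $G$ is complete then every vertex is simplicial. Otherwise, pick non-adjacent vertices $a, b$ and a minimal $a$-$b$ separator $S$, which by the previous step is a clique. Let $A$ be the component of $G \setminus S$ containing $a$, and consider the induced subgraph $G[A \cup S]$, which is chordal and has fewer vertices than $G$. Apply the induction hypothesis to $G[A \cup S]$: either it is complete, in which case any vertex of $A$ is simplicial in it, or it has two non-adjacent simplicial vertices, at least one of which must lie in $A$ since the vertices of the clique $S$ are pairwise adjacent. Any vertex $v \in A$ that is simplicial in $G[A \cup S]$ is also simplicial in $G$, because $N_G(v) \subseteq A \cup S$ (as $A$ is a connected component of $G \setminus S$). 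Performing the symmetric argument with $B$ produces a second simplicial vertex in $B$, and the two are non-adjacent in $G$ since they lie in distinct components of $G \setminus S$.

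The main obstacle is the clique property of minimal separators; once that is in hand, the induction essentially runs itself. Care is required in the cycle-construction step to select the paths through $A$ and $B$ as shortest so that no chord can occur within either path, and to use the definition of connected components to rule out chords between the two paths.
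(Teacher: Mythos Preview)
Your argument is correct and is precisely the classical Dirac proof: show that minimal vertex separators in a chordal graph are cliques via the shortest-path cycle construction, then induct on $|V|$ by splitting along such a separator to find simplicial vertices in each side. The details you flag (shortest paths to preclude internal chords, components of $G\setminus S$ to preclude cross chords, and the observation that two non-adjacent simplicial vertices in $G[A\cup S]$ cannot both lie in the clique $S$) are exactly the points that need care, and you have handled them.

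There is nothing to compare against in the paper: Proposition~\ref{prop:simplicial} is stated there without proof and simply attributed to \cite[Chapter~1.2]{GraphClassesBook}. Your write-up thus supplies a complete, self-contained proof where the paper offers none.
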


We need the following claim.
\begin{claim}\label{claim:UCochordal}
For a field $\Fset$ and an integer $d$, let $(G,L)$ be an instance of $d\dSubChoose_\Fset$, let $v$ be a vertex of $G$ whose non-neighbors form an independent set, and let $u \in L(v)$ be a non-self-orthogonal vector.
Let $G'$ be the graph obtained from $G$ by removing the vertex $v$ and all of its non-neighbors $v'$ with $u \in L(v')$.
Let $L'$ be the function that maps any vertex $v'$ of $G'$ to $L(v') \cap u^{\perp}$ if $v'$ is adjacent to $v$ in $G$ and to $L(v')$ otherwise.
Then, $(G,L)$ has a solution that assigns the vector $u$ to the vertex $v$ if and only if $(G',L')$ is a $\YES$ instance of $d\dSubChoose_\Fset$.
\end{claim}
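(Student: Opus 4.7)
The plan is to prove both directions of the biconditional by a straightforward restriction/extension argument, with the structural hypothesis on $v$ entering in exactly one place.

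For the forward direction, I would take a solution $(u_{v'})_{v' \in V(G)}$ of $(G,L)$ with $u_v = u$ and restrict it to the vertex set of $G'$. Non-self-orthogonality and orthogonality along edges inside $G'$ are inherited from the original solution. For any vertex $v'$ adjacent to $v$ in $G$, the edge $\{v,v'\}$ in $G$ forces $\langle u_{v'}, u \rangle = 0$, so $u_{v'} \in L(v') \cap u^{\perp} = L'(v')$; for any surviving non-neighbor $v'$ of $v$ in $G'$, we have $L'(v') = L(v')$ and the condition is trivial. Hence the restriction witnesses $(G',L')$ as a $\YES$ instance.

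For the converse, suppose $(u_{v'})_{v' \in V(G')}$ is a solution of $(G',L')$. I would extend it by assigning the vector $u$ to $v$ itself and also to every removed non-neighbor $v'$ of $v$ (for such a vertex $u \in L(v')$ by the definition of $G'$, and $u$ is non-self-orthogonal by hypothesis). The only nontrivial check is orthogonality along edges of $G$ incident to a newly-assigned vertex. Edges of the form $\{v, v''\}$ are immediate: $v''$ is a neighbor of $v$ in $G$, hence lies in $G'$ with $u_{v''} \in L'(v'') \subseteq u^{\perp}$. For an edge $\{v', v''\}$ in $G$ with $v'$ a removed non-neighbor of $v$, the endpoint $v''$ cannot be $v$ (since $v'$ is a non-neighbor of $v$) and cannot be another non-neighbor of $v$ (by the independent-set hypothesis on the non-neighbors of $v$ in $G$); therefore $v''$ must be a neighbor of $v$, lies in $G'$, and satisfies $u_{v''} \in u^{\perp}$, yielding $\langle u, u_{v''} \rangle = 0$.

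The argument is essentially bookkeeping. The single substantive use of the hypothesis that the non-neighbors of $v$ form an independent set is to rule out the possibility that a removed non-neighbor of $v$ is adjacent in $G$ to another non-neighbor of $v$, which would leave its partner's assigned vector unconstrained relative to $u$. I do not anticipate any genuine technical obstacle beyond careful tracking of which vertices survive in $G'$ and which branch of the definition of $L'$ applies to each.
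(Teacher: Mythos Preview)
Your proposal is correct and follows essentially the same approach as the paper: restrict the solution for the forward direction, and for the converse extend by assigning $u$ to all removed vertices, using the independent-set hypothesis exactly once to ensure that every neighbor of a removed non-neighbor is in fact a neighbor of $v$ and hence lies in $G'$ with its vector in $u^\perp$. The bookkeeping and the single substantive use of the structural hypothesis match the paper's proof.
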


\begin{proof}
Suppose first that $(G,L)$ has a solution that assigns the vector $u$ to the vertex $v$.
We claim that the restriction of this solution to the vertices of $G'$ is a solution for $(G',L')$.
Indeed, $G'$ is a subgraph of $G$, hence an orthogonal representation of $G$ induces an orthogonal representation of $G'$.
We further observe that for every vertex $v'$ of $G'$, the vector assigned to $v'$ lies in the subspace $L'(v')$.
If $v'$ is adjacent to $v$ in $G$, then its vector is orthogonal to the vector $u$ of $v$ and thus lies in $L'(v') = L(v') \cap u^{\perp}$, and otherwise, it certainly lies in $L'(v') = L(v')$.
Therefore, $(G',L')$ is a $\YES$ instance of $d\dSubChoose_\Fset$.

Suppose next that $(G',L')$ is a $\YES$ instance of $d\dSubChoose_\Fset$, and consider an orthogonal representation of $G'$ that assigns to every vertex $v'$ a vector from $L'(v') \subseteq L(v')$. We extend this orthogonal representation to $G$ by assigning the vector $u$ to the vertices of $G$ that are not in $G'$, that is, the vertex $v$ and its non-neighbors $v'$ in $G$ with $u \in L(v')$. By construction, the vector $u$ lies in the subspaces associated by $L$ with these vertices.
It remains to show that the vectors of their neighbors are all orthogonal to $u$.
Firstly, by the definition of $L'$, the vectors of the neighbors of $v$ in $G$ are all orthogonal to $u$.
Secondly, since the non-neighbors of $v$ in $G$ form an independent set, all of their neighbors are adjacent to $v$, hence their vectors are orthogonal to $u$ as well.
This implies that the obtained assignment forms a solution for $(G,L)$ that assigns the vector $u$ to the vertex $v$, so we are done.
\end{proof}

\begin{remark}\label{remark:G'L'}
The instance $(G',L')$ defined in Claim~\ref{claim:UCochordal} satisfies $u \notin L'(v')$ for all vertices $v'$ of $G'$.
\end{remark}

Equipped with Claim~\ref{claim:UCochordal}, we prove the following lemma.

\begin{lemma}\label{lemma:NocochordalFinite}
Let $\Fset$ be a finite field of order $q$, and let $g:\N \rightarrow \N$ be the function defined by $g(d) = (q^d)!$.
Then the family $\UCochordal$ has $g$-size $\NO$-certificates for $\SubChoose_\Fset$.
\end{lemma}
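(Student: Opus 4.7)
My plan is to prove the bound by induction on the number of distinct non-self-orthogonal vectors of $\Fset^d$ that appear in some subspace of the instance. Let $f(t)$ denote the maximum number of vertices required in a minimum NO sub-instance, taken over all NO instances $(G,L)$ of $d\dSubChoose_\Fset$ with $G\in\UCochordal$ in which at most $t$ non-self-orthogonal vectors appear across the subspaces $L(v')$. Since $\Fset^d$ contains at most $q^d-1$ non-self-orthogonal vectors, it suffices to prove $f(t)\leq (t+1)!$ for every $t\geq 0$; substituting $t=q^d-1$ then yields the desired $g(d)=(q^d)!$. The base case $t=0$ is immediate: no subspace admits a non-self-orthogonal vector, so any single vertex already forms a NO sub-instance.

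For the inductive step with $t\geq 1$, I first restrict $(G,L)$ to a single connected component that is itself a NO instance, so I may assume $G$ is cochordal. Since $\overline{G}$ is chordal, Proposition~\ref{prop:simplicial} supplies a simplicial vertex of $\overline{G}$, i.e., a vertex $v$ whose non-neighbors in $G$ form an independent set---precisely the hypothesis of Claim~\ref{claim:UCochordal}. If $L(v)$ contains no non-self-orthogonal vector, then $\{v\}$ is a NO sub-instance. Otherwise, let $u_1,\ldots,u_k$ be the non-self-orthogonal vectors of $L(v)$, with $k\leq t$. For each $i$, Claim~\ref{claim:UCochordal} certifies that the reduced instance $(G_{u_i},L_{u_i})$ is NO, and Remark~\ref{remark:G'L'} guarantees that $u_i$ no longer appears in any subspace of $L_{u_i}$, so at most $t-1$ non-self-orthogonal vectors remain in play. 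Induced subgraphs of cochordal graphs are cochordal, so $G_{u_i}\in\UCochordal$, and the induction hypothesis yields a NO sub-instance $(H_i,M_i)$ of $(G_{u_i},L_{u_i})$ with $|V(H_i)|\leq t!$.

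It remains to glue these pieces together: let $(H^*,L^*)$ be the sub-instance of $(G,L)$ induced by $V^*=\{v\}\cup\bigcup_{i=1}^{k} V(H_i)$ with $L^*=L|_{V^*}$. To show $(H^*,L^*)$ is NO, suppose a solution exists; it must send $v$ to some $u_j$, so applying Claim~\ref{claim:UCochordal} to $(H^*,L^*)$ with $u=u_j$ produces a YES reduced instance $(H^*_{u_j},L^{*,u_j})$. The construction forces $V(H_j)\subseteq V(H^*_{u_j})$ and $M_j(v')\subseteq L^{*,u_j}(v')$ for every $v'\in V(H_j)$, so $(H_j,M_j)$ embeds as a sub-instance of $(H^*_{u_j},L^{*,u_j})$, making $(H_j,M_j)$ a YES instance and contradicting its choice. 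The size bound $|V(H^*)|\leq 1+k\cdot t!\leq 1+t\cdot t!\leq (t+1)!$ (using $1\leq t!$ for $t\geq 1$) closes the induction. The main subtlety is justifying the embedding of $(H_j,M_j)$ into $(H^*_{u_j},L^{*,u_j})$: one must check that the deleted non-neighbors of $v$ and the $u_j^\perp$ intersections depend only on $L$ and adjacencies restricted to $V^*$, which agree with those used to build $(G_{u_j},L_{u_j})$ because $(H^*,L^*)$ is an induced sub-instance of $(G,L)$.
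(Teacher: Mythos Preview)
Your argument is essentially the same as the paper's: both proofs pick a vertex $v$ whose non-neighbors form an independent set (via Proposition~\ref{prop:simplicial}), branch over all non-self-orthogonal vectors $u\in L(v)$ using Claim~\ref{claim:UCochordal} and Remark~\ref{remark:G'L'}, and bound the total size by a factorial recursion in the number of available vectors. The paper phrases this as a marking algorithm with $h(t)\le t!$ for $t\le q^d$, while you phrase it as an explicit induction with $f(t)\le (t+1)!$ for $t\le q^d-1$; the two are equivalent, and your gluing step is in fact spelled out more carefully than the paper's.

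One small slip: in the gluing argument you write $M_j(v')\subseteq L^{*,u_j}(v')$, but the inclusion you actually need in order to conclude that a solution of $(H^*_{u_j},L^{*,u_j})$ restricts to a solution of $(H_j,M_j)$ is the reverse one, $L^{*,u_j}(v')\subseteq M_j(v')$. In fact equality holds (both sides equal $L_{u_j}(v')$, since adjacency to $v$ and membership in $L(v')$ are the same whether computed in $G$ or in the induced subgraph $H^*$), so the argument is fine---just replace $\subseteq$ by $=$.
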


\begin{proof}
For an integer $d$, let $(G,L)$ be a $\NO$ instance of $d\dSubChoose_\Fset$ for a graph $G \in \UCochordal$.
Our goal is to show that $(G,L)$ admits a $\NO$ sub-instance with at most $g(d)$ vertices.
To do so, we present an algorithm that given such an instance $(G,L)$ marks vertices in $G$.
We then show that the marked vertices induce a $\NO$ sub-instance and that their number does not exceed $g(d)$.

The algorithm acts as follows. If there exists a vertex $v$ in $G$ such that all the vectors of $L(v)$ are self-orthogonal, then the algorithm marks it and terminates.
Otherwise, since $(G,L)$ is a $\NO$ instance, it must have a connected component that induces a $\NO$ sub-instance, allowing us to assume without loss of generality that $G$ is connected and is thus cochordal. By Proposition~\ref{prop:simplicial}, there exists a vertex $v$ in $G$ whose non-neighbors form an independent set. The algorithm chooses such a vertex $v$ and marks it. Then, for each non-self-orthogonal vector $u \in L(v)$, the algorithm constructs an instance $(G',L')$, where $G'$ is the graph obtained from $G$ by removing the vertex $v$ and all of its non-neighbors $v'$ with $u \in L(v')$, and $L'$ is the function that maps any vertex $v'$ of $G'$ to $L(v') \cap u^{\perp}$ if $v'$ is adjacent to $v$ in $G$ and to $L(v')$ otherwise.
Then, the algorithm calls itself recursively on the obtained instances.

We argue that if $(G,L)$ is a $\NO$ instance, then the vertices marked by the algorithm while running on $(G,L)$ induce a $\NO$ sub-instance.
We prove it by induction on the number of vertices.
If there exists a vertex $v$ in $G$ such that all the vectors of $L(v)$ are self-orthogonal, then the sub-instance induced by this single vertex is clearly a $\NO$ sub-instance. Since the algorithm marks such a vertex, the marked vertices induce a $\NO$ sub-instance.
This in particular covers the case where the graph $G$ has a single vertex.
Otherwise, assuming that $G$ is connected and using Proposition~\ref{prop:simplicial}, the algorithm chooses a vertex $v$ in $G$ whose non-neighbors form an independent set.
Since $(G,L)$ is a $\NO$ instance, for every non-self-orthogonal vector $u \in L(v)$ there is no solution for $(G,L)$ that assigns the vector $u$ to the vertex $v$.
By Claim~\ref{claim:UCochordal}, this condition is equivalent to whether the smaller instance $(G',L')$ associated with $u$ is a $\NO$ instance.
Since the algorithm iterates over all non-self-orthogonal vectors $u \in L(v)$ and for each of them constructs a suitable instance $(G',L')$ and, by the induction hypothesis, marks vertices that induce a $\NO$ sub-instance of $(G',L')$, the marked vertices form a $\NO$ sub-instance, as required.

We finally show that the number of vertices marked by the algorithm throughout the run on a $\NO$ instance $(G,L)$ does not exceed $g(d)$.
For an integer $t$, let $h(t)$ denote the largest possible number of marked vertices for a $\NO$ instance $(G,L)$ with at most $t$ distinct vectors in all the subspaces of the image of $L$. It clearly holds that $h(1)=1$, because if all the subspaces contain at most one vector, then they are all equal to the zero subspace, and the algorithm marks a single vertex.
If there are at most $t$ vectors in all the subspaces of the image of $L$, then when the algorithm marks a vertex $v$, it iterates over all non-self-orthogonal vectors $u \in L(v)$, whose number is at most $t-1$ (excluding the zero vector), and for each of them calls itself recursively with an instance $(G',L')$ satisfying $u \notin L'(v')$ for all vertices $v'$ of $G'$ (see Remark~\ref{remark:G'L'}). This implies that for every integer $t \geq 2$, it holds that $h(t) \leq 1+ (t-1) \cdot h(t-1)$, and it can be easily checked that $h(t) \leq t!$ for all $t$. Since every instance of $d\dSubChoose_\Fset$ has at most $q^d$ distinct vectors of $\Fset^d$ in its subspaces, it follows that every $\NO$ instance has a $\NO$ sub-instance on at most $h(q^d) \leq (q^d)!$ vertices, and we are done.
\end{proof}
\noindent
Combining Theorem~\ref{thm:KernelGenG} with Lemma~\ref{lemma:NocochordalFinite} yields a polynomial kernel for the $d\dODP_\Fset$ problem on $\UCochordal+k\mathrm{v}$ graphs for every integer $d$ and every finite field $\Fset$.

\section*{Acknowledgments}
We are grateful to the anonymous reviewers for their constructive and helpful feedback.

\bibliographystyle{abbrv}
\bibliography{od_vc}

\begin{thebibliography}{10}

\bibitem{BBJK06}
Z.~Bar{-}Yossef, Y.~Birk, T.~S. Jayram, and T.~Kol.
\newblock Index coding with side information.
\newblock {\em {IEEE} Trans. Inform. Theory}, 57(3):1479--1494, 2011.
\newblock Preliminary version in FOCS'06.

\bibitem{BargZ22}
A.~Barg and G.~Z{\'{e}}mor.
\newblock High-rate storage codes on triangle-free graphs.
\newblock {\em {IEEE} Trans. Inform. Theory}, 68(12):7787--7797, 2022.

\bibitem{BirkKol98}
Y.~Birk and T.~Kol.
\newblock Coding on demand by an informed source ({ISCOD}) for efficient
  broadcast of different supplemental data to caching clients.
\newblock {\em IEEE Trans. Inform. Theory}, 52(6):2825--2830, 2006.
\newblock Preliminary version in INFOCOM'98.

\bibitem{BodlaenderDFH09}
H.~L. Bodlaender, R.~G. Downey, M.~R. Fellows, and D.~Hermelin.
\newblock On problems without polynomial kernels.
\newblock {\em J. Comput. Syst. Sci.}, 75(8):423--434, 2009.
\newblock Preliminary version in ICALP'08.

\bibitem{BodlaenderJK14}
H.~L. Bodlaender, B.~M.~P. Jansen, and S.~Kratsch.
\newblock Kernelization lower bounds by cross-composition.
\newblock {\em {SIAM} J. Discret. Math.}, 28(1):277--305, 2014.
\newblock Preliminary version in STACS'11.

\bibitem{GraphClassesBook}
A.~Brandstädt, V.~B. Le, and J.~P. Spinrad.
\newblock {\em Graph Classes: {A} Survey}.
\newblock Society for Industrial and Applied Mathematics, 1999.

\bibitem{BrietBLPS15}
J.~Bri{\"{e}}t, H.~Buhrman, D.~Leung, T.~Piovesan, and F.~Speelman.
\newblock Round elimination in exact communication complexity.
\newblock In {\em Proc. of the 10th Conference on the Theory of Quantum
  Computation, Communication and Cryptography {(TQC'15)}}, pages 206--225,
  2015.

\bibitem{BrietZ17}
J.~Bri{\"{e}}t and J.~Zuiddam.
\newblock On the orthogonal rank of {C}ayley graphs and impossibility of
  quantum round elimination.
\newblock {\em Quantum Information {\&} Computation}, 17(1{\&}2):106--116,
  2017.

\bibitem{Cai03}
L.~Cai.
\newblock Parameterized complexity of vertex colouring.
\newblock {\em Discret. Appl. Math.}, 127(3):415--429, 2003.

\bibitem{CameronMNSW07}
P.~J. Cameron, A.~Montanaro, M.~W. Newman, S.~Severini, and A.~J. Winter.
\newblock On the quantum chromatic number of a graph.
\newblock {\em Electron. J. Comb.}, 14(1):R81, 2007.

\bibitem{Canny88}
J.~F. Canny.
\newblock Some algebraic and geometric computations in {PSPACE}.
\newblock In {\em Proc. of the 20th Annual {ACM} Symposium on Theory of
  Computing ({STOC}'88)}, pages 460--467, 1988.

\bibitem{ChawinH22}
D.~Chawin and I.~Haviv.
\newblock On the subspace choosability in graphs.
\newblock {\em Electron. J. Comb.}, 29(2):P2.20, 2022.
\newblock Preliminary version in EuroComb'21.

\bibitem{ChawinH23}
D.~Chawin and I.~Haviv.
\newblock Improved {NP}-hardness of approximation for orthogonality dimension
  and minrank.
\newblock {\em {SIAM} J. Discret. Math.}, 37(4):2670--2688, 2023.
\newblock Preliminary version in STACS'23.

\bibitem{CodenottiPR00}
B.~Codenotti, P.~Pudl{\'{a}}k, and G.~Resta.
\newblock Some structural properties of low-rank matrices related to
  computational complexity.
\newblock {\em Theor. Comput. Sci.}, 235(1):89--107, 2000.

\bibitem{CyganFKLMPPS15}
M.~Cygan, F.~V. Fomin, L.~Kowalik, D.~Lokshtanov, D.~Marx, M.~Pilipczuk,
  M.~Pilipczuk, and S.~Saurabh.
\newblock {\em Parameterized Algorithms}.
\newblock Springer, 2015.

\bibitem{deWolfThesis}
R.~de~Wolf.
\newblock {\em {\em Quantum Computing and Communication Complexity}}.
\newblock Ph.{D}. {T}hesis, Universiteit van Amsterdam, 2001.

\bibitem{DellM14}
H.~Dell and D.~van Melkebeek.
\newblock Satisfiability allows no nontrivial sparsification unless the
  polynomial-time hierarchy collapses.
\newblock {\em J. {ACM}}, 61(4):23:1--27, 2014.
\newblock Preliminary version in STOC'10.

\bibitem{SpernerBook97}
K.~Engel.
\newblock {\em Sperner Theory}.
\newblock Number~65 in Encyclopedia of Mathematics and its Applications.
  Cambridge University Press, 1997.

\bibitem{FominJP14}
F.~V. Fomin, B.~M.~P. Jansen, and M.~Pilipczuk.
\newblock Preprocessing subgraph and minor problems: When does a small vertex
  cover help?
\newblock {\em J. Comput. Syst. Sci.}, 80(2):468--495, 2014.
\newblock Preliminary version in IPEC'12.

\bibitem{KernelBook19}
F.~V. Fomin, D.~Lokshtanov, S.~Saurabh, and M.~Zehavi.
\newblock {\em Kernelization: {T}heory of Parameterized Preprocessing}.
\newblock Cambridge University Press, 2019.

\bibitem{GolovnevH20}
A.~Golovnev and I.~Haviv.
\newblock The (generalized) orthogonality dimension of (generalized) {K}neser
  graphs: Bounds and applications.
\newblock {\em Theory of Computing}, 18(22):1--22, 2022.
\newblock Preliminary version in CCC'21.

\bibitem{GolovnevRW17}
A.~Golovnev, O.~Regev, and O.~Weinstein.
\newblock The minrank of random graphs.
\newblock {\em {IEEE} Trans. Inform. Theory}, 64(11):6990--6995, 2018.
\newblock Preliminary version in RANDOM'17.

\bibitem{HavivMFCS19}
I.~Haviv.
\newblock Approximating the orthogonality dimension of graphs and hypergraphs.
\newblock {\em Chic. J. Theor. Comput. Sci.}, 2022(2):1--26, 2022.
\newblock Preliminary version in MFCS'19.

\bibitem{HaynesPSWM10}
G.~Haynes, C.~Park, A.~Schaeffer, J.~Webster, and L.~H. Mitchell.
\newblock Orthogonal vector coloring.
\newblock {\em Electron. J. Comb.}, 17(1):R55, 2010.

\bibitem{JansenThesis}
B.~M.~P. Jansen.
\newblock {\em {\em The Power of Data Reduction: Kernels for Fundamental Graph
  Problems}}.
\newblock Ph.{D}. {T}hesis, Utrecht University, 2013.

\bibitem{JansenK13}
B.~M.~P. Jansen and S.~Kratsch.
\newblock Data reduction for graph coloring problems.
\newblock {\em Inf. Comput.}, 231:70--88, 2013.
\newblock Preliminary version in FCT'11.

\bibitem{JansenP19color}
B.~M.~P. Jansen and A.~Pieterse.
\newblock Optimal data reduction for graph coloring using low-degree
  polynomials.
\newblock {\em Algorithmica}, 81(10):3865--3889, 2019.
\newblock Preliminary version in IPEC'17.

\bibitem{JansenP19sparse}
B.~M.~P. Jansen and A.~Pieterse.
\newblock Optimal sparsification for some binary {CSP}s using low-degree
  polynomials.
\newblock {\em {ACM} Trans. Comput. Theory}, 11(4):28:1--26, 2019.
\newblock Preliminary version in MFCS'16.

\bibitem{Kalai80}
G.~Kalai.
\newblock Analogues for {S}perner and {E}rd\"{o}s-{K}o-{R}ado theorems for
  subspaces of linear spaces.
\newblock In P.~L. Hammer, editor, {\em Combinatorics 79}, volume~9 of {\em
  Annals of Discrete Math.}, page 135. Elsevier, 1980.

\bibitem{Lovasz79}
L.~Lov{\'a}sz.
\newblock On the {S}hannon capacity of a graph.
\newblock {\em IEEE Trans. Inform. Theory}, 25(1):1--7, 1979.

\bibitem{LovaszBook}
L.~Lov{\'{a}}sz.
\newblock {\em Graphs and Geometry}, volume~65.
\newblock Colloquium Publications, 2019.

\bibitem{LovaszSS89}
L.~Lov{\'{a}}sz, M.~Saks, and A.~Schrijver.
\newblock Orthogonal representations and connectivity of graphs.
\newblock {\em Linear Algebra Appl.}, 114--115:439--454, 1989.
\newblock Special Issue Dedicated to Alan J. Hoffman.

\bibitem{Peeters96}
R.~Peeters.
\newblock Orthogonal representations over finite fields and the chromatic
  number of graphs.
\newblock {\em Combinatorica}, 16(3):417--431, 1996.

\bibitem{Riis07}
S.~Riis.
\newblock Information flows, graphs and their guessing numbers.
\newblock {\em Electron. J. Comb.}, 14(1):R44, 2007.
\newblock Preliminary version in NETCOD'06.

\bibitem{Tarski51}
A.~Tarski.
\newblock {\em A Decision Method for Elementary Algebra and Geometry}.
\newblock Number R-109 in RAND Research \& Commentary Reports. RAND, Santa
  Monica, CA, 1951.

\bibitem{Yap83}
C.~Yap.
\newblock Some consequences of non-uniform conditions on uniform classes.
\newblock {\em Theor. Comput. Sci.}, 26(3):287--300, 1983.

\end{thebibliography}

\end{document}